\numberwithin{equation}{section}
\declaretheoremstyle[bodyfont=\it,qed=\qedsymbol]{noproofstyle}
\declaretheorem[numberlike=equation]{observation}
\declaretheorem[numberlike=equation,style=noproofstyle,name=Observation]{observationwp}
\declaretheorem[name=Observation,numbered=no]{observation*}
\declaretheorem[numberlike=equation]{theorem}
\declaretheorem[name=Theorem,numbered=no]{theorem*}
\declaretheorem[numberlike=equation]{lemma}
\declaretheorem[name=Lemma,numbered=no]{lemma*}
\declaretheorem[numberlike=equation,style=noproofstyle,name=Lemma]{lemmawp}
\declaretheorem[numberlike=equation]{corollary}
\declaretheorem[name=Corollary,numbered=no]{corollary*}
\declaretheorem[name=Proposition,numbered=no]{proposition*}
\declaretheorem[name=Claim,numbered=no]{claim*}
\declaretheorem[name=Conjecture,numbered=no]{conjecture*}
\declaretheorem[name=Question,numbered=no]{question*}
\declaretheoremstyle[bodyfont=\it,qed=$\lozenge$]{defstyle} 
\declaretheorem[numberlike=equation,style=defstyle]{definition}
\declaretheorem[unnumbered,name=Definition,style=defstyle]{definition*}
\declaretheorem[unnumbered,name=Example,style=defstyle]{example*}
\declaretheorem[unnumbered,name=Notation=defstyle]{notation*}
\declaretheorem[unnumbered,name=Construction,style=defstyle]{construction*}
\declaretheorem[unnumbered,name=Remark,style=defstyle]{remark*}
\renewcommand{\phi}{\varphi}
\renewcommand{\epsilon}{\varepsilon}
\newcommand{\size}{\operatorname{size}}
\newcommand{\SP}{\Sigma\Pi}
\newcommand{\SPsize}[1]{(\SP)^k\operatorname{-size}}
\newcommand{\LowDeg}{\operatorname{Deg}}
\newcommand{\shortECCC}[2]{\texttt{\href{http://eccc.hpi-web.de/report/\ifnumcomp{#1}{>}{93}{19}{20}#1/#2/}{eccc:TR#1-#2}}}
\newcommand{\parseECCC}[1]{% Takes a string of the form TRxx/xxx or
%                          % TRxx-xxx and returns short ECCC link
\StrSubstitute{#1}{TR}{}[\tmpstring]%
\IfSubStr{\tmpstring}{/}{ %assuming string is of the form TRxx/xxx
\StrBefore{\tmpstring}{/}[\ecccyear]%
\StrBehind{\tmpstring}{/}[\ecccreport]%
}{% assuming string is of the form TRxx-xxx
\StrBefore{\tmpstring}{-}[\ecccyear]%
\StrBehind{\tmpstring}{-}[\ecccreport]%
}%
\shortECCC{\ecccyear}{\ecccreport}}
\newcommand{\homog}{\operatorname{Hom}}
\newcommand*\samethanks[1][\value{footnote}]{\footnotemark[#1]}
\newif\ifblind
\newif\ifdraft
\newcommand{\RPnote}[1]{\textcolor{BrickRed}{\guillemotleft RP: #1 \guillemotright}}
\newcommand{\MKnote}[1]{\textcolor{Orange}{\guillemotleft MK: #1 \guillemotright}}
\newcommand{\VRnote}[1]{\textcolor{Blue}{\guillemotleft VR: #1 \guillemotright}}
\newcommand{\gitinfonotecolour}{Gray}
\newcommand{\easteregg}{}
\newcommand{\RPnote}[1]{}
\newcommand{\MKnote}[1]{}
\newcommand{\VRnote}[1]{}
\newcommand{\gitinfonotecolour}{white}
\newcommand{\easteregg}{}
\newcommand{\ignore}[1]{}
\newcommand{\gitinfonote}{git info:~\gitAbbrevHash\;,\;(\gitAuthorIsoDate)\; \;\gitVtag}
\newcommand{\Res}[3]{\ensuremath{\operatorname{Res}_{#1}(#2,#3)}} 
\newcommand{\uglymodel}[2]{\ensuremath{\Sigma \inparen{(\SP)^{(#1)} \cdot (\LowDeg_{#2})^\ast}}}
\renewcommand{\vec}[1]{\ensuremath{\bm{#1}}}
\newcommand{\Ideal}{\ensuremath{\mathcal{I}}}
\newcommand{\Ring}{\ensuremath{\mathcal{R}}}
\newcommand{\bit}{\ensuremath{\operatorname{bit}}} %for bit-complexity
\title{Deterministic Algorithms\\for Low Degree Factors\\of Constant Depth Circuits}
\author{
    {Mrinal Kumar \thanks{Tata Institute of Fundamental Research, Mumbai, India. Email: \texttt{\{mrinal, varun.ramanathan, ramprasad\}@tifr.res.in}.  Research supported by the Department of Atomic Energy, Government of India, under project 12-R{\&}D-TFR-5.01-0500.}}
    \and
    {Varun Ramanathan{\samethanks[1]}}
    \and
    {Ramprasad Saptharishi{\samethanks[1]}}
}
\date{}
\begin{document}

\maketitle

\begin{abstract}
For every constant $d$, we design a subexponential time deterministic algorithm that takes as input a multivariate polynomial $f$ given as a constant depth algebraic circuit over the field of rational numbers, and outputs \textit{all} irreducible factors of $f$ of degree at most $d$ together with their respective multiplicities.  Moreover, if $f$ is a sparse polynomial, then the algorithm runs in quasipolynomial time.

Our results are based on a more fine-grained connection between polynomial identity testing (PIT) and polynomial factorization in the context of constant degree factors and rely on a clean connection between divisibility testing of polynomials and PIT due to Forbes \cite{Forbes15} and on subexponential time deterministic PIT algorithms for constant depth algebraic circuits from the recent work of Limaye, Srinivasan and Tavenas \cite{LST21}.
\end{abstract}

\section{Introduction}
A long line of research (cf. \cite{G83, K85a, Kaltofen92, Kaltofen03}) on the question of designing efficient algorithms for multivariate polynomial factorization concluded with the influential works of Kaltofen \cite{K89} and Kaltofen \& Trager \cite{KT88} which gave efficient  randomized algorithms for this problem in the whitebox and blackbox settings respectively.\footnote{Throughout this paper, we use \textit{efficient} to mean an algorithm whose time complexity is polynomially bounded in the size, bit-complexity and the degree of the input algebraic circuit.} These results and the technical insights discovered in the course of their proofs have since found numerous direct and indirect applications in various areas of complexity theory. This includes applications to the construction of pseudorandom generators for low degree polynomials \cite{bogdanov05},  algebraic algorithms \cite{KY08}, hardness-randomness tradeoffs in algebraic complexity \cite{DSY09, GKSS19}, algebraic property testing \cite{PS94, AS03, BCIKS20}, error correcting codes \cite{BHKS20}, deterministic polynomial identity tests for constant depth circuits \cite{LST21, ChouKS18} among others. 

Given the fundamental nature of the problem and its many applications, the question of designing efficient deterministic algorithms for multivariate polynomial factorization is of great interest and importance. Shpilka \& Volkovich \cite{SV10} observed that this question is at least as hard as PIT in the sense that a deterministic factoring algorithm (in fact, an algorithm to check irreducibility suffices for this) for polynomials given by algebraic circuits implies a deterministic algorithm for PIT for algebraic circuits, a long standing open problem in computer science. In a later work, Kopparty, Saraf \& Shpilka \cite{KSS15} showed a connection in the other direction as well. They showed that an efficient deterministic algorithm for PIT for algebraic circuits implies an efficient deterministic algorithm for polynomial factorization for algebraic circuits. Thus, the questions are essentially equivalent to each other. 

An intriguing aspect of the aforementioned equivalence is that while deterministic algorithms for factoring any rich enough class of circuits (for instance, constant depth circuits) lead to deterministic PIT for the same class (see Observation 1 in \cite{SV10} for a precise statement), the connection in the other direction due to Kopparty, Saraf \& Shpilka \cite{KSS15} does not appear to be so fine-grained. In particular,  even if we only wish to factor an otherwise simple class of polynomials, e.g. sparse polynomials (polynomials with a small number of non-zero monomials), the PIT required as per the proof in \cite{KSS15} seems to be for significantly more powerful models of algebraic computation like algebraic branching programs. 

As a consequence, while there has been steady progress on the state of the art of deterministic PIT algorithms in  recent years for various interesting sub-classes of algebraic circuits like sparse polynomials \cite{KS01}, depth-$3$ circuits with constant top fan-in \cite{SS09, SS10, KS09a}, read-once algebraic branching programs \cite{FS13, FSS14, F14, GKST15, GKS16} and constant depth circuits \cite{LST21}, this progress hasn't translated to progress on the question of deterministic factoring algorithms for these circuit classes. In particular, deterministic factorization algorithms have remained elusive even for seemingly simple classes of polynomials like sparse polynomials where the corresponding PIT problem is very well understood. There are only a handful of results that make progress towards this and related problems to the best of our knowledge. Shpilka \& Volkovich \cite{SV10} showed a close connection between the problems of polynomial identity testing and that of decomposing a polynomial given by a circuit into variable disjoint factors and build on these ideas to give an efficient deterministic algorithm for factoring sparse multilinear polynomials. In subsequent works, Volkovich \cite{Volkovich15, Volkovich17} gave an efficient deterministic algorithm to factor sparse polynomials that split into multilinear factors and  sparse polynomials with individual degree at most $2$. More recently, a work of Bhargava, Saraf and Volkovich \cite{BSV18}  gives a quasipolynomial time deterministic algorithm for factoring sparse polynomials with small individual degree based on some beautiful geometric insights.  

In general, when the individual degree of a sparse polynomial is not small, no non-trivial deterministic factoring algorithms appear to be known, even when we have the flexibility of describing the output as algebraic circuits. As Forbes \& Shpilka note in their recent survey \cite{Forbes-Shpilka-survey} on polynomial factorization, we do not even have structural guarantees on the complexity of factors of sparse polynomials even for seemingly coarse measures of complexity like formula complexity. In fact, questions that might be potentially easier than factorization like checking if a given sparse polynomial is a product of constant degree polynomials or checking if a given sparse polynomial is irreducible are not known to have non-trivial deterministic algorithms. Perhaps a little surprisingly, till a recent work of Forbes \cite{Forbes15}, we did not even have a non-trivial deterministic algorithm for checking if a given sparse polynomial is divisible by a given constant degree polynomial! Forbes gave a quasipolynomial time deterministic algorithm for this problem by reducing this question to a very structured instance of PIT for depth-$4$ algebraic circuits and then giving a quasipolynomial time deterministic algorithm for these resulting PIT instances. 

This work is motivated by some of these problems, most notably by the question of designing efficient deterministic algorithms for factoring sparse polynomials. While we do not manage to solve this problem in this generality, we make modest progress towards this: we design a deterministic quasipolynomial time  algorithm that outputs all the low degree factors of a sparse polynomial. More generally,  we show that constant degree factors of a polynomial given by a constant depth circuit can be computed deterministically in subexponential time.

\subsection{Our Results} 

\begin{theorem}[Low degree factors of constant depth circuits]\label{thm: main low depth ckts}
  Let $\Q$ be the field of rational numbers and $\epsilon > 0$, $d, k \in \N$ be arbitrary constants. 

Then, there is a deterministic algorithm  that takes as input an algebraic circuit $C$ of size $s$, bit-complexity $t$, degree $D$ and depth $k$ and outputs all the irreducible factors of $C$ of degree at most $d$, along with their respective multiplicities in time $(sDt)^{O((sDt)^{\epsilon})}$. 

\end{theorem}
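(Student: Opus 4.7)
The plan is to combine Forbes's reduction from divisibility testing to PIT \cite{Forbes15} with the LST21 subexponential deterministic PIT for constant-depth circuits \cite{LST21} into a deterministic divisibility test, and then use this test as the primitive that drives an enumeration of low-degree factors of $f$. Applied to our setting, Forbes's reduction shows that for the input circuit $C$ (size $s$, degree $D$, bit-complexity $t$, depth $k$) and a polynomial $g$ of degree at most $d$ specified explicitly by its rational coefficients, deciding ``$g \mid f$?'' is equivalent to an instance of PIT for a circuit of depth $k + O(1)$ and size $(sDt)^{O(d)}$. Feeding this into \cite{LST21} (which runs in time $N^{O(N^{\epsilon})}$ on size-$N$ constant-depth inputs) yields a deterministic divisibility test running in time $(sDt)^{O((sDt)^{\epsilon})}$, after absorbing the polynomial overhead into a slight adjustment of $\epsilon$.

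To actually \emph{find} the degree-$\leq d$ factors, I would parameterize the candidate-factor space. The degree-$\leq d$ polynomials in $n$ variables form a vector space $V_d$ of dimension $M = \binom{n+d}{d} = n^{O(d)}$, polynomial in the input parameters. Treating the $M$ coefficients as formal variables $\vec{u}$ and setting $g(\vec{x}, \vec{u}) = \sum_{|\alpha| \leq d} u_\alpha \vec{x}^\alpha$, applying Forbes's reduction symbolically turns ``$g(\cdot, \vec{u}) \mid f$'' into the identical vanishing in $\vec{x}$ of an auxiliary polynomial whose $\vec{x}$-coefficients are polynomials $\{Q_\beta(\vec{u})\}$. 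The joint zero set $\mathcal{V}$ of the $Q_\beta$'s is the variety of degree-$\leq d$ factors of $f$, and its irreducible components (there are at most $D$ of interest, since $f$ has at most $D$ irreducible factors) correspond, up to scalars, to the distinct irreducible degree-$\leq d$ factors of $f$.

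The remaining task is to enumerate the irreducible components of $\mathcal{V}$ in subexponential time. I would combine a Kronecker-style structured substitution with an LST21-type hitting-set projection to compress $\vec{u}$ to a small-dimensional parameter space, solve the resulting low-dimensional polynomial system with classical deterministic elimination (Gr\"obner bases, resultants, or LLL-based univariate factoring over $\Q$), and use linear algebra to lift each solution back to a candidate $g^{*} \in V_d$ (this is possible because $V_d$ has only $M$ unknown coefficients). Each candidate is then verified by the divisibility test of the first step, and multiplicities are read off by iteratively testing $g^{*2}, g^{*3}, \ldots$ until failure.

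The main obstacle is designing the projection on $\vec{u}$: it must preserve the component structure of $\mathcal{V}$ so that every irreducible factor survives distinguishably, while keeping the projected system deterministically solvable within the desired time budget. Exploiting the constant-depth origin of $f$ (which controls the complexity of the defining $Q_\beta$'s) in tandem with the low-degree structure of $V_d$ (which controls the ambient search space) is where the technical effort concentrates, and balancing these overheads against the $N^{O(N^{\epsilon})}$ PIT cost from \cite{LST21} yields the final $(sDt)^{O((sDt)^{\epsilon})}$ bound.
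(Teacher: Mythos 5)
Your first step---composing Forbes's divisibility-to-PIT reduction with the LST21 hitting set to get a deterministic subexponential divisibility test for an \emph{explicitly given} degree-$\le d$ candidate $g$---is correct and is exactly how the paper verifies and prunes candidates. The gap is in how you propose to \emph{generate} the candidates. You reduce candidate generation to finding the irreducible components of the variety $\mathcal{V}\subseteq \Q^{M}$, $M=n^{O(d)}$, cut out by the coefficients $Q_\beta(\vec{u})$ of the symbolic divisibility identity, and then say this will be done by a ``Kronecker-style substitution with an LST21-type hitting-set projection'' followed by Gr\"obner bases or resultants. No known deterministic procedure enumerates the components (or even the points, after normalizing the constant term) of such a system within the budget: elimination-theoretic methods are exponential (or worse) in the number of variables, which here is $n^{O(d)}$, far exceeding $(sDt)^{O((sDt)^{\epsilon})}$ for small $\epsilon$; and the projection you invoke would have to be injective on the solution set while keeping the projected system deterministically solvable---a construction you do not supply and which is not an off-the-shelf consequence of \cite{LST21} (whose hitting sets certify non-vanishing of a single circuit, not separation of solution points). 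You flag this yourself as ``where the technical effort concentrates,'' but it is not a technical detail: it is the entire problem, and as stated the argument does not go through.

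The paper avoids this system-solving step altogether. It fixes a target factor $g$, makes $g$ (not $f$) monic in a new variable $y$ via a low-Hamming-weight shift $\vecalpha$ and ensures $g(\vecbeta)\neq 0$, both using the hitting set $\mathcal{H}(d,n)$ for degree-$d$ polynomials; it then finds a shift $\vecdelta$ that keeps $\Res{y}{g'}{\tilde h}$ nonzero, where $\tilde h$ is the \emph{pseudo-quotient} of $f$ by $g$---the point being that this resultant lies in $\uglymodel{k}{d}$ with size $(sD)^{O(d)}$, so LST21 (or Forbes, in the sparse case) hits it. After these shifts, the candidate $g_0$ is read off from the LLL factorization of the univariate $f(\vecalpha y+\vecbeta+\vecdelta)$ by trying all $D^{O(d)}$ subsets of its irreducible factors, and the full factor is recovered by monic Hensel lifting, whose uniqueness replaces any linear-system or variety-decomposition step. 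Multiplicities are handled by passing to $y$-derivatives of $f$ rather than by testing $g^{*2},g^{*3},\dots$ directly. If you want to salvage your outline, you would need to either supply the missing deterministic component-enumeration procedure or replace that step with a lifting-based generation of candidates as above.
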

We note that the bit-complexity of an algebraic circuit/formula is a measure of the bit-complexities of the rational numbers appearing in the circuit. See \autoref{defn:circuit-size-bit-complexity} for a formal definition. 

When the input polynomial is sparse, i.e. has a small depth-$2$ circuit, then the time complexity of the algorithm in \cref{thm: main low depth ckts} can be improved to be quasipolynomially bounded in the input size. This gives us the following theorem. 
\begin{theorem}[Low degree factors of sparse polynomials]\label{thm: main sparse}
Let $d \in \N$ be an arbitrary constant. 

Then, there is a deterministic algorithm that takes as input a polynomial $f \in \Q[\vecx]$ of sparsity $s$, bit-complexity $t$, degree $D$, and outputs all the irreducible factors of $f$ of degree at most $d$, along with their respective multiplicities in time $(sDt)^{\poly(\log sDt)}$.  

\end{theorem}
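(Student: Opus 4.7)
The plan is to follow essentially the same pipeline as the proof of \cref{thm: main low depth ckts}: reduce the task of listing all degree-$\le d$ irreducible factors of $f$ to (i) enumerating a polynomially bounded list of candidate divisors via projection and lifting, and (ii) verifying each candidate using a divisibility test that in turn reduces to PIT by Forbes's divisibility-to-PIT connection \cite{Forbes15}. The improvement from subexponential to quasipolynomial time comes from the observation that when $f$ is given sparsely, the PIT instances arising in the verification step are themselves sparse-polynomial PIT instances, for which quasipolynomial-time deterministic hitting set generators (e.g.\ the Klivans--Spielman generator) are available, so one never has to invoke the \cite{LST21} generator.

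First, I would perform the candidate-enumeration step. Since $d$ is constant, the linear space of polynomials of degree at most $d$ in $n$ variables has dimension $O(n^d) = \poly(n)$. Following a Kaltofen/KSS-style strategy, I would use a (derandomized) projection to reduce $f$ to a polynomial in a small number of variables while preserving the structure of its low-degree factors, factor this projection using known deterministic algorithms (which is efficient because both the arity and the degree $d$ of factors of interest are small), and Hensel-lift each resulting factor to an $n$-variate candidate of degree at most $d$. This should produce a list of $\poly(n, D, s)$ candidates, each with coefficients of bit-complexity $\poly(t, D, s)$.

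Next, for each candidate $g$, I would invoke Forbes's reduction \cite{Forbes15} to convert the divisibility test ``$g \mid f$'' into a PIT instance whose input is a small polynomial combination of $f$ (sparse, with $s$ monomials) and $g$ (also sparse, having $O(n^d) = \poly(n)$ monomials). Since both are sparse and $d$ is constant, the resulting instance is itself a sparse-polynomial PIT instance (up to a $\poly(s, n^d)$ blow-up), and can be solved in time $(sDt)^{\poly(\log sDt)}$ by a standard sparse hitting set generator. Multiplicities of each irreducible factor $g$ are determined by iterating this test on $g, g^2, \ldots, g^D$, each of which again yields a sparse-PIT instance of comparable size, contributing only a $\poly(D)$ overhead.

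The main obstacle I anticipate is ensuring that all intermediate objects---the Hensel-lifted candidate factors, the powers $g^k$, and the auxiliary polynomials produced by the Forbes reduction---stay sparse (with $\poly(s, n^d)$ many monomials) and have polynomially bounded bit-complexity, so that the quasipolynomial sparse-PIT generator remains applicable throughout the pipeline. Granted this structural control, combining the $\poly(n, D, s)$-sized candidate list with the $(sDt)^{\poly(\log sDt)}$-time verification per candidate yields the claimed $(sDt)^{\poly(\log sDt)}$ overall running time.
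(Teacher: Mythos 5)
There is a genuine gap, and it sits exactly where the paper's technical work is concentrated. You assert that the PIT instances arising from Forbes's divisibility-to-PIT reduction are ``themselves sparse-polynomial PIT instances (up to a $\poly(s,n^d)$ blow-up)'' and can therefore be handled by a standard sparse hitting set such as Klivans--Spielman \cite{KS01}. This is false. The reduction of \cref{thm: divisibility testing to pit} tests whether $R = f - g\cdot Q$ vanishes, where $Q$ is the \emph{pseudo-quotient} $\homog_{\leq D - d}\bigl(f\cdot(1+\tilde g+\cdots+\tilde g^{\,D-d})\bigr)$; the powers $\tilde g^{\,e}$ of a degree-$d$ polynomial for $e$ up to $D-d$ have no polynomial sparsity bound, so $R$ is not sparse. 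It lies in the class $\uglymodel{1}{d}$ of sums of (sparse)$\times$(power of a degree-$d$ polynomial), and the entire point of \cite{Forbes15} (\cref{thm: forbes quasipoly pit for resultant}) is to build a quasipolynomial hitting set for precisely this non-sparse class. Moreover, you have omitted the step where this PIT is actually indispensable: before Hensel lifting one must find a shift $\vecdelta$ under which the chosen univariate factor $g_0$ and its cofactor remain coprime, i.e.\ one must hit the nonzero polynomial $\Res{y}{g}{\tilde h}$ where $\tilde h$ is the pseudo-quotient. Showing that this resultant is computable in $\uglymodel{1}{d}$ of size $(sD)^{O(d)}$ (\cref{lem: resultant computable by ugly-model}) and hitting it with Forbes's generator is the heart of the argument; your ``derandomized projection'' sketch does not supply a substitute, and without it the candidate-enumeration step is not actually derandomized.

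A second, smaller gap: determining the multiplicity of $g$ by testing $g, g^2, \ldots, g^D$ for divisibility does not give a $\poly(D)$ overhead, because $g^k$ has degree $kd$, which is not constant, and Forbes's divisibility test (\cref{thm: det algo for low degree dividing sparse}) runs in time exponential in the degree of the divisor. The paper avoids this by reducing multiplicity instead of raising the degree of the divisor: $g^a\mid f$ and $g^{a+1}\nmid f$ iff $a$ is the least index with $g\nmid \partial^a f/\partial x^a$ (\cref{lem:factor-multiplicity-reduction}), and partial derivatives of a sparse polynomial stay sparse, so every divisibility test involves only a degree-$d$ divisor. You would need to adopt this (or an equivalent) device for the multiplicity claim to go through.
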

These results immediately yield an algorithm (with comparable time complexity) to check if the polynomial computed by a given low depth circuit is a product of polynomials of degree at most $d$. More concretely, we have the following corollary that follows by comparison of degrees of the input polynomial and the low-degree factors (with multiplicities) listed by the algorithms in the above theorems. 
\begin{corollary}\label{cor: product of low degree factors}
Let $\Q$ be the field of rational numbers and $\epsilon > 0$, $d, k \in \N$ be arbitrary constants. 

Then, there is a deterministic algorithm that takes as input an algebraic circuit $C$ of size $s$  , bit-complexity $t$, degree $D$ and depth $k$ and decides if $C$ is a product of irreducibles of degree at most $d$ in time $(sDt)^{O((sDt)^{\epsilon})}$. 

Moreover, when $f$ is a sparse polynomial with sparsity $s$, then the algorithm runs in $(sDt)^{\poly(\log sDt)}$ time. 
\end{corollary}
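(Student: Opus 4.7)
The plan is to reduce the decision problem to the listing problem already solved in Theorem thm: main low depth ckts (respectively Theorem thm: main sparse), followed by a single integer comparison. First, invoke the factorization algorithm on the input circuit $C$ with the given constant degree bound $d$. This yields the complete list of irreducible factors $p_1,\dots,p_m$ of $C$ having degree at most $d$, together with their multiplicities $e_1,\dots,e_m$, within the running time guaranteed by the corresponding theorem.

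Next, compute the integer $S \;=\; \sum_{i=1}^{m} e_i \cdot \deg(p_i)$, which accounts for the total degree contributed by the listed low-degree irreducible factors (with multiplicity). The key observation is that, since $\Q[\vecx]$ is a unique factorization domain, $C$ is a product of irreducibles of degree at most $d$ if and only if every irreducible factor of $C$ has degree at most $d$; equivalently, if and only if $S$ equals the degree of the polynomial computed by $C$. The algorithm therefore outputs YES iff $S = \deg(C)$. The degree of $C$ is supplied as part of the input specification (when $D$ is the true degree), and in any case it can be recovered with only polynomial overhead by appealing to the same deterministic PIT machinery for constant depth circuits from \cite{LST21} that drives the factorization theorem. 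Consequently the overall running time is dominated by the call to the factorization algorithm, giving $(sDt)^{O((sDt)^{\epsilon})}$ in the constant depth case and $(sDt)^{\poly(\log sDt)}$ in the sparse case.

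There is no genuine obstacle here, since the corollary is essentially an immediate consequence of the listing theorems combined with the uniqueness of factorization. The only minor point worth handling is the degenerate case in which $C$ computes the zero polynomial, which can be detected upfront via a PIT call and answered separately according to the chosen convention for the zero polynomial. With that edge case dispatched, the reduction is a clean black-box use of Theorem thm: main low depth ckts and Theorem thm: main sparse.
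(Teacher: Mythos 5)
Your proposal is correct and matches the paper's argument: the corollary is proved exactly by running the listing algorithm of \cref{thm: main low depth ckts} (resp.\ \cref{thm: main sparse}) and comparing the degree of the input polynomial with the total degree $\sum_i e_i \deg(p_i)$ of the listed low-degree factors counted with multiplicity. Your additional remarks about recovering $\deg(C)$ and handling the zero polynomial are sensible housekeeping that the paper leaves implicit.
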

Note that in the constant depth regime, circuits and formulas are equivalent upto a polynomial blow-up in size. Thus we will use the terms circuits and formulas interchangeably without any loss in our final bounds, and most of our presentation will be for formulas.

\subsection*{Field dependence of our results}
We end this section with a remark about the field dependence of our results. 
The field dependence in our results stems from two reasons. We need an efficient deterministic algorithm for factorization of univariate polynomials over the underlying field $\F$. In addition to this, our proofs also need non-trivial deterministic algorithms for polynomial identity testing (PIT) for constant depth circuits (or very special depth-$4$ circuits for \cref{thm: main sparse}) over the underlying field. 

The field of rational numbers satisfies both these requirements: a classical algorithm of Lenstra, Lenstra and {\Lovasz }\cite{LLL82} solves the problem of deterministic univariate factorization efficiently over $\Q$ and a recent work of Limaye, Srinivasan and Tavenas \cite{LST21}  gives a subexponential time deterministic algorithm for PIT for constant depth circuits over $\Q$. For \cref{thm: main sparse}, the relevant PIT is for special depth-$4$ circuits and was given in a work of Forbes \cite{Forbes15}. In fact, Forbes' result holds even over finite fields.

We restrict our attention to just the field of rational numbers in the presentation although our results work over any large characteristic field that supports the above requirements. 

\subsection{Proof Overview}\label{sec: proof overview}

We now give an overview of some of the main ideas in our proofs. In a nutshell, our proofs are based on relatively simple structural observations on top of the existing factoring algorithms. The key is to understand  the structure of circuits for which we need a PIT algorithm at every step a little better, and when looking for low degree factors, we observe that these PIT instances are relatively simple and their circuit complexity is comparable to the circuit complexity of the input polynomials themselves. We also crucially use the divisibility testing idea of Forbes \cite{Forbes15} in our algorithm at two stages; this helps us handle factors of large multiplicities and also lets us obtain true factors from the output of Hensel Lifting step of the factorization algorithms. This idea again helps in reducing the complexity of the PIT instance we face in these steps, and in particular, we completely avoid the linear systems solving step in a typical factorization algorithm that naively (e.g. see \cite{KSS15}) seems to require PIT for algebraic branching programs. Once the PIT instances are shown to be relatively simple, we invoke the PIT algorithms of Forbes \cite{Forbes15} and Limaye, Srinivasan \& Tavenas \cite{LST21} to solve these deterministically.

\paragraph{Typical steps in a polynomial factorisation algorithm:} Most factorisation algorithms (and ours, modulo minor deviations) follow this template:
\begin{enumerate}\itemsep 0pt
\item \textbf{Making $f$ monic:} Apply a suitable transformation of the form $x_i \mapsto x_i + \alpha_i y$ to ensure that $f$ is monic in $y$. We may now assume that $f \in \Q[\vecx,y]$. 
\item \textbf{Preparing for Hensel lift:} Ensure that $f(\vecx,y)$ is square-free, and further that $f(\veczero,y)$ is also square-free. 
\item \textbf{Univariate factorisation:} Factorise the univariate polynomial $f(\veczero,y)$ as a product $g_0(y)\cdot h_0(y)$ where $\gcd(g_0,h_0) = 1$. This can be intepreted as a factorisation $f(\vecx,y) = g_0 \cdot h_0 \bmod{\mathcal{I}}$ where $\mathcal{I} = \inangle{\vecx}$. 
\item \textbf{Hensel lifting:} Compute an iterated lift to obtain $f = g_\ell \cdot h_\ell \bmod{\mathcal{I}^{2^\ell}}$ for a suitably large $\ell$. 
\item \textbf{Reconstruction:} From $g_\ell$, obtain an honest-to-god factor $g$ of $f$ (unless $f$ is irreducible). 
\end{enumerate}

The first two steps typically involve the use of randomness for suitable polynomial identity tests. In the first step, we would like $\vecalpha$ to be a point that keeps the highest degree homogeneous component of $f$ non-zero, and the second step is handled by translating $f$ by a point $\vecdelta$ that keeps the ``discriminant'' of $f$ non-zero. The Hensel lift is a deterministic subroutine that eventually yields small circuits for the lifted factors and the reconstruction step typically involves solving a linear system. It is mostly due to the ``discriminant'' that we do not have efficient deterministic factorisation algorithm even for constant-depth circuits as the best upper bound for the discriminant we have is an algebraic branching program and we do not have efficient hitting sets for them. \textcolor{white}{(Yet!)}

\medskip

For our case, it is instructive to focus on a specific factor $g$ of $f$ and understand what would be required to make the above template yield this factor. The first observation is that the base case of Hensel Lifting does not require $f$ to be square-free but rather that the factor $g$ we intend to reconstruct satisfies  $g | f$ and $g^2 \nmid f$. For now, let us assume this and also that $f$ (and hence $g$ and $h = f/g$ also) is monic in $y$. We have that $\gcd(g,h) = 1$ but for the Hensel lift, we also need to find a $\vecdelta$ that ensures that $\gcd(g_0,h_0) = 1$ where $g_0 = g(\vecdelta,y)$ and $h_0 = h(\vecdelta,y)$. The set of ``good'' $\vecdelta$'s is precisely the points that do not make the resultant $\Res{y}{g}{h}$ zero and thus we want to understand the circuit complexity of this resultant. 

The resultant $\Res{y}{g}{h}$ is the determinant of a matrix of dimension $\deg_y(g) + \deg_y(h)$ and its entries are coefficients of $g, h$ when viewed as univariates in $y$. However, we are only given that $f$ is computable by a constant-depth formula and we do not have any good bound on the complexity of $h$. We circumvent this by working with a \emph{pseudo-quotient} (introduced by Forbes~\cite{Forbes15} in the context of divisibility testing) $\tilde{h}$ of $f$ and $g$; we work with $\Res{y}{g}{\tilde{h}}$ and show that it is also computable by constant-depth circuits of not-too-large size. Fortunately, the result of Limaye, Srinivasan and Tavenas~\cite{LST21} yields sub-exponential sized hitting sets for constant depth formulas and that enables us to avoid the use of randomness to prepare for the Hensel Lifting step. 

We can then factorise the univariate polynomial $f(\vecdelta,y)$ and attempt all possible factors $g_0$ of degree at most $d$ to begin the lifting process from $g_0 \cdot h_0$ (where $h_0 = f(\vecdelta,y)/g_0$). After an appropriately large lift, we have small circuits (of possibly unbounded depth) computing $g_\ell$ and $h_\ell$ such that $\tilde{f} = f(\vecx + \vecdelta, y) = g_\ell \cdot h_\ell \bmod{\mathcal{I}^{2^\ell}}$. If $g_\ell$ is guaranteed to be monic, and the initial choice of $g_0$ was indeed $g(\vecdelta,y)$, the uniqueness of Hensel lifting would ensure that $g_\ell$ is indeed equal to $g$ (after truncating higher order terms). We can then use standard interpolation to obtain $g_\ell$ explicitly written as a sum of monomials. Finally, to ensure that $g_\ell$ is indeed a legitimate factor of $\tilde{f}$, we perform divisibility testing to check if $g_\ell \mid \tilde{f}$. 

\paragraph{Handling factors of large multiplicity:} The above overview is all we need to obtain any factor $g$ of degree $O(1)$ that divides $f$ with $g^2 \nmid f$. In order to handle factors with higher ``factor-multiplicity'', we use a simple observation that $g^{a-1} \mid f$ but $g^a \nmid f$ if and only if $g$ divides $f, \partial_y f, \ldots, \partial_{y^{a-1}}f$ but not $\partial_{y^{a}}f$. We run our algorithm for each of the partial derivatives to collect the list of candidate factors, and eventually prune them via appropriate divisibility tests. 

\paragraph{The specific case of $\SP$-formulas (or sparse polynomials):} The above sketch yields a sub-exponential time algorithm for obtaining $O(1)$-degree factors of constant depth formulas. However, with some additional care, we obtain a quasipolynomial time algorithm in the case when $f$ is a sparse polynomial. The key observation for this is that we do not really need $f$ to be made monic for the above approach, but we only need $g$ to be monic to exploit the uniqueness of Hensel lifts. Since $g$ is a polynomial of degree at most $d = O(1)$, we can find a \emph{low Hamming weight} vector $\vecalpha$ such that $g(\vecx + y \vecalpha)$ is monic in $y$. This allows us to control the sparsity increase of $f$ in the process and we show that the relevant resultant is a polynomial of the form 
\[
  \sum_i \text{monomial}_i \cdot (\text{$O(1)$-degree})^{e_i}.
\]
Forbes~\cite{Forbes15} shows that there are quasipolynomial size hitting sets for such expressions and we use this instead of the more general hitting set of Limaye, Srinivasan and Tavenas~\cite{LST21}. 

\subsection*{Organization of the paper}
The rest of the paper is organized as follows. 

In the next section, we start with a discussion of some of the preliminaries and known results from algebraic complexity and previous works on polynomial factorization that we use for the design and analysis of our algorithms. In \autoref{sec:computing-mult-one-factors}, we describe and analyze the algorithm for computing low degree factors of multiplicity one of a given constant depth formula. In \autoref{section: arbitrary multiplicity}, we build upon this algorithm to compute arbitrary constant degree factors and complete the proofs of \autoref{thm: main low depth ckts} and \autoref{thm: main sparse}. Finally, we conclude with some open problems in \autoref{section: open problems}.

\section{Notation and preliminaries}\label{section:notations and prelims}

This section consists of all the necessary building blocks to describe and analyse (in \cref{sec:computing-mult-one-factors}) the main algorithm. 

\paragraph{Fair warning:} A large part of this (slightly lengthy) section is standard techniques in algebraic complexity that are relevant to this specific context, and is intended to keep the main analysis as self-contained as possible. A reader with some familiarity with standard algorithmic and structural results in algebraic complexity might be in a position to directly proceed to \cref{sec:computing-mult-one-factors} and revisit this section for relevant results as required. 

\subsubsection*{Notation}

\begin{enumerate}\itemsep0pt
\item Throughout this paper, we work over the field $\Q$ of rational numbers. For some of the statements that are used more generally, we use $\F$ to denote an underlying field.  

\item We use boldface lower case letters like $\vecx, \vecy, \veca$ to denote tuples, e.g. $\vecx = (x_1, x_2, \ldots, x_n)$. The arity of the tuple is either stated or will be clear from the context. 
\item For a polynomial $f$ and a non-negative integer $k$, $\homog_k[f]$ denotes the homogeneous component of $f$ of degree \emph{equal} to $k$. $\homog_{\leq k}[f]$ denotes the sum of homogeneous components of $f$ of degree at most $k$, i.e.,
\[
\homog_{\leq k}[f] := \sum_{i = 0}^k \homog_i[f].
\]
\item The \emph{sparsity} of a polynomial $f$ is the number of monomials with a non-zero coefficient in $f$.

\item For a parameter $k\in \Z_{\geq 0}$, we will use $(\SP)^{(k)}$ to refer to product-depth $k$ circuits\footnote{We emphasize that this notation does \emph{not} refer to the $k^{\text{th}}$ power of a polynomial computed by a $\SP$ circuit.} with the root gate being $+$ and the deepest layer of gates being $\times$. Since any constant depth algebraic circuit of depth $k$ and size $s$ can be converted to a formula of depth $k$ and size $s^{k+1}$ i.e. $\poly(s)$, we will use the terms circuits and formulas interchangeably, without any loss in the final bounds we prove.

\item Let $f$ and $g$ be multivariate polynomials such that $g \mid f$. Then, the \emph{multiplicity} or \emph{factor multiplicity} of $g$ in $f$ is defined to be the greatest integer $a$ such that $g^a$ divides $f$.
\end{enumerate}

\subsection{Circuit/formula bit-complexity}

\begin{definition}[Bit-complexity of a circuit/formula]
  \label{defn:circuit-size-bit-complexity}
  The \emph{bit-complexity} of a circuit/formula $C$, denoted by $\bit(C)$, is defined as the sum of $\size(C)$ and the bit-complexities of all the scalars\footnote{For a rational number $r = p/q$, its bit-complexity $\bit(r)$ is defined as $\log(\max(\abs{p}, \abs{q}))$} present on edges or leaves. By default, any edge that does not have a scalar on it will be assigned the scalar 1. 
\end{definition}

\begin{lemma}[Bit-complexity of evaluations of formulas]
  \label{lem:evaluation-bit-complexity}
  Let $C$ be a formula of bit-complexity $s$ computing a polynomial $f(\vecx)$. If $\veca \in \Q^n$ with each entry of $\veca$ having bit-complexity $b$, then the bit-complexity of $f(\veca)$ is at most $s \cdot b$.
\end{lemma}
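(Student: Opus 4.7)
The plan is to prove the bound by induction on the structure of the formula $C$. For each gate $v$, let $C_v$ denote the sub-formula rooted at $v$, let $s_v := \bit(C_v)$, and let $C_v(\veca) \in \Q$ denote the value of this sub-formula under the evaluation $\vecx \mapsto \veca$. The inductive claim I aim to establish is $\bit(C_v(\veca)) \le s_v \cdot b$; instantiating at the root of $C$ then yields the lemma.

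At a leaf, the check is essentially immediate: a constant leaf $\alpha$ evaluates to $\alpha$ with $\bit(\alpha) \le s_v \le s_v \cdot b$, while a variable leaf $x_i$ evaluates to $a_i$, with $\bit(a_i) \le b \le s_v \cdot b$ (using $s_v \ge 1$ and $b \ge 1$). For the inductive step, I will use the standard convention that internal gates have fan-in two: an internal gate $v$ has children $u_1, u_2$ with edge scalars $\alpha_1, \alpha_2$, so that $C_v(\veca)$ equals either $\alpha_1 C_{u_1}(\veca) + \alpha_2 C_{u_2}(\veca)$ or $\alpha_1 C_{u_1}(\veca) \cdot \alpha_2 C_{u_2}(\veca)$. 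The elementary bounds $\bit(rs) \le \bit(r) + \bit(s)$ and $\bit(r+s) \le \bit(r) + \bit(s) + 1$ for rationals, combined with the inductive hypothesis applied at $u_1$ and $u_2$, then give
\[
  \bit(C_v(\veca)) \;\le\; \bit(\alpha_1) + \bit(\alpha_2) + b \cdot (s_{u_1} + s_{u_2}) + c,
\]
where $c \in \{0, 1\}$ according as $v$ is a product or sum gate.

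Unpacking the definition of bit-complexity gives $s_v = 1 + s_{u_1} + s_{u_2} + \bit(\alpha_1) + \bit(\alpha_2)$, so $s_v \cdot b$ exceeds the bound displayed above by $(b-1)(\bit(\alpha_1) + \bit(\alpha_2)) + (b - c)$, which is non-negative for $b \ge 1$. This closes the induction.

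No individual step looks like a real obstacle; the proof is entirely routine bit-complexity bookkeeping. The one point worth flagging is the fan-in convention for formula gates---if unbounded fan-in is allowed, then summing $k$ rational numbers contributes an additional $\log k$ additive term, which still gets absorbed because $k$ is at most the sub-formula's size and hence at most $s_v$. The edge case $b = 0$ (forcing each $a_i \in \{0, \pm 1\}$) should be interpreted as the bound $\bit(f(\veca)) \le s$, and is established by the same induction with $b$ replaced by $1$ throughout.
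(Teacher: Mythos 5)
Your proof is correct and follows essentially the same route as the paper's: an induction on the formula structure bounding the bit-complexity of each sub-formula's evaluation, with the paper tracking numerators and denominators explicitly where you instead invoke the elementary bounds $\bit(rs)\le\bit(r)+\bit(s)$ and $\bit(r+s)\le\bit(r)+\bit(s)+1$. The remarks on unbounded fan-in and the degenerate case $b=0$ are accurate and, if anything, slightly more careful than the paper's own write-up.
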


\noindent (Proof deferred to \cref{sec:appendix:deferred_proofs})

\subsection{Relevant subclasses of algebraic circuits} 

We briefly define subclasses of algebraic circuits that we would use often in this paper.

\begin{definition}[Power of low-degree polynomials]
  For a parameter $d \in \Z_{\geq 0}$, let $\LowDeg_d$ refer to the class of polynomials of degree at most $d$. We use $(\LowDeg_d)^\ast$ to denote the class of polynomials that are powers of polynomials of degree at most $d$. 
\end{definition}

\begin{definition}[$\uglymodel{k}{d}$-formulas]
  \label{defn:ugly-model}
  We will use $\uglymodel{k}{d}$ to denote the subclass of algebraic formulas that compute expressions of the form
  \[
    \sum_i f_i \cdot g_i^{e_i}
  \]
  where each $f_i$ is a $(\SP)^{(k)}$ formula and each $g_i$ is a polynomial of degree at most $d$ and $e_i$'s are arbitrary positive integers. The size and bit-complexity of the above expression is defined as its size and bit-complexity when viewed as a general algebraic formula. 
\end{definition}

\begin{observation}
  \label{obs:ugly-model-sums-products}
  Let $\mathcal{C}$ be the class of $\uglymodel{k}{d}$ formulas for fixed parameters $k$ and $d$. 
  Suppose $P_1, \ldots, P_t$ are polynomials computed by $\uglymodel{k}{d}$ formulas of size $s$ and bit-complexity $b$ each. Then, 
  \begin{itemize}
    \item $\sum_i P_i$ is computable by an $\uglymodel{k}{d}$ formula of size at most $t \cdot s$ and bit-complexity at most $O(t \cdot b)$.
    \item $\prod_i P_i$ is computable by an $\uglymodel{k}{d}$ formula of size at most $s^{O(t)}$ and bit-complexity at most $b^{O(t)}$. 
  \end{itemize}
\end{observation}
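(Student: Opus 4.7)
The plan is to directly manipulate the syntactic form of $\uglymodel{k}{d}$ formulas using distributivity. Write each $P_i = \sum_{j=1}^{s_i} f_{i,j}\cdot g_{i,j}^{e_{i,j}}$ where each $f_{i,j}$ is a $(\SP)^{(k)}$-formula and each $g_{i,j}$ has degree at most $d$, with each $P_i$ having size and bit-complexity at most $s$ and $b$ respectively.

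For the sum, the argument is immediate:
\[
\sum_{i=1}^t P_i \;=\; \sum_{i,j} f_{i,j}\cdot g_{i,j}^{e_{i,j}}
\]
is already in $\uglymodel{k}{d}$ form, and the corresponding formula is obtained by placing a new top $+$ gate whose children are the given $t$ sub-formulas for the $P_i$'s. The number of summands is at most $t\cdot s$, and the bit-complexity of the combined formula is bounded by $t\cdot b$ plus $O(\log t)$ for the new gate, which is $O(tb)$.

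For the product, I would use full distributivity:
\[
\prod_{i=1}^t P_i \;=\; \sum_{\vec{j} \in [s]^t} \;\prod_{i=1}^t f_{i,j_i}\cdot g_{i,j_i}^{e_{i,j_i}} \;=\; \sum_{\vec{j}} \Bigl(\prod_i f_{i,j_i}\Bigr)\cdot \Bigl(\prod_i g_{i,j_i}^{e_{i,j_i}}\Bigr),
\]
which is a sum of at most $s^t$ summands. For each summand I would argue separately that it can be put into a single summand of the $\uglymodel{k}{d}$ form, using two structural facts. First, the product of $t$ $(\SP)^{(k)}$-formulas of size $s$ is itself a $(\SP)^{(k)}$-formula of size $s^{O(t)}$: writing $(\SP)^{(k)} = \Sigma\prod(\SP)^{(k-1)}$ and distributing the outer product over the outer sums, the $t$-fold product becomes a sum of $s^t$ products-of-$(\SP)^{(k-1)}$-formulas, which is back in $(\SP)^{(k)}$ form; the inner product lengths add up to $O(ts)$, giving total size $s^{O(t)}$. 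Second, the factor $\prod_i g_{i,j_i}^{e_{i,j_i}}$ is a product of powers of degree-at-most-$d$ polynomials and can be absorbed into the $(\SP)^{(k)}$ piece of the same summand (each power gate being a legitimate subformula that can be attached below the top sum, with its low-degree base expressed as a $\SP$-subformula). Summing contributions across all $\vec{j}$ gives a single $\uglymodel{k}{d}$-formula of size $s^{O(t)}$. The bit-complexity bound $b^{O(t)}$ follows from \cref{lem:evaluation-bit-complexity}-style reasoning: bit-complexities multiply under products of rationals, and the $t$-fold product therefore blows up the per-coefficient bit-length by a factor of $O(t)$ while the $s^{O(t)}$ summands contribute at most an additive $O(t\log s)$, all absorbed into $b^{O(t)}$.

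The main obstacle I expect is purely bookkeeping: verifying that the product-of-$t$ $(\SP)^{(k)}$-formulas collapses back to a $(\SP)^{(k)}$-formula with the claimed $s^{O(t)}$ size, and confirming that the syntactic form $\sum_i f_i\cdot g_i^{e_i}$ genuinely accommodates the reorganised summands (in particular that multiple $g^{e}$-factors arising from the distribution can be gathered into the $\uglymodel{k}{d}$ template without further blow-up). Both reduce to routine structural induction on $k$ and standard closure arguments for constant-depth formulas; no new idea seems necessary.
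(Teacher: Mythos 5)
Your decomposition of the product and your handling of the $(\SP)^{(k)}$ parts match the paper's proof, but the step where you deal with the factor $\prod_i g_{i,j_i}^{e_{i,j_i}}$ has a genuine gap. By \cref{defn:ugly-model}, each summand of an $\uglymodel{k}{d}$ formula must have the form $f \cdot g^{e}$ for a \emph{single} polynomial $g$ of degree at most $d$; a product of powers of several distinct low-degree polynomials is not of this form, and in general it is not a power of any single degree-$\leq d$ polynomial either. Your proposed fix --- absorbing $\prod_i g_{i,j_i}^{e_{i,j_i}}$ into the $(\SP)^{(k)}$ piece --- does not work: the exponents are arbitrary positive integers (in the paper's applications they are as large as the degree $D$ of the input polynomial), so $g^{e}$ has degree up to $de$ and expanding it into a product-depth-$k$ $\Sigma\Pi$ formula costs size exponential in $e$, while hanging an unexpanded power gate below the top sum destroys the $(\SP)^{(k)}$ structure. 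This is exactly the obstacle you flag at the end of your proposal, but it is not resolved by routine structural induction.

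The missing ingredient is Fischer's identity (\cref{lem:fischers-trick}): applying it with $x_i \mapsto g_{i,j_i}$ writes $g_{1,j_1}^{e_{1,j_1}} \cdots g_{t,j_t}^{e_{t,j_t}}$ as $\sum_{\ell} \alpha_\ell L_\ell^{E}$ with $E = \sum_i e_{i,j_i}$, where each $L_\ell$ is a linear combination of the $g_{i,j_i}$'s and hence still has degree at most $d$, and the number of terms is at most $\prod_i (e_{i,j_i}+1) \leq s^{O(t)}$. Each resulting summand $(\prod_i f_{i,j_i}) \cdot L_\ell^{E}$ is then genuinely of the required form, and summing over all $\vec{j}$ and $\ell$ gives the claimed $s^{O(t)}$ bound. (This is also where the characteristic-zero or large-characteristic hypothesis enters.) The rest of your argument --- the sum case, the distribution, the closure of $(\SP)^{(k)}$ under $t$-fold products with size $s^{O(t)}$, and the bit-complexity bookkeeping --- is fine and agrees with the paper.
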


\noindent (Proof deferred to \cref{sec:appendix:deferred_proofs}.)

\subsection{Standard preliminaries using interpolation}

\begin{lemmawp}[Univariate interpolation (Lemma 5.3 \cite{S15})]\label{lem:interpolation-univariate}
  Let $f(x) = f_0 + f_1 x + \cdots + f_d x^d$ be a univariate polynomial of degree at most $d$. Then, for any $0 \leq r \leq d$ and there are\footnote{In fact, for any choice of distinct $\alpha_0, \ldots, \alpha_d$, there are appropriate $\beta_{r0}, \ldots, \beta_{rd}$ satisfying the equation. If the $\alpha_i$'s are chosen to have small bit-complexity, we can obtain a $\poly(d)$ bound on the bit-complexity of the associated $\beta_{ri}$'s.} field constants $\alpha_0, \ldots, \alpha_d$ and $\beta_{r0},\ldots, \beta_{rd}$ such that 
  \[
    f_r = \beta_{r0} f(\alpha_0) + \cdots + \beta_{rd} f(\alpha_d).
  \]
  Furthermore, the bit-complexity of all field constants is bounded by $\poly(d)$. 
\end{lemmawp}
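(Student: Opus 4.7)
The plan is to invoke the standard fact that a degree-$d$ univariate polynomial is determined by its evaluations at any $d+1$ distinct points, via the invertibility of a Vandermonde matrix.

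First I would fix the evaluation points $\alpha_0, \ldots, \alpha_d$ to be, say, $\alpha_i = i$ for $0 \le i \le d$ (any distinct rationals of small bit-complexity would do). Since $f$ has degree at most $d$, we have the matrix identity
\[
\begin{pmatrix} f(\alpha_0) \\ \vdots \\ f(\alpha_d) \end{pmatrix} = V \cdot \begin{pmatrix} f_0 \\ \vdots \\ f_d \end{pmatrix}, \qquad V_{ij} = \alpha_i^j.
\]
The $\alpha_i$'s being distinct makes $V$ a non-singular Vandermonde matrix, so $V^{-1}$ exists. Then the $r$-th entry of $V^{-1} \cdot (f(\alpha_0), \ldots, f(\alpha_d))^T$ equals $f_r$, and reading off the $r$-th row of $V^{-1}$ gives the desired constants $\beta_{r0}, \ldots, \beta_{rd}$. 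Equivalently, I could phrase this directly via Lagrange interpolation: write $f(x) = \sum_i f(\alpha_i)\, \ell_i(x)$ where $\ell_i(x) = \prod_{j \ne i}(x - \alpha_j)/(\alpha_i - \alpha_j)$, and then define $\beta_{ri}$ to be the coefficient of $x^r$ in $\ell_i(x)$.

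For the bit-complexity bound, I would argue as follows. With $\alpha_i = i$, the numerator of $\ell_i(x)$ is a product of $d$ linear forms $x - j$ with $j \in \{0, \ldots, d\} \setminus \{i\}$, so its coefficients are (up to sign) elementary symmetric polynomials in at most $d$ integers each bounded by $d$; each such coefficient has magnitude at most $d^d \cdot \binom{d}{k} \le 2^{O(d \log d)}$. The denominator $\prod_{j \ne i}(i - j) = (-1)^{d-i}\, i!\,(d-i)!$ is a nonzero integer of magnitude at most $d!$. Hence every $\beta_{ri}$ is a rational of bit-complexity $O(d \log d) = \poly(d)$, matching the claim. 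The same bound would follow alternatively from Cramer's rule applied to $V$, using Hadamard's inequality on the cofactor determinants and the lower bound $|\det V| = \prod_{i < j}|\alpha_j - \alpha_i| \ge 1$.

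There is no substantive obstacle here: the statement is essentially the existence and small bit-size of inverse Vandermonde entries, and both the existence (distinctness of $\alpha_i$) and the bit-complexity estimate (elementary binomial/factorial bounds) are routine. The only mild care is in choosing $\alpha_i$'s whose bit-complexity is itself $O(\log d)$ so that the final $\poly(d)$ bound on $\beta_{ri}$ goes through cleanly.
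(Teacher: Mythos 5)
Your proof is correct; the paper itself states this lemma without proof (importing it from Lemma 5.3 of \cite{S15}), and your Lagrange-interpolation/Vandermonde argument, together with the elementary symmetric polynomial and factorial bounds giving bit-complexity $O(d\log d)$ for the $\beta_{ri}$, is exactly the standard derivation the citation refers to. Nothing is missing.
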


\begin{lemma}[Computing homogeneous components (Lemma 5.4 \cite{S15})]
  \label{lem:computing-hom-components}
  Let $f \in \Q[\vecx]$ be an $n$-variate degree $d$ polynomial. Then, for an $0 \leq i \leq d$, there are field constants $\alpha_0,\ldots, \alpha_{d}$ and  $\beta_{i0}, \beta_{id}$ of bit-complexity $\poly(d)$ such that 
  \[
    \homog_i(f) = \beta_{i0} f(\alpha_0 \cdot \vecx) + \cdots + \beta_{id} f(\alpha_d \cdot \vecx).
  \]
  In particular for $\mathcal{C} = (\SP)^{(k)}$ or $\Sigma \inparen{(\SP)^{(k)} \cdot (\LowDeg_d)^\ast}$, if $f$ is computable by $\mathcal{C}$-formulas of size / bit-complexity at most $s$ then $\homog_i(f)$ is computable by $\mathcal{C}$-formulas of size / bit-complexity at most $\poly(s,d)$.
\end{lemma}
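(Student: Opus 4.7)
The plan is to use the standard substitution trick involving a fresh scalar variable, together with the univariate interpolation lemma already established. Introduce a new indeterminate $t$ and consider the polynomial $F(\vecx,t) := f(t \cdot \vecx)$. Because $f$ has total degree at most $d$, we can expand by homogeneous components and get the identity
\[
  F(\vecx, t) \;=\; \sum_{j=0}^{d} t^{j} \cdot \homog_j(f)(\vecx),
\]
viewed as a univariate polynomial in $t$ whose coefficients lie in $\Q[\vecx]$ and are precisely the homogeneous components we wish to extract.

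Now apply \cref{lem:interpolation-univariate} coefficient-wise, treating $t$ as the interpolating variable. The lemma yields field constants $\alpha_0, \ldots, \alpha_d$ (with small bit-complexity) and, for each target index $i$, constants $\beta_{i0}, \ldots, \beta_{id}$ of bit-complexity $\poly(d)$ such that
\[
  \homog_i(f)(\vecx) \;=\; \sum_{j=0}^{d} \beta_{ij} \cdot F(\vecx, \alpha_j) \;=\; \sum_{j=0}^{d} \beta_{ij} \cdot f(\alpha_j \cdot \vecx),
\]
which is exactly the required expression. The only thing to verify is that applying the interpolation at a fixed value of $\vecx$ and then interpreting the coefficients as polynomials in $\vecx$ is valid, and this is immediate since interpolation just inverts a Vandermonde system whose coefficients depend only on the $\alpha_j$'s.

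For the ``in particular'' part, the key observation is that $f(\alpha_j \cdot \vecx)$ is obtained from the given $\mathcal{C}$-formula for $f$ by the leaf substitution $x_r \mapsto \alpha_j x_r$ for every variable $x_r$. This does not alter the depth, the structure of the formula, or the classes $(\SP)^{(k)}$ and $\Sigma \inparen{(\SP)^{(k)} \cdot (\LowDeg_d)^\ast}$ (both of which are closed under scaling of variables at leaves), and it inflates the bit-complexity by at most the bit-complexity of $\alpha_j$, which is $\poly(d)$. Taking the $\Q$-linear combination of the $(d+1)$ such scaled copies of $f$ only adds one extra $+$-gate at the top with constant edge weights of bit-complexity $\poly(d)$; this is absorbed into the outer $\Sigma$ of each of the two classes in question (using \cref{obs:ugly-model-sums-products} for the second class). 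Hence the resulting formula for $\homog_i(f)$ has size $O(d \cdot s)$ and bit-complexity $\poly(s,d)$, as claimed.

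There is no real obstacle here beyond bookkeeping: the mildly delicate point is checking that the relevant formula classes really are closed under the two operations used (variable scaling and $(d+1)$-fold summation with small-bit-complexity coefficients), but this is routine and is the sort of thing \cref{obs:ugly-model-sums-products} is designed to handle.
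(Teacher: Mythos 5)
Your proof is correct and is exactly the standard interpolation argument that the paper implicitly relies on by citing Lemma 5.4 of \cite{S15}: substitute $\vecx \mapsto t\vecx$, read off $\homog_j(f)$ as the coefficient of $t^j$, and invert the Vandermonde system via \cref{lem:interpolation-univariate}. The closure observations for the two formula classes (leaf scaling and merging the final linear combination into the top sum gate, via \cref{obs:ugly-model-sums-products} for the second class) are also handled correctly.
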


\begin{lemma}[Computing partial derivatives in one variable]
  \label{lem: partial derivatives with respect to one variable}
  Let $f \in \Q[\vecx]$ be an $n$-variate degree $d$ polynomial. Then, for an $0 \leq r \leq d$, there are field elements $\alpha_i$'s and $\beta_{ij}$'s in $\Q$ of bit-complexity $\poly(d)$ such that 
  \[
    \frac{\partial^r f}{\partial x_1^r} = \sum_{i=0}^{d} x_1^i \cdot \inparen{\beta_{i0} f(\alpha_0, x_2,\ldots, x_n) + \cdots + \beta_{id} f(\alpha_d, x_2,\ldots, x_n)}
  \]
  In particular for $\mathcal{C} = (\SP)^{(k)}$ or $\Sigma \inparen{(\SP)^{(k)} \cdot (\LowDeg_d)^\ast}$, if $f$ is computable by $\mathcal{C}$-formulas of size / bit-complexity at most $s$ then $\frac{\partial^r f}{\partial x_1^r}$ is computable by $\mathcal{C}$-formulas / bit-complexity of size at most $O(s \cdot d^3)$.
\end{lemma}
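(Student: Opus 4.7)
The plan is to view $f$ as a polynomial in $x_1$ with coefficients in $\Q[x_2,\ldots,x_n]$, and then combine two elementary operations: extracting coefficients of this ``univariate'' via interpolation, and simulating $\partial/\partial x_1$ by its action on the monomial basis in $x_1$. Since the claimed identity decomposes $\partial^r f / \partial x_1^r$ as a sum $\sum_i x_1^i \cdot (\cdots)$, the natural strategy is to pick out each coefficient of $x_1^{i+r}$ in $f$ as a linear combination of evaluations $f(\alpha_l, x_2, \ldots, x_n)$, and then multiply by the appropriate factorial scalar coming from differentiation.

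Concretely, writing $f(\vecx) = \sum_{j=0}^{d} c_j(x_2, \ldots, x_n) \, x_1^j$, the $r$-fold derivative is
\[
\frac{\partial^r f}{\partial x_1^r} \;=\; \sum_{i=0}^{d-r} \frac{(i+r)!}{i!} \, c_{i+r}(x_2, \ldots, x_n) \, x_1^i.
\]
I would then invoke \cref{lem:interpolation-univariate}, applied to $f$ as a univariate in $x_1$ over $\Q[x_2,\ldots,x_n]$, to obtain constants $\alpha_0, \ldots, \alpha_d$ and $\gamma_{jl}$ of bit-complexity $\poly(d)$ satisfying $c_j = \sum_{l=0}^{d} \gamma_{jl}\, f(\alpha_l, x_2, \ldots, x_n)$. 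Setting $\beta_{il} := \tfrac{(i+r)!}{i!}\, \gamma_{(i+r), l}$ for $i \leq d-r$ and $\beta_{il} := 0$ otherwise then yields the identity stated in the lemma.

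For the formula-complexity bound I would reorganize the identity as a sum of at most $(d+1)^2$ terms $\beta_{il} \cdot x_1^i \cdot f(\alpha_l, x_2,\ldots, x_n)$. Substituting $x_1 = \alpha_l$ into a $\mathcal{C}$-formula for $f$ preserves the class $\mathcal{C}$ (whether $(\SP)^{(k)}$ or $\Sigma\inparen{(\SP)^{(k)} \cdot (\LowDeg_d)^\ast}$) and the size, and only inflates the bit-complexity by a $\poly(s,d)$ factor. Each summand is then the product of a monomial $x_1^i$ (of degree $\leq d$) with a $\mathcal{C}$-formula of size $\leq s$: the monomial can be folded into every top-level product gate of the $+$-rooted formula, adding at most $d$ leaves per such product gate and at most $O(s \cdot d)$ edges per summand, while preserving product-depth. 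Summing across the $(d+1)^2$ terms yields the claimed $O(s \cdot d^3)$ size bound, and the bit-complexity remains $\poly(s,d)$ since every new scalar has bit-complexity $\poly(d)$.

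The main subtlety I expect to be bookkeeping the class membership after absorbing $x_1^i$: the critical observation is that for a formula whose root is a $+$-gate with product children, multiplying the whole formula by a monomial can be pushed into each product child without introducing a new gate layer at the top, so the product-depth and the defining structural property of the class are both preserved. The analogous argument works for $\Sigma\inparen{(\SP)^{(k)} \cdot (\LowDeg_d)^\ast}$ by absorbing the $x_1^i$ factor into the $(\SP)^{(k)}$ part of each summand.
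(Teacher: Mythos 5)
Your proof is correct and follows essentially the same route as the paper's: interpret $f$ as a univariate in $x_1$, extract the coefficients $c_j$ via \cref{lem:interpolation-univariate}, rescale by the falling-factorial constants, and absorb the monomial $x_1^i$ into the summands using distributivity of the top addition gate to get the $O(s\cdot d^3)$ size bound. The only difference is that you spell out the bookkeeping (the explicit $\beta_{il} = \tfrac{(i+r)!}{i!}\gamma_{(i+r),l}$ and the class-preservation argument) that the paper leaves implicit.
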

\begin{proof}
  We may consider the polynomial $f$ as a univariate in $x_1$, and extract each coefficient of $x_1^i$ using \cref{lem:interpolation-univariate} and recombine them to get the appropriate partial derivative. That justifies the claimed expression. 

  As for the size, note that if $\mathcal{C}$ is $(\SP)^{(k)}$ or $\Sigma \inparen{(\SP)^{(k)} \cdot (\LowDeg_d)^\ast}$, multiplying a size $s$ formula by $x_1^i$, by using distributivity of the top addition gate, results in a $\mathcal{C}$-formula of size at most $s \cdot d$. Thus, the overall size of the above expression for the partial derivative is at most $O(s \cdot d^3)$. 
\end{proof}

We will be making use of the following identity, which can be proved via appropriate interpolation or by the inclusion-exclusion principle (along the lines of Lemma 2.2 \cite{Shpilka02}).

\begin{lemma}[Fischer's identity \cite{F94, E69, Shpilka02}]
  \label{lem:fischers-trick}
  If $\F$ is a field of characteristic zero or larger than $D$, then for any positive integers $e_1,\ldots, e_n$ with $\sum e_i = D$ and for $r \leq \prod_{i=1}^{n}\inparen{e_i+1}$, there are homogeneous linear forms $L_1,\ldots, L_r$ and field constants $\alpha_1, \ldots, \alpha_r$ of bit-complexity $\poly(d,n)$ such that 
  \[
  x_1^{e_1} \cdots x_n^{e_n} = \sum_{i=1}^r \alpha_i L_i^D.
  \]
\end{lemma}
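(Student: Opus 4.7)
The plan is to isolate the desired monomial as a linear combination of evaluations of a single polynomial $P(\vec y) := (y_1 x_1 + \cdots + y_n x_n)^D$ at various scalar points $\vec\gamma \in \Q^n$, since each such evaluation $P(\vec\gamma) = (\gamma_1 x_1 + \cdots + \gamma_n x_n)^D$ is precisely a $D$-th power of a linear form. So the goal reduces to exhibiting scalars $\alpha_{\vec j}$ and evaluation points $\vec\gamma_{\vec j}$ so that $\sum_{\vec j} \alpha_{\vec j} P(\vec\gamma_{\vec j})$ picks out the coefficient of $y_1^{e_1} \cdots y_n^{e_n}$ in $P$, which, by the multinomial theorem, equals $\binom{D}{e_1,\ldots,e_n} \cdot x_1^{e_1} \cdots x_n^{e_n}$.

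The main step will be a tensor-product interpolation. For each variable $i$, I will fix $e_i + 1$ distinct rational nodes $\gamma_{i,0}, \ldots, \gamma_{i,e_i}$ of small bit-complexity (say $0, 1, \ldots, e_i$), and use invertibility of the associated Vandermonde matrix to produce rationals $c^{(i)}_0, \ldots, c^{(i)}_{e_i}$ with $\sum_{j} c^{(i)}_j \gamma_{i,j}^{d} = \mathbb{1}[d = e_i]$ for every $d \in \{0,1,\ldots,e_i\}$. Setting $c_{\vec j} := \prod_i c^{(i)}_{j_i}$ and expanding $\sum_{\vec j} c_{\vec j} P(\gamma_{1,j_1}, \ldots, \gamma_{n,j_n})$ via the multinomial expansion, the coefficient of $x_1^{d_1}\cdots x_n^{d_n}$ becomes $\binom{D}{d_1,\ldots,d_n} \prod_i \inparen{\sum_{j} c^{(i)}_j \gamma_{i,j}^{d_i}}$.

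The key combinatorial observation that makes the argument work is that the sum over $\vec d$ is restricted to $\sum d_i = D = \sum e_i$: so if $\vec d \ne \vec e$, then some coordinate must satisfy $d_{i^\star} < e_{i^\star}$, forcing $d_{i^\star} \in \{0,\ldots,e_{i^\star}\} \setminus \{e_{i^\star}\}$ and hence the factor $\sum_{j} c^{(i^\star)}_j \gamma_{i^\star,j}^{d_{i^\star}}$ to vanish by construction. Therefore only the term $\vec d = \vec e$ survives, giving $\binom{D}{e_1,\ldots,e_n} \cdot x_1^{e_1}\cdots x_n^{e_n}$; dividing by this multinomial (non-zero in characteristic zero or $> D$) yields the desired expression with $r = \prod_i (e_i + 1)$ terms and linear forms $L_{\vec j} = \gamma_{1,j_1} x_1 + \cdots + \gamma_{n,j_n} x_n$.

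Finally, bit-complexity: the $\gamma_{i,j}$ are small integers, and each $c^{(i)}$ is the last column of the inverse of a $(e_i+1) \times (e_i+1)$ Vandermonde matrix on integer nodes of magnitude at most $D$, whose inverse has entries of bit-complexity $\poly(D)$ by standard formulas (Cramer's rule / explicit Vandermonde inverse). The multinomial coefficient has bit-complexity $O(D \log n)$, so the final $\alpha_{\vec j} = c_{\vec j}/\binom{D}{\vec e}$ has bit-complexity $\poly(d,n)$ as claimed. I do not anticipate any real obstacle here; the only mild care is in the Vandermonde bit-complexity accounting, which is routine.
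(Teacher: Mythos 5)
Your proof is correct, and it is precisely the ``appropriate interpolation'' route that the paper itself points to for this lemma (which it otherwise cites from Fischer/Ellison/Shpilka without proof): a tensor-product Vandermonde interpolation on $(y_1x_1+\cdots+y_nx_n)^D$, with the crucial observation that any exponent vector $\vec d\neq\vec e$ of total degree $D$ must have some $d_{i^\star}<e_{i^\star}$, killing the corresponding product. The term count $r=\prod_i(e_i+1)$ and the $\poly(D,n)$ bit-complexity accounting both check out.
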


\subsection{Polynomial identity testing}

\begin{lemma}[Polynomial Identity Lemma \cite{O22,DL78,S80,Z79}]\label{lem: SZ lemma}
Let $f \in \Q[\vecx]$ be a non-zero $n$ variate polynomial of degree at most $d$. Then, for every set $S \subseteq \Q$, the number of zeroes of $f$ in the set $S^n = S \times S \times \cdots \times S$ is at most $d|S|^{n-1}$. 
\end{lemma}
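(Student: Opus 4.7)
The plan is to prove this by induction on the number of variables $n$, which is the classical Schwartz--Zippel--DeMillo--Lipton argument. The base case $n=1$ is immediate from the fundamental theorem of algebra (or more elementarily, from the fact that a nonzero univariate polynomial of degree at most $d$ over a field has at most $d$ roots, matching the bound $d|S|^0 = d$).

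For the inductive step, I would view $f$ as a univariate polynomial in $x_n$ with coefficients in $\Q[x_1,\ldots,x_{n-1}]$, writing
\[
f(x_1,\ldots,x_n) \;=\; \sum_{i=0}^{k} f_i(x_1,\ldots,x_{n-1}) \cdot x_n^i,
\]
where $k \le d$ is chosen as the largest integer for which the coefficient polynomial $f_k$ is nonzero. Since $f \ne 0$, such a $k$ exists, and $\deg(f_k) \le d-k$. Now split the count of zeros of $f$ in $S^n$ according to whether or not the prefix $(a_1,\ldots,a_{n-1}) \in S^{n-1}$ makes $f_k$ vanish.

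In the first case, when $f_k(a_1,\ldots,a_{n-1}) = 0$, the inductive hypothesis applied to the $(n-1)$-variate nonzero polynomial $f_k$ of degree at most $d-k$ bounds the number of such prefixes by $(d-k)|S|^{n-2}$, and each such prefix contributes at most $|S|$ zeros (taking all values of $a_n$), for a total of at most $(d-k)|S|^{n-1}$. In the second case, when $f_k(a_1,\ldots,a_{n-1}) \ne 0$, the specialized univariate $f(a_1,\ldots,a_{n-1},x_n)$ is a nonzero polynomial of degree exactly $k$, and so has at most $k$ roots in $S$ by the base case; summing over the at most $|S|^{n-1}$ such prefixes gives at most $k|S|^{n-1}$. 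Adding these two contributions yields $(d-k)|S|^{n-1} + k|S|^{n-1} = d|S|^{n-1}$, as required.

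No step here is a real obstacle since this is a textbook induction; the only subtlety worth flagging is that the count in the first case would be \emph{vacuously} handled when $k = d$ (so $f_k$ is a nonzero constant with no zeros), and that one must verify that the inductive hypothesis applies to $f_k$ as a polynomial genuinely on $n-1$ variables (which is fine since we only use the statement as an upper bound on zeros in $S^{n-1}$, and degenerate cases where $f_k$ has effectively fewer variables only decrease the zero count). The argument is field-independent and in particular works over $\Q$ as stated.
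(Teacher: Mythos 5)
Your proof is correct: it is the standard DeMillo--Lipton--Schwartz--Zippel induction on the number of variables, splitting on whether the leading coefficient $f_k$ (as a polynomial in $x_{1},\ldots,x_{n-1}$) vanishes at the prefix, and the bookkeeping $(d-k)|S|^{n-1} + k|S|^{n-1} = d|S|^{n-1}$ is exactly right, including the edge cases you flag. The paper itself gives no proof of this lemma --- it simply cites the classical references --- so there is nothing to compare against; your argument is the canonical one.
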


\begin{definition}[Low Hamming weight set]
  \label{defn:low-hamming-weight-hitting-set}
  Let $n \geq d \geq 0$ be integer parameters. Fix a set $T_d \subseteq \Q$ of size $(d+1)$. The set $\mathcal{H}(d,n)$ is defined as
  \[
    \mathcal{H}(d, n) = \setdef{(a_1,\ldots, a_n)}{S \in \binom{[n]}{\leq d} \;,\; a_i \in T_d \text{ for all } i\in S \text{ and } a_j = 0 \text{ for all } j\notin S}.
  \]
  The size of the above set is at most $\binom{n}{\leq d} \cdot (d+1)^d = n^{O(d)}$. Furthermore, choosing $T_d$ to consist of elements of $\Q$ of bit-complexity $\poly(d)$, the bit-complexity of the set $\mathcal{H}(d,n)$ is bounded by $n^{O(d)}$ as well. 
\end{definition}

The following lemma is an easy consequence of \cref{lem: SZ lemma} and will be crucial for parts of our proof. We also include a short proof sketch. 
\begin{lemma}[Hitting set for low degree polynomials]\label{lem: non-zeros of low degree polynomials}
Let $f \in \Q[\vecx]$ be a non-zero $n$ variate polynomial of degree at most $d$. 
Then, there exists a vector $\veca \in \mathcal{H}(d, n) \subseteq \Q^n$ such that $f(\veca) \neq 0$.
\end{lemma}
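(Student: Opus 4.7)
The plan is to exploit the fact that any non-zero polynomial $f$ of total degree at most $d$ must contain a monomial whose support involves at most $d$ distinct variables, and then apply the Polynomial Identity Lemma (\cref{lem: SZ lemma}) on the restriction of $f$ to just those variables.

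First, since $f \not\equiv 0$, pick any monomial $m = x_{i_1}^{e_1} \cdots x_{i_k}^{e_k}$ appearing in $f$ with a non-zero coefficient, where the $i_j$'s are distinct and each $e_j \geq 1$. Because $\deg f \leq d$, we must have $k \leq e_1 + \cdots + e_k \leq d$, so $S := \{i_1, \ldots, i_k\}$ lies in $\binom{[n]}{\leq d}$. Next, consider the restriction $f_S \in \Q[\{x_i : i \in S\}]$ obtained by setting $x_j = 0$ for every $j \notin S$. The monomial $m$ involves only variables indexed by $S$, so its coefficient in $f_S$ is the same as in $f$, and in particular non-zero. Thus $f_S$ is a non-zero polynomial in $|S| \leq d$ variables of degree at most $d$.

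Now apply \cref{lem: SZ lemma} to $f_S$ with the set $T_d \subseteq \Q$ of size $d+1$: the number of zeros of $f_S$ in $T_d^{S}$ is at most $d \cdot (d+1)^{|S|-1} < (d+1)^{|S|} = \abs{T_d^S}$. Hence there exists an assignment $(a_i)_{i \in S} \in T_d^S$ with $f_S((a_i)_{i \in S}) \neq 0$. Define $\veca \in \Q^n$ by taking these values on $S$ and $a_j = 0$ for $j \notin S$. By construction $\veca \in \mathcal{H}(d,n)$, and $f(\veca) = f_S((a_i)_{i \in S}) \neq 0$, as desired.

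There is no real obstacle here: the only conceptual step is noticing that total degree at most $d$ forces each monomial to depend on at most $d$ variables, which lets us reduce to the low-dimensional Schwartz--Zippel bound on the grid $T_d^S$. The bit-complexity claim follows immediately from the choice of $T_d$ in \cref{defn:low-hamming-weight-hitting-set}.
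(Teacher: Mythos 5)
Your proof is correct and follows essentially the same route as the paper's: pick a surviving monomial, restrict $f$ to its support $S$ (of size at most $d$) by zeroing the other variables, hit the non-zero restriction on the grid $T_d^S$ via the Polynomial Identity Lemma, and extend by zeros to land in $\mathcal{H}(d,n)$. The only cosmetic difference is that you spell out the counting bound $d\cdot(d+1)^{|S|-1} < (d+1)^{|S|}$ explicitly, which the paper leaves implicit.
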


\noindent (Proof deferred to \cref{sec:appendix:deferred_proofs}.)

\begin{theorem}[PIT for constant depth formulas (modification of Corollary 6 \cite{LST21})]\label{thm: PIT from LST}
Let $\epsilon > 0$ be a real number and $\F$ be a field of characteristic 0. Let $C$ be an algebraic formula of size and bit-complexity $s \leq \operatorname{poly}(n)$, depth $k = o(\log\log\log n)$ computing a polynomial on $n$ variables, then there is a deterministic algorithm that can check whether the polynomial computed by $C$ is identically zero or not in time $(s^{O(k)}\cdot n)^{O_{\epsilon}((sD)^\epsilon)}$.
\end{theorem}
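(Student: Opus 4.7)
The plan is to invoke the LST hitting set construction (their Corollary~6) essentially as a black box, and then verify that the bit-complexity book-keeping and the conversion between their statement and our statement only contribute the extra factors we have claimed in the runtime. LST construct, for every constant-depth $k$ and every $\epsilon > 0$, an explicit hitting set $\mathcal{H}_{k,\epsilon}$ for $n$-variate size-$s$ depth-$k$ algebraic circuits of syntactic degree $D$, of size $(sDn)^{O_\epsilon((sD)^\epsilon)}$, constructible in time polynomial in its size. The entries of the hitting set points can be taken from a fixed number field whose elements have bit-complexity $\poly(sD, n)$ (the hitting set is built from low-degree curves over an explicit generator set). Given such a $\mathcal{H}_{k,\epsilon}$, PIT is performed by evaluating the input at every point in $\mathcal{H}_{k,\epsilon}$ and accepting iff all evaluations vanish.

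First I would handle the reduction from formulas with bit-complexity $s$ to the LST setting. A size-$s$ formula is trivially a size-$s$ circuit of depth $k$, and the bit-complexity bound on its constants is already part of $s$. The syntactic degree of a size-$s$ depth-$k$ formula is at most $s^{O(k)}$ because each layer can at most multiply degrees, and the fan-in of each product gate is at most $s$. Hence we may apply LST with degree parameter $D' = s^{O(k)}$, and (if the caller supplies a better degree bound $D$) with degree $\min(D, s^{O(k)})$. The resulting hitting set has size $(s^{O(k)} \cdot n)^{O_\epsilon((sD)^\epsilon)}$, matching the bound in the statement.

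Next I would control the arithmetic cost of each evaluation. By \cref{lem:evaluation-bit-complexity}, evaluating a formula of bit-complexity $s$ at a point whose coordinates have bit-complexity $b = \poly(sD, n)$ produces a rational number of bit-complexity at most $s \cdot b = \poly(sD, n)$; the evaluation itself takes time polynomial in this bit-complexity using schoolbook arithmetic along the formula. Multiplying by the size of the hitting set, the total running time is $(s^{O(k)}\cdot n)^{O_\epsilon((sD)^\epsilon)} \cdot \poly(sD, n)$, which is absorbed into $(s^{O(k)}\cdot n)^{O_\epsilon((sD)^\epsilon)}$.

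The only genuinely technical point, and what I expect to be the main obstacle when writing the proof carefully, is matching the precise hypotheses of LST's Corollary~6 to our setting; in particular, LST's statement is over characteristic zero with an explicit generator assumption, and one must check that their hitting set points have polynomially bounded bit-complexity when instantiated over $\Q$, and that their statement survives the conversion from the circuit model they work with (low-depth circuits with bounded formal degree) to our formula model. This is essentially a syntactic verification: constant-depth formulas over $\Q$ of polynomial size and polynomial bit-complexity are a subclass of what LST handle, and one only needs to track that the bit-complexity of the hitting set remains polynomial so that the evaluation step does not dominate the claimed runtime.
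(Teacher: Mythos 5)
Your proposal matches the paper's treatment: the paper gives no standalone proof of this statement, remarking only that it is ``readily inferred'' from the proof of Corollary~6 in \cite{LST21}, which is exactly the black-box invocation plus parameter and bit-complexity bookkeeping you carry out. Your sketch simply makes explicit the degree bound $D \le s^{O(k)}$ and the evaluation-cost accounting that the paper leaves implicit.
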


The original statement of Corollary 6 in \cite{LST21} deals specifically with circuits of size $s = \poly(n)$. The above statement can be readily inferred from their proof. 

\begin{theorem}[PIT for $\uglymodel{1}{d}$ (Corollary 6.7, \cite{Forbes15})]\label{thm: forbes quasipoly pit for resultant}
  Let $t \geq 1$. Then, the class $\mathcal{C} = \uglymodel{1}{d}$ that computes polynomials of the form $\sum_{i=1}^s f_i \cdot g_i^{d_i}$ with each $f_i$ being $s$-sparse and each $\deg(g_i) \leq d$ has a $\poly(n,s, d\log s)$-explicit hitting set of size $\poly(s)^{O(d \log s)}$. 
\end{theorem}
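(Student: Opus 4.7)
The goal is to build an explicit hitting set $\mathcal{H} \subseteq \Q^n$ of size $\poly(s)^{O(d\log s)}$ for polynomials of the form $P = \sum_{i=1}^s f_i \cdot g_i^{d_i}$ with each $f_i$ being $s$-sparse and each $g_i$ of degree at most $d$. The overall plan is to combine a sparse-PIT generator (which handles the $f_i$'s) with a structural reduction exploiting the key fact that each power $g_i^{d_i}$, though potentially of very high degree, has severely restricted algebraic complexity: all its partial derivatives are divisible by $g_i^{d_i-1}$, so the family $\{g_i^{d_i}\}$ lives in a one-parameter algebraic family determined by $g_i$ alone.

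The first step is a Klivans--Spielman-style substitution $x_j \mapsto y^{w_j}$ (optionally with a shift $x_j \mapsto y^{w_j}+v_j$) using weights $\vec w$ drawn from an explicit family of size $\poly(n,s)$. For suitable $\vec w$, the monomials of each sparse $f_i$ remain distinct after substitution, so each $f_i$ reduces to a univariate of controlled degree, and each $g_i$ reduces to a univariate of degree at most $d \cdot \max_j w_j$. This handles the sparsity dimension and contributes a $\poly(n,s)$ factor to the hitting set size, but is insufficient on its own because $g_i^{d_i}$ can still have huge degree as a univariate in $y$.

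The second step is to iterate a Shpilka--Volkovich-style generator $O(\log s)$ times to control the interaction among the $s$ summands. The intuition is that at each level the generator ``isolates'' roughly half of the remaining summands in a divide-and-conquer manner, while preserving the product structure $f_i \cdot g_i^{d_i}$ of each surviving term. The degree-$d$ parameter enters multiplicatively at each level because the generator must distinguish between the degree-$d$ polynomials $g_i$ that index the one-parameter families described above. Iterating this $O(\log s)$ times yields the claimed bound of $\poly(s)^{O(d\log s)}$.

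The main obstacle is rigorously maintaining an inductive invariant across the $O(\log s)$ iterations: after each level the expression is either identically zero, or it still admits a sparse-plus-power decomposition of strictly smaller ``effective complexity'' that the next level of the generator can distinguish. In particular, one must rule out spurious cancellations where a substitution collapses several distinct $f_i g_i^{d_i}$ terms to the same univariate polynomial. This is precisely where the one-dimensional structure of $g_i^{d_i}$ (as an element of the affine variety of $d_i$-th powers of degree-$d$ polynomials) is exploited, via an argument coupling the distinctness of the $g_i$'s with the distinctness of monomials of the $f_i$'s under the iterated substitution.
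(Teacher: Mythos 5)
First, a point of order: the paper does not prove this statement at all --- it is imported verbatim from Forbes \cite{Forbes15} (Corollary 6.7 there) and used as a black box. So the only meaningful benchmark is Forbes's own argument, and measured against that, your proposal has a genuine gap rather than an alternative route. The gap is the entire second step. A Klivans--Spielman substitution handles a \emph{single} sparse polynomial by keeping its monomials distinct, but the whole difficulty here is cancellation \emph{across} the $s$ summands $f_i g_i^{d_i}$, and nothing in your sketch addresses it: Shpilka--Volkovich-type generators isolate \emph{variables}, not summands, and there is no mechanism --- described by you or known in the literature --- by which iterating such a generator ``isolates roughly half of the remaining summands'' of an arbitrary sum of $s$ terms while preserving nonzeroness. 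That divide-and-conquer-on-summands picture comes from bounded top fan-in depth-3 PIT, where the fan-in is a constant; it does not transfer to fan-in $s$. You flag the ``inductive invariant'' across the $O(\log s)$ levels as the main obstacle, but that invariant \emph{is} the theorem: without it the sketch establishes nothing. Likewise, the remark that $g_i^{d_i}$ lies in a ``one-parameter family'' is never converted into a usable hypothesis.

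Forbes's actual proof goes through a complexity measure, not a recursion on summands (the measure is in the title of his paper). The structural fact that powers are nearly closed under differentiation, $\partial(g^e) = e\,g^{e-1}\,\partial(g)$ with $\deg(\partial g) < d$, shows that the shifted partial derivatives of each term $f_i\cdot g_i^{d_i}$ span a space of dimension $\poly(s)^{O(d)}$, hence so does the sum of $s$ such terms; call this dimension $r$. One then applies an explicit hitting set for polynomials whose shifted-partial/coefficient dimension is at most $r$, of size $n^{O(\log r)}$. The exponent $O(d\log s)$ in the statement is precisely $O(\log r)$ for $r = \poly(s)^{O(d)}$ --- it arises from taking the logarithm of a dimension bound, not from $\log s$ rounds of halving a sum. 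If you want a complete proof, the steps you must supply are exactly this dimension bound and a hitting set keyed to that measure; the Kronecker/SV scaffolding on its own does not yield either.
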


We will also crucially use the following lemma that gives an algorithm to obtain the coefficient vector of a polynomial from an algebraic formula computing it. In our setting, we invoke this algorithm only for low degree polynomials, and in that case, we can tolerate the runtime of this algorithm within our budget. 
\begin{lemma}[Interpolating a low degree multivariate polynomial]\label{lem: multivariate interpolation}
There is a deterministic algorithm that, when given a parameter $d$ and an $n$ variate algebraic formula $C \in \Q[\vecx]$ of size at most $s$, bit-complexity at most $b$ and degree at most $d$, outputs the coefficient vector of the polynomial computed by $C$. 

The algorithm runs in time $\poly(s, b, n^d)$.
\end{lemma}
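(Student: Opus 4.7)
The plan is to carry out straightforward multivariate interpolation by reducing the problem to solving a linear system. Let $f \in \Q[\vecx]$ be the polynomial computed by $C$. Since $\deg f \leq d$, the number of potentially non-zero coefficients is $N := \binom{n+d}{d} = n^{O(d)}$, so there are only $n^{O(d)}$ unknowns to determine. Write $f(\vecx) = \sum_{\vece\,:\, |\vece|\leq d} c_{\vece}\, \vecx^{\vece}$, and let the goal be to recover the vector of coefficients $\inparen{c_{\vece}}$.

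The first step is to choose the set of evaluation points. I would use the hitting set $\mathcal{H}(d,n)$ from \cref{defn:low-hamming-weight-hitting-set}: it has size $n^{O(d)}$, and each point in it has bit-complexity $\poly(d)$. For every $\veca \in \mathcal{H}(d,n)$, I would compute $C(\veca)$ using the straightforward recursive evaluation of the formula. By \cref{lem:evaluation-bit-complexity}, each such evaluation is a rational of bit-complexity at most $b \cdot \poly(d)$, and can be computed in time polynomial in the formula size $s$ and this bit-complexity. The total cost of the evaluation phase is therefore $|\mathcal{H}(d,n)| \cdot \poly(s, b, d) = \poly(s, b, n^d)$.

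Next, form the matrix $M$ indexed by pairs $(\veca, \vece)$ with $\veca \in \mathcal{H}(d,n)$ and $|\vece|\leq d$, with entries $M_{\veca, \vece} = \veca^{\vece}$, and the evaluation vector $\vec{v}$ with $v_{\veca} = C(\veca)$. Then the coefficient vector $\vec{c}$ of $f$ satisfies $M\vec{c} = \vec{v}$. To see that this system determines $\vec{c}$ uniquely, suppose $M\vec{c}\,' = \vec{v}$ for some other $\vec{c}\,'$; then $\vec{c} - \vec{c}\,'$ is the coefficient vector of a polynomial of degree at most $d$ that vanishes on all of $\mathcal{H}(d,n)$, and by \cref{lem: non-zeros of low degree polynomials} this polynomial is identically zero. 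Thus $M$ has full column rank and the system has the unique solution $\vec{c}$.

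All that remains is to solve the linear system by Gaussian elimination over $\Q$. The matrix $M$ has $n^{O(d)}$ rows and columns, and its entries together with the entries of $\vec{v}$ have bit-complexity $\poly(s, b, d)$. Standard bounds on the bit-complexity of intermediate quantities during Gaussian elimination (for example, by Cramer's rule the solution entries are ratios of $N \times N$ subdeterminants) imply that the whole linear algebra step runs in time $\poly(s, b, n^d)$, matching the target bound. I do not expect any serious obstacle here: the non-trivial structural input is the hitting-set property of $\mathcal{H}(d,n)$, which has already been established, and everything else is routine linear algebra over $\Q$.
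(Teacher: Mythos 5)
Your proposal is correct and follows essentially the same route as the paper's own proof: evaluate the formula on the low-Hamming-weight set $\mathcal{H}(d,n)$, use \cref{lem: non-zeros of low degree polynomials} to conclude the resulting linear system on the coefficients has a unique solution, and solve it with standard linear algebra over $\Q$, with bit-complexity controlled via \cref{lem:evaluation-bit-complexity}. Your write-up is in fact slightly more explicit than the paper's about why the system has full column rank and about the bit-complexity of the Gaussian elimination step.
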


\noindent (Proof deferred to \cref{sec:appendix:deferred_proofs}.)

\subsection{Deterministic divisibility testing and PIT}\label{section: div test and PIT}

\begin{definition}[Pseudo-quotients]
  \label{defn:pseudo-quotient}
  Let $f,g \in \Q[\vecx]$ be non-zero polynomials with $g(\veczero) = \beta \neq 0$. The \emph{pseudo-quotient of $f$ and $g$} is defined as
  \[
    \homog_{\leq d_f - d_g}\inparen{\inparen{\frac{f(\vecx)}{\beta}} \cdot (1 + \tilde{g} + \tilde{g}^2 + \cdots + \tilde{g}^{d_f - d_g})}
  \]
  where $d_f = \deg(f)$, $d_g = \deg(g)$ and $\tilde{g} = 1 - \frac{g}{\beta}$.

  More generally, if $\vecalpha \in \Q^n$ is such that $g(\vecalpha) \neq 0$, the \emph{pseudo-quotient of $f$ and $g$ translated by $\vecalpha$} is defined as the pseudo-quotient of $f(\vecx + \vecalpha)$ and $g(\vecx + \vecalpha)$.
\end{definition}

The following lemma immediately follows from the above definition and \cref{lem:computing-hom-components}.

\begin{lemma}[Complexity of pseudo-quotients]
  \label{lem:complexity-pseudoquotient}
  Suppose $k \geq 1$ and $f(\vecx) \in (\SP)^{(k)}$ and $g(\vecx) \in \LowDeg_d$ of sizes at most $s_1, s_2$ respectively, and suppose $g(\veczero) \neq 0$. Then, the pseudo-quotient of $f,g$ is computable by the $\mathcal{C}$-formulas of size at most $\poly(s_1,s_2)$, where $\mathcal{C} = \Sigma \inparen{(\SP)^{(k)} \cdot (\LowDeg_d)^\ast}$.
\end{lemma}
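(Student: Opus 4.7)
The plan is to write out the definition of the pseudo-quotient and show that it is (up to taking a low-degree homogeneous projection) already an element of the class $\mathcal{C} = \Sigma((\SP)^{(k)} \cdot (\LowDeg_d)^\ast)$ essentially by inspection. Concretely, set $\beta = g(\veczero)$, $\tilde{g} = 1 - g/\beta$, $d_f = \deg(f)$, $d_g = \deg(g)$, and $D = d_f - d_g$. The pseudo-quotient is by definition
\[
\homog_{\leq D}\!\left( \frac{f(\vecx)}{\beta} \cdot \sum_{i=0}^{D} \tilde{g}^{\,i} \right) \;=\; \homog_{\leq D}\!\left( \sum_{i=0}^{D} \frac{f(\vecx)}{\beta}\cdot \tilde{g}^{\,i} \right).
\]
Each summand is literally of the form $f_i \cdot g_i^{e_i}$ with $f_i = f/\beta$ (a $(\SP)^{(k)}$-formula of size $O(s_1)$ obtained just by scaling the top layer of $f$), $g_i = \tilde{g}$ (a degree-$\leq d$ polynomial of size $O(s_2)$), and $e_i = i$. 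So the polynomial \emph{before} the homogeneous projection is already a $\mathcal{C}$-formula.

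Next I would bound the number of summands and the sizes. Since $f$ is computed by a formula of size $s_1$, we have $d_f \leq s_1$ and hence $D+1 \leq s_1+1$, so the inner sum has at most $s_1+1$ terms. Using the first part of \cref{obs:ugly-model-sums-products}, the inner sum is a $\mathcal{C}$-formula of size $\poly(s_1,s_2)$. Finally, to handle the outer $\homog_{\leq D}$, I write it as $\sum_{i=0}^{D}\homog_i[\cdot]$ and invoke \cref{lem:computing-hom-components}, which is explicitly stated for the class $\mathcal{C} = \Sigma\inparen{(\SP)^{(k)} \cdot (\LowDeg_d)^\ast}$ and tells us that each $\homog_i$ blows up the size by only a polynomial factor in the size and the total degree. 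The total degree of the expression inside $\homog_{\leq D}$ is at most $d_f + d\cdot D \leq s_1(1+d) = \poly(s_1,s_2)$, so the blow-up stays polynomial, and summing the $D+1 \leq s_1+1$ homogeneous components keeps the final bound at $\poly(s_1,s_2)$ by \cref{obs:ugly-model-sums-products} again.

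There is essentially no main obstacle: every ingredient is already isolated in the preceding lemmas, and the entire argument is a bookkeeping exercise. The only point worth being careful about is the switch from $\uglymodel{k}{d}$ (as stated in \cref{obs:ugly-model-sums-products} and \cref{lem:computing-hom-components}) to the notation $\Sigma((\SP)^{(k)} \cdot (\LowDeg_d)^\ast)$ used in the statement here; these coincide by \cref{defn:ugly-model}, so the prior results apply verbatim. This yields the claimed $\poly(s_1,s_2)$ size bound on the pseudo-quotient formula.
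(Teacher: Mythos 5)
Your proof is correct and matches the paper's intended argument: the paper dispatches this lemma in one line as "immediately follows from the definition and \cref{lem:computing-hom-components}," and your write-up is exactly the bookkeeping that one line compresses. The only detail worth double-checking, the bound $d_f \le s_1$ on the number of summands, is standard for formulas (degree is at most the number of leaves), so the argument goes through.
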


\begin{theorem}[Divisibility testing to PIT \cite{Forbes15}]\label{thm: divisibility testing to pit}
Let $f(\vecx)$ and $g(\vecx)$ be non-zero $n$-variate polynomials over a field $\Q$ such that $g(\mathbf{0}) = \beta \neq 0$. Then, $g$ divides $f$ if and only if the polynomial $R(\vecx)$ defined as  
\[
R(\vecx) := f(\vecx) - g(\vecx) Q(\vecx)
\]
is identically zero, where $Q(\vecx)$ is the pseudo-quotient of $f$ and $g$. 
\end{theorem}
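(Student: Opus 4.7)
The plan is to treat the two directions separately, with the reverse direction essentially free and the forward direction following from a short formal power series calculation. If $R \equiv 0$, then $f = g \cdot Q$ by the definition of $R$, and since $Q$ is by construction a polynomial, this immediately exhibits $g$ as a divisor of $f$.

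The substantive step is to show that $g \mid f$ forces $R \equiv 0$. The key idea is that over formal power series we have $1/(g/\beta) = 1/(1 - \tilde{g}) = \sum_{i \geq 0} \tilde{g}^i$, and the truncation in the definition of $Q$ is calibrated precisely to recover $f/g$ exactly whenever $f/g$ is already a polynomial of degree at most $d_f - d_g$. Concretely, I would use the finite geometric series identity
\[
  1 + \tilde{g} + \tilde{g}^2 + \cdots + \tilde{g}^{d_f - d_g} \;=\; \frac{1 - \tilde{g}^{d_f - d_g + 1}}{1 - \tilde{g}} \;=\; \frac{\beta\,(1 - \tilde{g}^{d_f - d_g + 1})}{g(\vecx)}
\]
to rewrite $\frac{f}{\beta} \cdot (1 + \tilde{g} + \cdots + \tilde{g}^{d_f - d_g})$ as $\frac{f}{g}\bigl(1 - \tilde{g}^{d_f - d_g + 1}\bigr)$.

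Assuming $g \mid f$, I would then write $f = g \cdot p$ with $p \in \Q[\vecx]$ of degree at most $d_f - d_g$, so that the above expression simplifies cleanly to $p - p \cdot \tilde{g}^{d_f - d_g + 1}$. Since $\tilde{g}(\veczero) = 0$, every monomial of $\tilde{g}^{d_f - d_g + 1}$ has degree at least $d_f - d_g + 1$, and hence every monomial of $p \cdot \tilde{g}^{d_f - d_g + 1}$ has degree strictly greater than $d_f - d_g$. Applying $\homog_{\leq d_f - d_g}$ therefore annihilates the error term and leaves $p$ untouched (since $\deg(p) \leq d_f - d_g$), yielding $Q = p = f/g$ and thus $R = f - gQ \equiv 0$.

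The main obstacle, if it can be called one, is simply verifying that the truncation threshold $d_f - d_g$ used in the definition of $Q$ is correctly calibrated: large enough that the error term $p \cdot \tilde{g}^{d_f - d_g + 1}$ is killed by the $\homog_{\leq d_f - d_g}$ operator, yet not so large that it disturbs the true quotient $p$. Both conditions are guaranteed by the hypotheses $\tilde{g}(\veczero) = 0$ and $\deg(f/g) \leq d_f - d_g$, so the argument is essentially a careful unpacking of the definition of the pseudo-quotient.
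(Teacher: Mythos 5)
Your proof is correct. The paper does not reproduce an argument for this theorem (it is cited directly from Forbes), but your derivation --- using the finite geometric series identity to rewrite the pseudo-quotient as $\frac{f}{g}\bigl(1 - \tilde{g}^{\,d_f-d_g+1}\bigr)$ and observing that $\tilde{g}(\veczero)=0$ makes the error term vanish under $\homog_{\leq d_f-d_g}$ --- is exactly the intended standard argument, and the trivial converse is handled correctly.
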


An immediate consequence of this theorem is the following corollary that takes into account the depth of an algebraic formula computing the polynomial $R(\vecx)$ given above, assuming that $f$ and $g$ themselves can be computed by a low depth formula. 
\begin{corollary}[Divisibility testing to PIT for constant depth formulas \cite{Forbes15}]\label{cor: divisibility to pit constant depth}
  Suppose $f(\vecx)$ is a non-zero $n$-variate polynomial computed by a $(\SP)^{(k)}$ formula of size $s$, and suppose $g(\vecx)$ is a polynomial of degree at most $d$ with $g(\veczero) = \beta \neq 0$. Then, we can test if $g$ divides $f$ in time $T(k,d,s')$ where $s' = \poly(s,d)$ and $T(k,d,s)$ is the time required to test polynomial identities of the size $s$ expressions of the form
  \[
    \Sigma \inparen{(\SP)^{(k)} \cdot (\LowDeg_d)^\ast}.
  \]
\end{corollary}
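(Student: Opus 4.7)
The plan is to view this corollary as essentially the combination of three ingredients already assembled in the section: the reduction of divisibility to a single PIT query on the ``remainder'' polynomial $R(\vecx) = f(\vecx) - g(\vecx)Q(\vecx)$ (from \cref{thm: divisibility testing to pit}), the size bound on the pseudo-quotient $Q$ in the class $\uglymodel{k}{d}$ (from \cref{lem:complexity-pseudoquotient}), and the closure properties of this class under sums and products (\cref{obs:ugly-model-sums-products}). So the whole statement should reduce to certifying that the ``certificate'' $R$ lives in $\uglymodel{k}{d}$ with size polynomially related to the original input sizes, after which we merely invoke the assumed PIT black-box $T(k,d,\cdot)$.

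More concretely, I would proceed as follows. First, invoke \cref{thm: divisibility testing to pit} to replace the divisibility question ``$g \mid f$?'' by the identity question ``$R \equiv 0$?'' where $R = f - g\cdot Q$ and $Q$ is the pseudo-quotient of $f$ and $g$ (which is well defined since $g(\veczero) = \beta \neq 0$). Next, bound the formula complexity of $R$: by \cref{lem:complexity-pseudoquotient}, the pseudo-quotient $Q$ is computable by a $\uglymodel{k}{d}$-formula of size $\poly(s,d)$, since $f \in (\SP)^{(k)}$ has size $s$ and $g \in \LowDeg_d$. The low-degree polynomial $g$ itself trivially fits in $\uglymodel{k}{d}$ (it can be written as $1 \cdot g^1$, which is a single summand of the required shape), and so does $f$ (by taking $1$ as the low-degree factor). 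Applying the product closure in \cref{obs:ugly-model-sums-products} to $g\cdot Q$ and then the sum closure to $f - g\cdot Q$ gives a $\uglymodel{k}{d}$-formula for $R$ of size $s' = \poly(s,d)$, with bit-complexity also polynomial in the input bit-complexity.

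Finally, call the assumed PIT routine on this formula for $R$; this takes time $T(k,d,s')$ by definition. Correctness is immediate from \cref{thm: divisibility testing to pit}: $g \mid f$ if and only if $R \equiv 0$, and our PIT subroutine decides precisely that.

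I do not anticipate a substantive obstacle here beyond careful bookkeeping: the corollary is essentially a packaging statement. The one place that requires a moment's thought is verifying that multiplying the $\uglymodel{k}{d}$-formula for $Q$ by the low-degree polynomial $g$ does not escape the class; but this is exactly what \cref{obs:ugly-model-sums-products} guarantees (a two-fold product, so the blowup $s^{O(t)}$ specializes to $\poly(s')$), so no new estimate is needed. Everything else is syntactic combination of results already established in \cref{section: div test and PIT}.
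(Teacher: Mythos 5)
Your proposal is correct and follows essentially the same route as the paper: reduce to testing $R = f - gQ \equiv 0$ via \cref{thm: divisibility testing to pit}, bound $Q$ using \cref{lem:complexity-pseudoquotient}, and show $g\cdot Q$ stays in $\uglymodel{k}{d}$ with only polynomial size blowup. The only cosmetic difference is that the paper applies Fischer's identity (\cref{lem:fischers-trick}) directly to rewrite each term $g\cdot g_i^{e_i}$ as a sum of powers of degree-$\le d$ polynomials, whereas you invoke the product-closure statement \cref{obs:ugly-model-sums-products}, whose own proof is exactly that Fischer-identity argument.
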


\noindent (Proof deferred to \cref{sec:appendix:deferred_proofs}.)

\begin{theorem}[\cite{Forbes15}]\label{thm: det algo for low degree dividing sparse}
Let $\F$ be any sufficiently large field. Then, there is a deterministic algorithm that takes an input two polynomials $f$ and $g$ and parameters $d, D, n, s$, where $f$ is an $n$-variate polynomial of degree at most $D$ and sparsity $s$; $g$ is an $n$ variate polynomial of degree $d$, and outputs whether $g$ divides $f$ or not in time $\exp(O(d\log^2 snDd))$.
\end{theorem}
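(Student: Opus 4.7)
My plan is to combine three ingredients that are already in place: the reduction from divisibility testing to PIT via pseudo-quotients (\cref{cor: divisibility to pit constant depth}), the fact that for a sparse $f$ the pseudo-quotient lives inside $\uglymodel{1}{d}$ (\cref{lem:complexity-pseudoquotient} with $k=1$), and the explicit hitting set for precisely this class provided by \cref{thm: forbes quasipoly pit for resultant}.

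First I would reduce to the case $g(\veczero)\neq 0$, which is the hypothesis required by \cref{cor: divisibility to pit constant depth}. If $g(\veczero)=0$, use the low-Hamming-weight set $\mathcal{H}(d,n)$ from \cref{lem: non-zeros of low degree polynomials} and enumerate its $n^{O(d)}$ points in time $\poly(n^{d},d)$ to find an $\vecalpha$ with $g(\vecalpha)\neq 0$, then translate: $\tilde{f}(\vecy):=f(\vecy+\vecalpha)$ and $\tilde{g}(\vecy):=g(\vecy+\vecalpha)$. Clearly $g\mid f$ iff $\tilde{g}\mid\tilde{f}$. Since $\vecalpha$ has at most $d$ nonzero coordinates, each monomial of $f$ expands to at most $(D+1)^d$ monomials after translation, so $\tilde{f}$ has sparsity at most $s\cdot (D+1)^d$, while $\tilde g$ still has degree $\leq d$ and satisfies $\tilde g(\veczero)\neq 0$.

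Next I would apply \cref{cor: divisibility to pit constant depth} with $k=1$: testing $\tilde{g}\mid\tilde{f}$ reduces to checking whether $R(\vecy):=\tilde{f}(\vecy)-\tilde{g}(\vecy)\cdot Q(\vecy)$ is identically zero, where $Q$ is the pseudo-quotient of $\tilde f$ and $\tilde g$. By \cref{lem:complexity-pseudoquotient}, $Q$ is computable by an $\uglymodel{1}{d}$-formula of size polynomial in the sparsity of $\tilde f$ and in $\deg g$; by \cref{obs:ugly-model-sums-products}, multiplying by $\tilde g$ and subtracting from the sparse $\tilde f$ keeps the overall expression $R$ inside $\uglymodel{1}{d}$ with only a polynomial blowup in size. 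I then feed $R$ into the explicit hitting set of \cref{thm: forbes quasipoly pit for resultant}, evaluate the formula on each point (each such evaluation reduces to $\poly(D)$ evaluations of $f$ and of $g$ at appropriately scaled-and-shifted points, so the expanded sparse form of $\tilde{f}$ need never be written down), and declare $g\mid f$ precisely when every evaluation vanishes.

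The main technical obstacle I anticipate is bookkeeping the sparsity blowup under translation: the factor $(D+1)^d$ propagates into the sparsity parameter of the $\uglymodel{1}{d}$-formula for $R$, and one must check that the resulting hitting set size $\poly(s(D+1)^d)^{O(d\log(s(D+1)^d))}$, together with the per-point evaluation cost, fits inside the claimed bound $\exp(O(d\log^2(snDd)))$. Correctness is then immediate by chaining the two results: \cref{cor: divisibility to pit constant depth} guarantees $R\equiv 0$ iff $\tilde g\mid \tilde f$, and \cref{thm: forbes quasipoly pit for resultant} guarantees that any non-zero $R$ in $\uglymodel{1}{d}$ is exposed somewhere on the hitting set.
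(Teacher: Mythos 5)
This theorem is imported verbatim from Forbes \cite{Forbes15} and the paper itself gives no proof of it, but your reconstruction---shift so that $g$ has nonzero constant term using $\mathcal{H}(d,n)$, reduce divisibility to testing whether the pseudo-quotient expression $R\in\uglymodel{1}{d}$ vanishes, and evaluate on the hitting set of \cref{thm: forbes quasipoly pit for resultant}---is exactly how the cited result is proved and is the same assembly of ingredients the paper uses elsewhere (e.g.\ in \cref{cor: divisibility to pit constant depth} and the analysis of Algorithm 1). The one caveat is the bookkeeping you flag yourself: the sparsity blowup $s\mapsto s\cdot(D+1)^d$ under the low-Hamming-weight shift makes your route give roughly $\exp\bigl(O(d^3\log^2(snDd))\bigr)$ rather than the stated $\exp\bigl(O(d\log^2(snDd))\bigr)$, which is harmless for every use in this paper (where $d$ is a constant) but means recovering the exact $d$-dependence of the stated bound requires Forbes' sharper accounting.
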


\subsection{Resultants}\label{subsection: resultants}

\begin{definition}[The Resultant]\label{defn: resultant}
  Let $\mathcal{R}$ be a commutative ring. Given polynomials $g$ and $h$ in $\Ring[y]$, where:
  \begin{align*}
    g(y) & = g_0 + \cdots + y^d \cdot g_d \\
    h(y) & = h_0 + y \cdot h_1  + \cdots + y^D \cdot h_D
  \end{align*}
  with $g_d$ and $h_D \neq 0$ the \emph{Resultant} of $g$ and $h$, denoted by $\Res{y}{g}{h}$, is the determinant of the $(D + d) \times (D + d)$ Sylvester matrix $\Gamma$ of $g$ and $h$, given by: 
  \[\Gamma = 
    \begin{bmatrix}
        h_0  & h_1    & \dots  &        & h_{D} &        &       \\
            & \ddots & \ddots &        & \ddots & \ddots &       \\
            &        & h_0    & h_1    &        &  \dots & h_{D} \\
      g_{0}  & \dots  &        & g_{d}  &        &        &       \\
            &g_{0}   & \dots  &        & g_{d}  &        &       \\
            &        & \ddots & \ddots &        & \ddots &       \\
            &        &        & g_{0}  & \dots &        &g_{d} 
    \end{bmatrix}
  \]
\end{definition}

\begin{lemma}[Resultant and $\gcd$ (Corollary 6.20 \cite{GG13})]\label{lem: resultant property}
    Let $\Ring$ be a unique factorization domain and $g, h \in \Ring[y]$ be non-zero polynomials. Then: $$\deg_y(\gcd(g,h)) > 0 \iff \Res{y}{g}{h} = 0$$
    where $\gcd(g,h) \in \Ring[y]$ and $\Res{y}{g}{h} \in \Ring$.
\end{lemma}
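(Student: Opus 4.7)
The plan is to interpret the Sylvester determinant as a linear-algebraic witness for a non-trivial $\Ring[y]$-relation $u h + v g = 0$ with suitably bounded degrees, and then translate such a relation into the existence (or non-existence) of a common factor of positive degree. Since $\Ring$ is a UFD, its fraction field $\mathbb{F} = \operatorname{Frac}(\Ring)$ is well defined, $\mathbb{F}[y]$ is a PID, and by Gauss's lemma a common factor of positive degree in $\mathbb{F}[y]$ can be cleared of denominators to give one in $\Ring[y]$. This lets me freely pass between the ring and the field.

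First I would observe that the rows of $\Gamma$ are precisely the coefficient vectors of $h, yh, \ldots, y^{d-1} h$ and $g, yg, \ldots, y^{D-1} g$, written in the basis $1, y, \ldots, y^{D+d-1}$. Therefore $\det(\Gamma) = 0$ if and only if these $D+d$ vectors are linearly dependent over $\mathbb{F}$, which (after clearing denominators) is equivalent to the existence of polynomials $u,v \in \Ring[y]$, not both zero, with $\deg u < d$ and $\deg v < D$, satisfying
\[
  u(y)\cdot h(y) + v(y) \cdot g(y) \;=\; 0.
\]

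For the direction $(\Rightarrow)$, assume $p := \gcd(g,h)$ has positive degree in $y$, and write $g = p \cdot g_1$, $h = p \cdot h_1$ in $\Ring[y]$. Then $u := -g_1$ and $v := h_1$ are nonzero with $\deg u < d$, $\deg v < D$, and $u h + v g = -g_1 \cdot p h_1 + h_1 \cdot p g_1 = 0$, producing exactly the linear dependence above, so $\Res{y}{g}{h} = 0$. For the direction $(\Leftarrow)$, suppose such a relation $u h = -v g$ exists and assume for contradiction that $\gcd(g,h)$ has degree $0$; by Gauss's lemma $g$ and $h$ are then coprime in $\mathbb{F}[y]$ as well. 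Since $\mathbb{F}[y]$ is a UFD and $g \mid u h$ with $\gcd(g,h) = 1$, Euclid's lemma forces $g \mid u$ in $\mathbb{F}[y]$; but $\deg u < \deg g$, so $u = 0$, and then $v g = 0$ in the domain $\Ring[y]$ forces $v = 0$, contradicting the non-triviality of the relation. Hence $\gcd(g,h)$ must have positive degree in $y$.

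The only place that requires genuine care is the passage between $\Ring[y]$ and $\mathbb{F}[y]$: linear dependence of the Sylvester rows is over the fraction field, while the conclusion about $\gcd$ is stated over $\Ring$. I would handle this by clearing denominators when moving from $\mathbb{F}$-linear combinations to $\Ring$-coefficient relations, and by invoking Gauss's lemma to ensure that primitivity and coprimality behave consistently in both rings. Once this is pinned down, the rest of the argument is the standard Sylvester-matrix linear algebra sketched above.
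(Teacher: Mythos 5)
Your proof is correct: the identification of the Sylvester rows with the coefficient vectors of $y^i h$ and $y^j g$, the translation of $\det(\Gamma)=0$ into a nontrivial relation $uh+vg=0$ with $\deg u < d$, $\deg v < D$, and the use of Gauss's lemma to pass between $\Ring[y]$ and its fraction field are all sound, and the degree bookkeeping works because the definition requires $g_d, h_D \neq 0$. The paper itself gives no proof and simply cites Corollary 6.20 of von zur Gathen--Gerhard, whose argument is essentially the one you have written out, so there is nothing to reconcile.
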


In this paper, $\Ring$ will be $\Q[\vecx]$ (which is a unique factorization domain), and $\Res{y}{g}{h}$ will denote the resultant of $g, h \in \Q[\vecx][y]$ when considered as polynomials in $\Ring[y]$. We might also occasionally refer to it as the \emph{y-resultant} of $g$ and $h$. For more details about the resultant as well as a proof of the above lemma, we refer the reader to von zur Gathen and Gerhard's book on computer algebra (Chapter 6, \cite{GG13}). We mention a simple observation from the above definition that would be useful for this paper. 

\begin{observationwp}[Resultant under substitutions]
  \label{obs:resultant-under-substitution}
  Suppose $g(\vecx, y) = g_0(\vecx) + g_1(\vecx)y + \cdots g_d(\vecx) y^d$ and $h(\vecx,y) = h_0(\vecx) + h_1(\vecx)y + \cdots + h_D(\vecx) y^D$ with $g_d, h_D \neq 0$. Then, for any $\veca \in \Q^{\abs{\vecx}}$ that ensures $g_d(\veca), h_D(\veca) \neq 0$, we have
  \[
    (\Res{y}{g}{h})(\veca) = \Res{y}{g(\veca,y)}{h(\veca,y)}.\qedhere
  \]
\end{observationwp}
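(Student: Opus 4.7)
The plan is to proceed directly from the definition of the resultant as a determinant of the Sylvester matrix, and exploit the fact that evaluation $\vecx \mapsto \veca$ is a ring homomorphism $\Q[\vecx] \to \Q$.

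First I would observe that when $g$ and $h$ are regarded as elements of $\Ring[y]$ with $\Ring = \Q[\vecx]$, the Sylvester matrix $\Gamma$ from \cref{defn: resultant} has entries in $\Q[\vecx]$, specifically the coefficients $g_i(\vecx)$ and $h_j(\vecx)$ together with zeros. By definition, $\Res{y}{g}{h} = \det(\Gamma)$ is an element of $\Q[\vecx]$, so $(\Res{y}{g}{h})(\veca)$ makes sense and equals $\det(\Gamma)$ evaluated at $\vecx = \veca$. Since the Leibniz formula expresses $\det(\Gamma)$ as a polynomial in the entries of $\Gamma$, and polynomial expressions commute with the evaluation homomorphism, this equals $\det(\Gamma_{\veca})$, where $\Gamma_{\veca}$ is the matrix obtained by entrywise evaluation of $\Gamma$ at $\veca$.

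Next I would verify that $\Gamma_{\veca}$ coincides with the Sylvester matrix of $g(\veca,y)$ and $h(\veca,y)$ as defined in \cref{defn: resultant}. This is where the hypothesis $g_d(\veca), h_D(\veca) \neq 0$ is essential: the definition of the Sylvester matrix depends on the $y$-degrees of the two input polynomials, so we need to know that $g(\veca,y)$ still has $y$-degree exactly $d$ and $h(\veca,y)$ still has $y$-degree exactly $D$. Under this hypothesis the dimensions agree and the entries match position-by-position, so $\det(\Gamma_{\veca})$ is by definition $\Res{y}{g(\veca,y)}{h(\veca,y)}$, yielding the claimed identity.

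There is not really a main obstacle here: the statement is essentially an unfolding of definitions plus the general principle that determinants commute with ring homomorphisms. The only subtle point is the degree-preservation caveat, which the hypothesis is crafted precisely to ensure; without it one would need a more delicate statement accounting for the shrinking of the Sylvester matrix.
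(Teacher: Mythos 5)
Your proof is correct and is exactly the argument the paper has in mind: the paper states this observation without proof as an immediate consequence of \cref{defn: resultant}, and your unfolding---evaluation at $\veca$ is a ring homomorphism, determinants commute with it, and the hypothesis $g_d(\veca), h_D(\veca) \neq 0$ guarantees the specialized Sylvester matrix has the right dimensions---is the intended justification. You also correctly identify the one subtle point (degree preservation), which is precisely why the hypothesis is stated as it is.
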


\subsection{Hensel Lifting}
Now we will state the definition of a \emph{lift} and the main lemma for Hensel lifting. 
For more details, one can look up some of the cited papers or the standard references in computational algebra \cite{KSS15, ST20, GG13, Sudan98}.

\begin{definition}[Hensel lifts]\label{defn: lift}
  Let $\mathcal{I}\subseteq \Q[\vecx,y]$ be an ideal. Let $f, g, h, u, v \in \Q[\vecx,y]$ such that $f \equiv gh \bmod{\mathcal{I}}$ and $ug + vh \equiv 1 \bmod{\mathcal{I}}$. Then, we call $g', h' \in \Q[\vecx,y]$ a \emph{lift} of $g$ and $h$ if:
  \begin{enumerate}
    \item $f \equiv g' h' \bmod{\mathcal{I}^2}$,
    \item $g' \equiv g \bmod{\mathcal{I}}$ and $h' \equiv h \bmod{\mathcal{I}}$, and
    \item $\exists u', v' \in \Q[\vecx,y]$ s.t $u'g' + v'h' \equiv 1 \bmod{\mathcal{I}^2}$.\qedhere
  \end{enumerate}
\end{definition}
For the rest of the section, we define $\Ideal$ to be the ideal $\inangle{x_1, \dots, x_n}$ and $\Ideal_k := \Ideal^{2^k}$.

\begin{lemma}[Iterated monic Hensel lifting (Lemma 3.4 \cite{KSS15})]\label{lem: iterated hensel lifts}
Suppose we're given $f \in \Q[\vecx,y]$ such that $f = gh$, $g$ is monic in $y$ and $\gcd(g,h)=1$. We are also given $g_0, h_0, u_0, v_0 \in \Q[\vecx,y]$ such that $g_0 \equiv g \bmod \Ideal$, $h_0 \equiv h \bmod \Ideal$ and $u_0 g_0 + v_0 h_0 \equiv 1 \bmod \Ideal$.
Then, for all $k\in \N, k\geq 1$, there exist $g_k, h_k, u_k, v_k \in \Q[\vecx,y]$ , with each $g_k$ being monic, such that the following conditions hold:
\begin{enumerate}
  \item The pair $g_{k},h_{k}$ is a lift of $g_{k-1}, h_{k-1}$, with $u_k g_k + v_k h_k \equiv 1 \bmod \Ideal_k$; in particular, $f \equiv g_k h_k \bmod{\mathcal{I}_k}$
  \item $g_k \equiv g \bmod{\mathcal{I}_k}$ and $h_k \equiv h \bmod{\mathcal{I}_k}$ 
\end{enumerate}
Moreover, for each $k$, $g_k$ and $h_k$ are unique polynomials modulo $\mathcal{I}_k$ satisfying the above conditions when the $g_k$s are monic. For each $k$, we will call $g_k, h_k$ the \emph{$k$-th iterated lift of $g_0$, $h_0$}.
\end{lemma}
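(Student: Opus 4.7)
The plan is to proceed by induction on $k$; the base case $k = 0$ is given by hypothesis (with $\mathcal{I}_0 = \mathcal{I}$). For the inductive step, assume $(g_k, h_k, u_k, v_k)$ satisfy all the stated congruences modulo $\mathcal{I}_k$, with $g_k$ monic in $y$ of degree $d := \deg_y g$. Let $e := f - g_k h_k \in \mathcal{I}_k$. Using monicity of $g_k$, perform polynomial division in $y$ to write $v_k e = q \cdot g_k + \delta_g$ with $\deg_y \delta_g < d$; since $e \in \mathcal{I}_k$, both $q$ and $\delta_g$ lie in $\mathcal{I}_k$. Set $\delta_h := u_k e + q h_k \in \mathcal{I}_k$, and define $g_{k+1} := g_k + \delta_g$ and $h_{k+1} := h_k + \delta_h$. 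A direct computation using $u_k g_k + v_k h_k \equiv 1 \bmod{\mathcal{I}_k}$ yields
\[
  g_k \delta_h + h_k \delta_g \;=\; (u_k g_k + v_k h_k)\,e \;\equiv\; e \bmod{\mathcal{I}_{k+1}},
\]
and combining with $\delta_g \delta_h \in \mathcal{I}_k^2 = \mathcal{I}_{k+1}$ gives $g_{k+1} h_{k+1} \equiv f \bmod{\mathcal{I}_{k+1}}$. Since $\deg_y \delta_g < d$, the polynomial $g_{k+1}$ remains monic of $y$-degree $d$. To lift the Bezout certificate, let $\beta := u_k g_{k+1} + v_k h_{k+1} - 1 \in \mathcal{I}_k$ and set $u_{k+1} := u_k (1 - \beta)$ and $v_{k+1} := v_k (1 - \beta)$; then $u_{k+1} g_{k+1} + v_{k+1} h_{k+1} = (1 - \beta)(1 + \beta) = 1 - \beta^2 \equiv 1 \bmod{\mathcal{I}_{k+1}}$ since $\beta^2 \in \mathcal{I}_k^2 = \mathcal{I}_{k+1}$.

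The conclusion $g_{k+1} \equiv g \bmod{\mathcal{I}_{k+1}}$ does not follow from the construction directly; it will instead be deduced from the uniqueness statement, which I prove by an inner induction on $k$. Suppose two iterated monic lifts $(g_k^{(1)}, h_k^{(1)})$ and $(g_k^{(2)}, h_k^{(2)})$ exist at level $k$, and set $\gamma := g_k^{(2)} - g_k^{(1)}$, $\eta := h_k^{(2)} - h_k^{(1)}$. The inner inductive hypothesis gives $\gamma, \eta \in \mathcal{I}_{k-1}$, hence $\gamma\eta \in \mathcal{I}_{k-1}^2 = \mathcal{I}_k$. Subtracting the two factorizations yields $g_k^{(1)} \eta + h_k^{(1)} \gamma \equiv 0 \bmod{\mathcal{I}_k}$. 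Multiplying by $v_k^{(1)}$ and substituting $v_k^{(1)} h_k^{(1)} \equiv 1 - u_k^{(1)} g_k^{(1)} \bmod{\mathcal{I}_k}$ expresses $\gamma$ as a multiple of $g_k^{(1)}$ in $R[y]$, where $R := \Q[\vecx] / \mathcal{I}_k$. Since $g_k^{(1)}$ is monic of $y$-degree $d$ in $R[y]$ and $\deg_y \gamma < d$, the standard degree argument for division by a monic polynomial in an arbitrary commutative ring forces $\gamma \equiv 0 \bmod{\mathcal{I}_k}$; substituting back gives $g_k^{(1)} \eta \equiv 0 \bmod{\mathcal{I}_k}$, and the same monicity argument then yields $\eta \equiv 0 \bmod{\mathcal{I}_k}$. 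Applying this uniqueness statement to the constructed $(g_{k+1}, h_{k+1})$ and to the true pair $(g, h)$ itself---which trivially satisfies the factorization condition modulo $\mathcal{I}_{k+1}$, agrees with $(g_0, h_0)$ modulo $\mathcal{I}$, and admits a Bezout certificate modulo $\mathcal{I}_{k+1}$ obtained by iterating the $(1-\beta)$-update applied to $u_0, v_0$---we conclude $g_{k+1} \equiv g$ and $h_{k+1} \equiv h$ modulo $\mathcal{I}_{k+1}$, completing the inductive step.

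I expect the main obstacle to be the uniqueness step, which requires careful reasoning in the non-domain quotient ring $\Q[\vecx]/\mathcal{I}_k$ and leverages the crucial fact that multiplication by a polynomial that is monic in $y$ is degree-preserving: a multiple of a $y$-monic polynomial of degree $d$ can have $y$-degree less than $d$ only if it vanishes. The construction itself is a largely mechanical algebraic manipulation once one identifies the correct corrections $\delta_g, \delta_h$ (forced by the requirement that $g_{k+1}$ remain monic of the same $y$-degree) and the multiplicative $(1-\beta)$ update for the Bezout certificate.
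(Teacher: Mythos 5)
Your proof is correct; note that the paper itself states this lemma without proof, importing it as Lemma~3.4 of \cite{KSS15}, and your argument is exactly the standard one used there: the corrections $\delta_g,\delta_h$ built from the B\'ezout cofactors with division-with-remainder by the monic $g_k$ to preserve monicity, the multiplicative $(1-\beta)$ update of the B\'ezout certificate, and uniqueness via the fact that a multiple of a monic degree-$d$ polynomial over $\Q[\vecx]/\mathcal{I}_k$ cannot have $y$-degree below $d$ unless it vanishes. The only steps you assert without detail --- that division by a monic polynomial keeps quotient and remainder inside $\mathcal{I}_k$ (seen coefficient-wise), and that both monic lifts have the same $y$-degree $d$ because their leading coefficient $1$ cannot vanish modulo $\mathcal{I}$ --- are both true and routine.
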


If $\deg_\vecx(g) = d$, we can choose an integer $k^*$ such that $d < 2^{k^*} \leq 2d$ and use the above Lemma to get $g_{k^*} \equiv g \bmod{\mathcal{I}_{k^*}}$, which means we can truncate $g_{k^*}$ to degree $d$ and retrieve $g$. The next lemma tells us that this can be done with reasonable bounds on the parameters of the underlying circuits.

\begin{lemma}[Small circuit for Hensel lifting (Lemma 3.6 \cite{KSS15})]\label{lem: hensel lift time}
  Let $f$ be a degree $D$ polynomial in $\Q[\vecx,y]$, computable by a $(\SP)^{(k)}$ formula of size and bit-complexity $s$, with a factorization $f = gh$ such that $\gcd(g,h) = 1$ and $g$ is monic. Let $g_0 = g \bmod{\mathcal{I}}$ and $h_0 = h \bmod{\mathcal{I}}$ be univariates in $\Q[y]$ with $\gcd(g_0,h_0) = 1$.

  Then, there are formulas $C_g, C_h$ of size and bit complexity $(sDk)^{O(k\log D)}$ that compute the $k^\text{th}$ iterated lift $g_k$,$h_k$ of $g_0$,$h_0$, where $g_k$ is monic.  More generally, if the total degree of $g_k$ is at most $d$, then the size and bit complexity of the formula for $g_k$ is at most $(sDk)^{O(\log d)}$. 
  
  Moreover, there is a deterministic algorithm, that when given the formulas for $f$ and $g_0, h_0$ and integer $k$ as input, outputs the formulas for $g_k$ and $h_k$ in time $(sDk)^{O(k\log D)}$ ( resp. $(sDk)^{O(\log d)}$ if $g_k$ has total degree $d$). 
\end{lemma}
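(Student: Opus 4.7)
The plan is to proceed by induction on the iteration index $i$, explicitly implementing the monic Hensel lift recurrence from \cref{lem: iterated hensel lifts} while carefully tracking the size and bit-complexity of the formulas for the sequence $(g_i, h_i, u_i, v_i)$ that the algorithm maintains.

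For the base case, since $g_0, h_0 \in \Q[y]$ are univariates of degree at most $D$ with $\gcd(g_0, h_0) = 1$, we invoke the extended Euclidean algorithm over $\Q[y]$ to obtain $u_0, v_0 \in \Q[y]$ in $\poly(D)$ arithmetic operations. By \cref{lem:evaluation-bit-complexity} the coefficients of $f(\veczero, y)$, and hence those of $g_0, h_0, u_0, v_0$, have bit-complexity $\poly(s, D)$, so the initial data has size and bit-complexity $\poly(s, D)$.

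For the inductive step, given formulas for $(g_i, h_i, u_i, v_i)$ satisfying $f \equiv g_i h_i \pmod{\Ideal_i}$ and $u_i g_i + v_i h_i \equiv 1 \pmod{\Ideal_i}$ with $g_i$ monic in $y$, we perform one Hensel lift via the standard explicit update: compute $e_i := f - g_i h_i$, then carry out $y$-polynomial division of $v_i e_i$ by the monic $g_i$ to obtain $v_i e_i = q_i g_i + r_i$ with $\deg_y r_i < \deg_y g_i$, and set $g_{i+1} := g_i + r_i$ and $h_{i+1} := h_i + u_i e_i + q_i h_i$; the Bezout coefficients $u_{i+1}, v_{i+1}$ are updated analogously to preserve the Bezout identity modulo $\Ideal_{i+1}$. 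All of these operations except the division are a constant number of additions and multiplications of formulas; the one nontrivial step is the $y$-division, which we implement via Newton iteration to compute the inverse power series of the reverse of $g_i$ modulo $y^{O(D)}$. This takes $O(\log D)$ Newton steps of essentially squaring-style updates, giving a per-Hensel-iteration blow-up of $(sDk)^{O(\log D)}$ in both size and bit-complexity. Iterating $k$ times yields $(sDk)^{O(k\log D)}$ as claimed, and the uniqueness portion of \cref{lem: iterated hensel lifts} shows that the formulas produced agree modulo $\Ideal_k$ with the true lifts.

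For the refined bound when $\deg g_k \leq d$, I would additionally truncate every intermediate polynomial to $\vecx$-degree at most $d$ after each iteration, using \cref{lem:computing-hom-components} to extract the low-degree components at only $\poly(d)$ overhead. Since $g$ is determined modulo $\Ideal^{d+1}$, it suffices to run only $k^* = \lceil \log_2(d+1) \rceil$ Hensel iterations, and the polynomial divisions within this truncated regime are between polynomials of $\vecx$-degree $O(d)$ and $y$-degree $O(D)$, reducing the per-iteration overhead to a polynomial factor in $(sDk)$; iterating for $O(\log d)$ steps gives the bound $(sDk)^{O(\log d)}$. The deterministic algorithm simply carries out this recurrence, computing the required formulas step by step. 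The main technical obstacle is obtaining the clean per-iteration bound on the division step: while the algebra is routine, making sure that computing the quotient and remainder of a $y$-division by a monic polynomial of degree at most $D$ blows up the formula size by only a factor of $(sDk)^{O(\log D)}$ (and only polynomially in the truncated regime) is where the heart of the analysis lies, following \cite{KSS15}.
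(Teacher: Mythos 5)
Your proof is a plausible route to the stated bounds, but it is genuinely different from the one the paper actually uses. The paper does not re-derive the Hensel iteration at all: it imports Lemma 3.6 of \cite{KSS15} as a black box to get a \emph{circuit} of size and bit-complexity $\poly(s,D,2^k)$ for $g_k,h_k$, observes that the degree of everything involved can be taken to be $O(D+2^k)$ (resp.\ $d$ for $g_k$ in the refined case), and then applies the depth reduction of Valiant--Skyum--Berkowitz--Rackoff to convert that circuit into a formula of size $\poly(s,D,2^k)^{O(\log \deg)}$, which yields $(sDk)^{O(k\log D)}$ in general and $(sDk)^{O(\log d)}$ when $\deg(g_k)\le d$. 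You instead unroll the monic Hensel recurrence directly as formulas and charge a multiplicative blow-up per iteration coming from the Newton-iteration division step. This is essentially how \cite{KSS15} argue internally for circuits, transplanted to formulas; it is more self-contained (no appeal to VSBR) but pushes all the work into the per-iteration size accounting for division with remainder, which you correctly identify as the crux but do not carry out. The paper's route gets the degree-$d$ refinement for free (depth-reduce to depth $O(\log d)$), whereas in your route that refinement needs an extra argument, namely that $g_k$ is recovered from $g_{k^*}$ with $2^{k^*}>d$ by truncation, which you do supply.

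One point in your sketch deserves care: in the truncated regime you assert the per-iteration overhead drops to a polynomial factor, but the $y$-division is still against dividends of $y$-degree up to $O(D)$, so the Newton iteration still runs for $O(\log D)$ steps; if each step is eagerly truncated modulo $y^{2^{j+1}}$ by interpolation, each step costs a multiplicative $\poly(D)$ and the division alone already costs $D^{O(\log D)}$, which would give $(sDk)^{O(\log d\cdot\log D)}$ rather than $(sDk)^{O(\log d)}$. To get a genuinely polynomial per-iteration factor you should defer the intermediate truncations (Newton iteration is self-correcting modulo $y^{2^{j+1}}$ even if $w_j$ carries higher-order garbage) and truncate in $y$ and in $\Ideal$ only once per Hensel step. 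With that fix your accounting goes through and in fact also tightens the general bound; as stated, the eager-truncation version still fits inside $(sDk)^{O(k\log D)}$, so the main claim is unaffected.
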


\noindent (Proof sketch deferred to \cref{sec:appendix:deferred_proofs}.)

\subsection{Results on polynomial factorization}
We rely on the following two fundamental results on polynomial factorization for our results. The first theorem is a classical algorithm of Lenstra, Lenstra and \Lovasz{} for factoring univariate polynomials over the field of rational numbers. 
  \begin{theorem}[Factorizing polynomials with rational coefficients \cite{LLL82, GG13}]\label{thm: LLL univariate factorization}
    Let $f \in \Q[x]$ be a monic polynomial of degree $d$. Then there is a deterministic algorithm computing all the irreducible factors of $f$ that runs in time $\poly(d, t)$, where $t$ is the maximum bit-complexity of the coefficients of $f$. 
  \end{theorem}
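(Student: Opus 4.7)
The plan is to implement the classical Lenstra--Lenstra--Lov\'{a}sz pipeline, whose crux is a lattice reduction step that replaces an otherwise exponential combinatorial search over subsets of modular factors with a polynomial-time algorithm. First I would use Gauss's lemma to reduce factoring over $\Q$ to factoring over $\Z$: clearing denominators writes $f = c^{-1}\tilde f$ with $\tilde f \in \Z[x]$ primitive and of bit-complexity $O(dt)$, and the irreducible factors of $f$ over $\Q[x]$ correspond to the primitive irreducible factors of $\tilde f$ over $\Z[x]$. I would then reduce to the square-free case by computing $g := \tilde f / \gcd(\tilde f, \tilde f')$ via the subresultant Euclidean algorithm, which keeps intermediate bit-complexity polynomially bounded; multiplicities of irreducible factors in $\tilde f$ are recovered afterwards by repeated exact division.

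Given a square-free $g \in \Z[x]$ of degree at most $d$ and bit-complexity $O(t)$, I would find a small prime $p$ of bit-complexity $O(\log(dt))$ that divides neither the leading coefficient of $g$ nor $\operatorname{disc}(g)$; such a $p$ exists because $\operatorname{disc}(g)$ has bit-complexity $O(dt)$ and hence at most $O(dt/\log(dt))$ distinct prime divisors, and it can be located by trial. Then $g \bmod p$ is square-free and splits into distinct irreducibles in $\F_p[x]$, which are computed deterministically in $\poly(d,p) = \poly(d,t)$ time by Berlekamp's algorithm (exploiting that $p$ is polynomially bounded, which sidesteps the open problem of deterministic univariate factoring over large prime fields). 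Hensel lifting then produces $g \equiv h_1 h_2 \cdots h_r \bmod{p^k}$ for $k = O(d+t)$, chosen large enough that $p^k$ exceeds Mignotte's bound on the coefficient sizes of any integer factor of $g$.

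The heart of the argument is recovering true $\Z[x]$-factors from the $p$-adic factors. Trying all $2^r$ subsets of $\{h_1,\ldots,h_r\}$ is worst-case exponential, so instead, for each $h_i$ I would construct the integer lattice $L_i$ of coefficient vectors of all polynomials in $\Z[x]$ of degree at most a carefully chosen $m$ that are divisible by $h_i$ modulo $p^k$, with explicit basis $\{p^k, p^k x, \ldots, p^k x^{\deg h_i - 1}, h_i, x h_i, \ldots, x^{m - \deg h_i} h_i\}$. The key algebraic fact, due to LLL, is that once $p^k$ exceeds Mignotte's bound, the unique irreducible factor of $g$ over $\Z[x]$ divisible by $h_i$ modulo $p^k$ appears as the essentially shortest nonzero vector of $L_i$, and moreover beats any non-multiple by a gap of $2^{\Omega(d)}$. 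Running the LLL basis reduction algorithm on $L_i$ returns a vector whose Euclidean norm is within a factor $2^{O(d)}$ of the true minimum, and by the gap this suffices to identify the true factor; LLL itself runs in time $\poly(d, \log p^k) = \poly(d, t)$.

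The main obstacle I anticipate is not any single step but the bit-complexity bookkeeping that glues them together: keeping $p$ polynomially bounded so Berlekamp is efficient, keeping $k$ polynomially bounded so the Hensel lift and lattice entries stay small, and verifying that the LLL approximation ratio $2^{O(d)}$ is comfortably beaten by the Mignotte-bound gap separating true factors from non-factors. Once these bounds are in place, iterating over the $h_i$'s (or equivalently peeling off one irreducible factor at a time and updating the Hensel data) yields all irreducible factors of $g$, and unrolling the Gauss and square-free reductions yields the complete factorization of $f$ in time $\poly(d,t)$.
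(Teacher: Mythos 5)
Your outline is a faithful reconstruction of the standard Lenstra--Lenstra--Lov\'{a}sz argument (Gauss's lemma, square-free reduction, a small good prime with deterministic Berlekamp, Hensel lifting past Mignotte's bound, and lattice basis reduction to recover true factors), which is precisely the proof in the references \cite{LLL82, GG13} that the paper cites for this theorem without reproving it. The steps and the bit-complexity bookkeeping you flag are the right ones, so there is nothing to add beyond pointing you to those sources for the quantitative details of the gap argument in the lattice step.
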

The second result we need is an easy consequence of the results of Kopparty, Saraf and Shpilka \cite{KSS15}. They showed that an efficient deterministic algorithm for PIT for algebraic circuits implies an efficient deterministic algorithm for polynomial factorization. The formal statement below essentially invokes this for constant degree polynomials. In this case, the PIT instances also happen to be of constant degree and hence can be easily solved in time that is polynomial in the length of the coefficient vector of these polynomials. 
\begin{theorem}[\cite{KSS15}]\label{thm:det low deg factorization}
There is a deterministic algorithm that when given as input the coefficient vector of an $n$ variate polynomial $f(\vecx) \in \Q[\vecx]$ of total degree $d$, runs in time $n^{O(d^2)}$ and decides if $f$ is irreducible or not. 
\end{theorem}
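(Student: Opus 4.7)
The plan is to apply (a low-degree-tailored version of) the Kopparty--Saraf--Shpilka reduction from factorization to PIT \cite{KSS15} and observe that, in our setting, every intermediate PIT instance is on a polynomial whose degree is bounded by some polynomial in $d$. Since $f$ and any polynomial obtained from $f$ by bounded applications of composition with low-degree substitutions, multiplication, division with remainder, resultant/discriminant computations, and extraction of homogeneous components remains of degree $\poly(d)$, its coefficient vector has size $\binom{n + \poly(d)}{\poly(d)} = n^{O(d^2)}$. We can therefore perform each PIT call by simply computing the coefficient vector explicitly from the input coefficient vector of $f$ and checking whether it is zero. Once a full list of candidate factors is produced, $f$ is declared irreducible iff none is a non-trivial divisor.

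Concretely, I would proceed as follows. First, brute-force over $\vecalpha \in \mathcal{H}(d,n)$ (see \cref{defn:low-hamming-weight-hitting-set}, of size $n^{O(d)}$) to find a translation $x_i \mapsto x_i + \alpha_i y$ that makes $f$ monic in $y$; the relevant non-vanishing condition is a single coefficient of a degree-$d$ polynomial and is verified from the coefficient vector. Next, search $\mathcal{H}(D,n)$ for $D = O(d^2)$ for a point $\vecdelta$ at which $f(\vecdelta,y)$ is squarefree, using \cref{lem: non-zeros of low degree polynomials} applied to the $y$-discriminant of $f$ (a polynomial of degree $O(d^2)$ whose coefficient vector can be written down explicitly in time $n^{O(d^2)}$). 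Then apply LLL (\cref{thm: LLL univariate factorization}) to factor the univariate $f(\vecdelta,y)$; for each of the $2^{O(d)}$ ways of grouping these irreducible factors into a coprime pair $(g_0,h_0)$, perform $O(\log d)$ rounds of monic Hensel lifting (\cref{lem: iterated hensel lifts}, \cref{lem: hensel lift time}) to obtain a candidate factor $g$ of degree at most $d$, explicitly interpolate its coefficient vector (in time $n^{O(d)}$ using \cref{lem: multivariate interpolation}), and test $g \mid f$ via the pseudo-quotient identity of \cref{thm: divisibility testing to pit}, which again reduces to a PIT on a polynomial of degree $O(d)$ and is done by comparing coefficient vectors. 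If any such $g$ is a non-trivial divisor, output ``reducible''; otherwise output ``irreducible''.

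The main bottleneck is step two, where the $y$-discriminant has degree $O(d^2)$ and hence coefficient vector of size $n^{O(d^2)}$; writing this polynomial down from the coefficient vector of $f$ and then evaluating it at up to $n^{O(d^2)}$ points of the hitting set is what drives the overall runtime to $n^{O(d^2)}$. Every other cost is dominated: the Hensel-lifting branching factor is $2^{O(d)}$, each lifted candidate has $n^{O(d)}$ coefficients to interpolate, and the final divisibility checks involve only degree-$O(d)$ polynomials. The only subtle point is ensuring that each intermediate object (the translated polynomial, the discriminant, the pseudo-quotients, the interpolation formula for the lifted factor) has a polynomial-in-$d$ degree bound so that it admits an explicit coefficient-vector representation of size $n^{O(d^2)}$; this is immediate from the explicit constructions in \cref{section: div test and PIT} and \cref{subsection: resultants}, and from the fact that Hensel lifting only doubles the modulus each round so $O(\log d)$ rounds suffice to recover a degree-$d$ factor.
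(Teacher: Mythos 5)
Your proposal takes essentially the same route the paper intends: it cites \cite{KSS15} with only the one-line justification that, for a degree-$d$ input, every PIT instance arising in the factorization-to-PIT reduction is itself of degree $\poly(d)$ and can therefore be answered by writing out an explicit coefficient vector of size $n^{O(d^2)}$. Your fleshed-out version (monicizing via $\mathcal{H}(d,n)$, hitting the degree-$O(d^2)$ discriminant, LLL on the univariate projection, $2^{O(d)}$ coprime splittings, $O(\log d)$ Hensel rounds, interpolation, and a coefficient-vector divisibility check) is a faithful instantiation of that plan, and the runtime accounting is right.

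One genuine, though easily patched, defect: as written, the algorithm errs on non-squarefree inputs. If $f = g^2h$ with $\deg g \geq 1$, the $y$-discriminant is identically zero, the search for $\vecdelta$ finds nothing, no candidate factors are ever generated, and your final rule (``irreducible iff no candidate is a non-trivial divisor'') would wrongly declare $f=x_1^2$ irreducible. You should add the standard case split: after making $f$ monic in $y$, if the discriminant's coefficient vector is identically zero and $\deg f \geq 2$, output ``reducible'' immediately (a monic-in-$y$ polynomial has vanishing $y$-discriminant exactly when it has a repeated, hence non-trivial, factor); only otherwise proceed to the squarefree pipeline.
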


\section{Computing candidate low-degree factors of multiplicity one}
\label{sec:computing-mult-one-factors}

We first present the algorithm for computing candidate low-degree factors of multiplicity one in \cref{algo: low-deg factors of multiplicity one} below. In the next section, we use this as a subroutine in \cref{algo: low-deg factors} to compute factors of all multiplicity and also eliminate those candidates that were not actual factors. 

{\small
\begin{algorithm}[H]
  \onehalfspacing
  \caption{Computing candidate degree $d$ factors of factor-multiplicity one}
  \label{algo: low-deg factors of multiplicity one}
  \SetKwInOut{Input}{Input}\SetKwInOut{Output}{Output}

  \Input{A $(\SP)^{(k)}$-formula of size $s$, bit-complexity $t$, degree $D$ computing a polynomial $f(\vecx)$.}
  \Output{A list of polynomials of degree at most $d$, that include all factors of $f$ with degree at most $d$ and multiplicity $1$.}
  \BlankLine

  Set the output list $L = \emptyset$.

  Compute hitting-set $H_1 = \mathcal{H}(d , n)$ (as defined in \cref{defn:low-hamming-weight-hitting-set}).
  \label{alg:mult-one:hitting-set-sparse}

  Compute hitting-set $H_2$ for the class of $\Sigma \inparen{(\SP)^{(k)} \cdot (\LowDeg_d)^\ast}$-formulas that have size $s' \leq (sD)^{O(d)}$. (\cref{lem: resultant computable by ugly-model}, \cref{thm: forbes quasipoly pit for resultant})
  \label{alg:mult-one:hitting-set-resultant}

  \For{$\vecalpha, \vecbeta \in H_1$ and $\vecdelta \in H_2$}{
    \label{alg:mult-one:outer-for-loop}

    Define $F(\vecx, y) = f(\vecx + \vecalpha \cdot y + \vecbeta + \vecdelta) = f(x_1 + \alpha_1 y + \beta_1 + \delta_1, \ldots, x_n + \alpha_n y + \beta_1 + \delta_n)$

    Using interpolation on the formula for $F(\vecx,y)$ (via \cref{lem:interpolation-univariate}), compute $F(\veczero,y)$ as a sum of monomials. \label{alg:mult-one:interpolate-F-sparse}

    Factorise the polynomial $F(\veczero, y)$ into irreducible factors as
    \[
      F(\veczero, y) = \sigma \cdot F_1^{e_1} \cdots F_r^{e_r}.
    \]
    where $0 \neq \sigma \in \Q$ and each $F_r$ is monic in $y$. \label{alg:mult-one:factorize-F}

    \For{$T \subseteq [r]$ of size at most $d$}{
      \label{alg:mult-one:for-loop-T}

      Define $g_0 = \prod_{i\in T} F_i^{e_i}$ and $h_0 = \sigma \cdot \prod_{i\notin T} F_i^{e_i}$, interpretted as polynomials in $\Q[\vecx,y]$ for \cref{lem: iterated hensel lifts}\label{alg:mult-one:g0-and-h0}

      \If{$\deg(g_0) > d$}{Continue to the next choice of $T$ in the current loop.}
      \label{alg:mult-one:check-g0-deg}

      Compute polynomials $u_0,v_0$ such that $u_0 g_0 + v_0 h_0 = 1$. 
      \label{alg:mult-one:compute-a0-b0}

      Use Hensel-Lifting (\cref{lem: hensel lift time}) to lift the factorisation $F(\vecx, y) = g_0(\vecx,y) \cdot h_0(\vecx,y) \bmod{I}$, where $I = \inangle{\vecx}$, to obtain algebraic circuits for $g_\ell, h_\ell$ satisfying
      \[
      F(\vecx, y) = g_\ell(\vecx,y) \cdot h_\ell(\vecx,y) \bmod{I^{2^\ell}}
      \]
      with $g_\ell$ being monic and $d < 2^\ell < 2d$. 
      \label{alg:mult-one:hensel-lift}

      Using interpolation on the circuit for $g_\ell$ (via \cref{lem: multivariate interpolation}), compute $g_\ell$ as a sum of monomials. 
      \label{alg:mult-one:interpolate-sparse}

      Add $\tilde{g} = g_\ell(\vecx - \vecdelta - \vecbeta, 0)$ to $L$.
      \label{alg:mult-one:add_gell_to_list}
      \label{alg:mult-one:for-loops-end}
    }
  }
  \Return{$L$}
\end{algorithm}
}

\medskip

Before we discuss the proof of correctness and running time of \cref{algo: low-deg factors of multiplicity one}, we state two simple observations that we use in the analysis. We defer the proofs of these observations to the end of the section. 

\begin{observation}[Size growth under a translation of low Hamming weight]\label{obs: hamming weight shift size growth}
  Let $k > 0$ be a parameter. Let $f(\vecx)$ be an $n$-variate polynomial of degree at most $D$ with $\SPsize{k}$ at most $s$. If $\vecalpha , \vecbeta \in \mathcal{H}(d,n)$, the polynomial $\tilde{f}(\vecx,y) = f(\vecx + y\vecalpha + \vecbeta)$ has $\SPsize{k}$ at most $s \cdot D^{O(d)}$. 
\end{observation}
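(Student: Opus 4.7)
The key observation is that the translation vectors $\vecalpha$ and $\vecbeta$, coming from $\mathcal{H}(d,n)$, are supported on at most $d$ coordinates each. So let $S = \mathrm{supp}(\vecalpha) \cup \mathrm{supp}(\vecbeta)$; by \cref{defn:low-hamming-weight-hitting-set} we have $|S| \leq 2d$. The substitution $x_i \mapsto x_i + y\alpha_i + \beta_i$ is then the identity map for every $i \notin S$, and is an affine form in the two variables $\{x_i, y\}$ for the (at most $2d$) indices $i \in S$. In particular, \emph{most leaves of the original formula are untouched}, and one only needs to account for the blow-up at leaves corresponding to the small set $S$.

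Next I would focus on a single bottom $\times$ gate $G$ of the given $(\SP)^{(k)}$-formula. Since $G$ has fan-in at most $D$ and its leaves are variables, it computes a monomial $M = \prod_{i \notin S} x_i^{a_i} \cdot \prod_{i \in S} x_i^{b_i}$ of total degree at most $D$. After the substitution, $G$ computes
\[
  \prod_{i \notin S} x_i^{a_i} \;\cdot\; P(x_S, y), \qquad \text{where } P(x_S, y) = \prod_{i \in S} (x_i + y\alpha_i + \beta_i)^{b_i}.
\]
The polynomial $P$ lives in $|S| + 1 \leq 2d+1$ variables and has total degree $\sum_{i \in S} b_i \leq D$, so the number of its monomials is at most $\binom{D + 2d+1}{2d+1} = D^{O(d)}$.

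I would then rewrite the substituted $G$ as a $\Sigma\Pi$ sub-formula: a single $+$ gate whose inputs are the $D^{O(d)}$ monomials obtained by expanding $P$ and attaching the untouched factor $\prod_{i \notin S} x_i^{a_i}$ to each of them. Each such monomial is a single $\times$ gate of fan-in at most $D$, so the whole sub-formula has size $D^{O(d)}$. Performing this replacement at every bottom $\times$ gate (there are at most $s$ of them) produces a formula of size $s \cdot D^{O(d)}$. The new $+$ layer introduced at the bottom sits directly beneath the $+$ layer that was already one level above the original bottom $\times$ layer in the $(\SP)^{(k)}$-formula; merging these two consecutive $+$ layers into one preserves both the alternating $+/\times$ structure and the product-depth $k$. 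This yields the claimed $(\SP)^{(k)}$-formula for $\tilde{f}(\vecx, y)$ of size $s \cdot D^{O(d)}$.

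The only place that needs care (and is the closest thing to a genuine obstacle) is the bookkeeping around layer-alternation: one must verify that after expanding the bottom product gates into $\Sigma\Pi$ sub-formulas, the newly created $+$ gates can indeed be absorbed into the pre-existing $+$ layer without increasing the product-depth. Everything else is a routine monomial-counting argument, and the bound $D^{O(d)}$ is essentially tight since the expansion of $(x_i + y\alpha_i + \beta_i)^{b_i}$ contributes $b_i + 1$ monomials per special coordinate.
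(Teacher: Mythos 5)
Your proof is correct and follows essentially the same route as the paper's: restrict attention to the at most $2d$ affected coordinates, expand each substituted bottom product gate into at most $D^{O(d)}$ monomials, and absorb the resulting sum into the layer above. Your version is in fact slightly more careful than the paper's (which writes $|T|=d$ rather than $\le 2d$ and does not spell out the layer-merging step).
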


\begin{lemma}\label{lem: resultant computable by ugly-model}
  Let $f(\vecx)$ be an $n$-variate polynomial computed by a $(\SP)^{(k)}$ formula of size $s$, and let $g(\vecx)$ be an $n$-variate degree $d$ polynomial with $g(\veczero) \neq 0$. Let $Q(\vecx)$ be the pseudo-quotient of $f$ and $g$. Then, for any variable $y\in \vecx$, the polynomial $\Res{y}{Q}{g}$ is computable by a $\uglymodel{k}{d}$ formula of size at most $s^{O(d)}$. 
\end{lemma}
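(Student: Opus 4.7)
The plan is to compute the Sylvester resultant via cofactor expansion along the rows that involve $Q$, and then show that each of the two cofactor pieces falls neatly into $\uglymodel{k}{d}$. I would first write $g(\vec{x},y)=\sum_{j=0}^{d_g} g_j(\vec{x}')\, y^j$ and $Q(\vec{x},y)=\sum_{i=0}^{D_Q} Q_i(\vec{x}')\, y^i$, where $\vec{x}'=\vec{x}\setminus\{y\}$, $d_g=\deg_y(g)\le d$, and $D_Q=\deg_y(Q)\le \deg(f)\le s$. By \cref{lem:complexity-pseudoquotient}, $Q$ has an $\uglymodel{k}{d}$ formula of size $s^{O(d)}$, and one application of univariate interpolation in $y$ (\cref{lem:interpolation-univariate}) hands me $\uglymodel{k}{d}$ formulas of the same size for each coefficient $Q_i$. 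Each $g_j$ is a polynomial of degree at most $d$ and hence lies in $\LowDeg_d$.

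Next, I would expand the determinant $\det(\Gamma)$ of the Sylvester matrix of $Q$ and $g$ along its last $d_g$ rows (the $Q$-rows), obtaining
\[
\Res{y}{Q}{g}\;=\;\sum_{S\in\binom{[D_Q+d_g]}{d_g}} \epsilon_S\cdot M_S(Q)\cdot N_{\bar S}(g),
\]
where $M_S(Q)$ is the $d_g\times d_g$ minor of the $Q$-rows on columns $S$, $N_{\bar S}(g)$ is the complementary $D_Q\times D_Q$ minor of the $g$-rows, and $\epsilon_S=\pm1$. There are at most $\binom{D_Q+d_g}{d_g}\le s^{O(d)}$ summands. For $M_S(Q)$, Leibniz's formula writes it as a signed sum of $d_g!$ products, each a product of $d_g\le d$ of the $Q_i$'s, so \cref{obs:ugly-model-sums-products} gives an $\uglymodel{k}{d}$ formula for $M_S(Q)$ of size $s^{O(d)}$.

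The main obstacle will be $N_{\bar S}(g)$, whose naive Leibniz expansion can produce up to $D_Q!$ terms. The way around this is to observe that every entry of this submatrix is one of only $d_g+1\le d+1$ polynomials $g_0,\ldots,g_{d_g}$, so as a polynomial in these $d+1$ symbolic quantities the determinant is homogeneous of degree $D_Q$ and thus has at most $\binom{D_Q+d_g}{d_g}\le s^{O(d)}$ distinct monomials. I would then invoke Fischer's identity (\cref{lem:fischers-trick}) on each such monomial $g_0^{a_0}\cdots g_{d_g}^{a_{d_g}}$ (with $\sum_j a_j=D_Q$), rewriting it as a sum of at most $\prod_j(a_j+1)\le s^{O(d)}$ terms of the form $L^{D_Q}$, where each $L$ is a linear combination of $g_0,\ldots,g_{d_g}$ and so lies in $\LowDeg_d$. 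This presents $N_{\bar S}(g)$ as a sum of $s^{O(d)}$ powers of degree-$\le d$ polynomials, placing it inside $\uglymodel{k}{d}$ with size $s^{O(d)}$.

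To finish, I would combine using \cref{obs:ugly-model-sums-products}: each product $M_S(Q)\cdot N_{\bar S}(g)$ is $\uglymodel{k}{d}$ of size $s^{O(d)}$, and summing the $s^{O(d)}$ such terms keeps us in $\uglymodel{k}{d}$ with total size $s^{O(d)}$. The entire argument really hinges on the compression of $N_{\bar S}(g)$: it is only because the $g$-minor draws its entries from a small pool of $O(d)$ low-degree polynomials that Fischer's identity can collapse what is a priori a $D_Q$-term determinant into a manageable sum of single powers of degree-$\le d$ polynomials.
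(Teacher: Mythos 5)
Your proposal is correct and follows essentially the same route as the paper's proof: a Laplace expansion of the Sylvester determinant along the $\deg_y(g)$ rows carrying $Q$'s coefficients, bounding the small $Q$-minors via the closure properties of $\uglymodel{k}{d}$ (\cref{obs:ugly-model-sums-products}), and handling the large complementary $g$-minors by observing that they are degree-$D_Q$ polynomial combinations of the $d+1$ coefficients $g_0,\ldots,g_{d_g}$, then applying Fischer's identity (\cref{lem:fischers-trick}) to each monomial in the $g_j$'s. The only differences are cosmetic bookkeeping in the exponents (e.g.\ $s^{O(d)}$ vs.\ $(sD)^{O(d)}$), which do not affect the claimed bound since $d$ is constant.
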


\subsection{Proof of correctness of the Algorithm~\ref{algo: low-deg factors of multiplicity one}}

\begin{lemma}[Correctness of \cref{algo: low-deg factors of multiplicity one}]\label{lem: correctness of algorithm 1}
  For every input polynomial $f$ computed by $(\SP)^{(k)}$ formulas of size $s$, bit-complexity $t$, degree $D$ and any factor $g$ of degree at most $d$ with $g \mid f$ and $g^2 \nmid f$, the polynomial $g$ is included in the output list of \cref{algo: low-deg factors of multiplicity one} on input $f$. 
\end{lemma}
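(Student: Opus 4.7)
The goal is to show that for any $g$ with $g\mid f$, $g^2\nmid f$, and $\deg g\leq d$ (so that $g$ and $h := f/g$ are coprime in $\Q[\vecx]$), there exist $(\vecalpha,\vecbeta,\vecdelta) \in H_1 \times H_1 \times H_2$ and a subset $T \subseteq [r]$ of size at most $d$ that make the inner iteration of \cref{algo: low-deg factors of multiplicity one} add a scalar multiple of $g$ to $L$. The analysis reduces to two non-degeneracy conditions on the triple: (M) the leading $y$-coefficient $\homog_{\deg g}(g)(\vecalpha)$ of $g(\vecx+y\vecalpha+\vecbeta+\vecdelta)$ is non-zero, so $g$ can be normalised to be monic in $y$; and (C) the univariates $g(y\vecalpha+\vecbeta+\vecdelta)$ and $h(y\vecalpha+\vecbeta+\vecdelta)$ are coprime in $\Q[y]$, so that the base case of Hensel lifting has the required Bezout relation. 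Once (M) and (C) hold, uniqueness in \cref{lem: iterated hensel lifts} does the rest.

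Condition (M) is immediate: $\homog_{\deg g}(g)$ is a non-zero polynomial of degree at most $d$, so \cref{lem: non-zeros of low degree polynomials} produces an $\vecalpha \in H_1 = \mathcal{H}(d,n)$ satisfying it. For (C), I would work with the polynomial $R_\vecalpha(\vecz) := \Res{y}{g(\vecz+y\vecalpha)}{h(\vecz+y\vecalpha)} \in \Q[\vecz]$ and aim to find $(\vecbeta,\vecdelta)$ with $R_\vecalpha(\vecbeta+\vecdelta) \neq 0$. First, $R_\vecalpha \not\equiv 0$ because the map $\vecx \mapsto \vecz + y\vecalpha$ is an automorphism of $\Q[\vecx,y]$: any common divisor of $g(\vecz+y\vecalpha), h(\vecz+y\vecalpha)$ in $\Q[\vecz,y]$ pulls back to a common divisor of $g,h$ inside $\Q[\vecx,y]$, and since $g,h\in\Q[\vecx]$ any such divisor must in fact lie in $\Q[\vecx]$ and hence be a unit by coprimality.

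To then hit $R_\vecalpha$, I choose $\vecbeta \in H_1$ with $g(\vecbeta) \neq 0$ (again via \cref{lem: non-zeros of low degree polynomials}) and consider $\hat{f}(\vecx,y) := f(\vecx+y\vecalpha+\vecbeta)$, $\hat{g}(\vecx,y) := g(\vecx+y\vecalpha+\vecbeta)$. Then $\hat{g}(\veczero,0) = g(\vecbeta) \neq 0$ and, since $g \mid f$, the pseudo-quotient of $\hat{f}$ by $\hat{g}$ in $\Q[\vecx,y]$ equals $\hat{h}(\vecx,y) := h(\vecx+y\vecalpha+\vecbeta)$. By \cref{obs: hamming weight shift size growth}, $\hat{f}$ admits a $(\SP)^{(k)}$-formula of size $(sD)^{O(d)}$, so \cref{lem: resultant computable by ugly-model} places $\Res{y}{\hat{h}}{\hat{g}}$ --- which, as a polynomial in $\vecx$, is $R_\vecalpha(\vecbeta+\vecx)$ up to sign --- in the class $\uglymodel{k}{d}$ at size $(sD)^{O(d)}$ (for constant $k,d$). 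Since $H_2$ is a hitting set for exactly this class, some $\vecdelta \in H_2$ yields $R_\vecalpha(\vecbeta+\vecdelta) \neq 0$. With (M) and (C) secured, the univariate factorisation at \cref{alg:mult-one:factorize-F} of $F(\veczero,y) = f(y\vecalpha+\vecbeta+\vecdelta)$ cleanly splits the monic versions of $\tilde{G}(y) := g(y\vecalpha+\vecbeta+\vecdelta)$ and $\tilde{H}(y) := h(y\vecalpha+\vecbeta+\vecdelta)$ into disjoint products of the $F_i$'s; taking $T \subseteq [r]$ to be the index-set multiplying to $\tilde{G}(y)/\homog_{\deg g}(g)(\vecalpha)$ gives $|T| \leq \deg_y \tilde{G} = \deg g \leq d$, so this $T$ is enumerated at \cref{alg:mult-one:for-loop-T} and passes the check at \cref{alg:mult-one:check-g0-deg}. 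Uniqueness in \cref{lem: iterated hensel lifts} then forces $g_\ell \equiv G / \homog_{\deg g}(g)(\vecalpha) \bmod \Ideal_\ell$ where $G(\vecx,y) := g(\vecx+y\vecalpha+\vecbeta+\vecdelta)$, and since $\deg G \leq d < 2^\ell$ the congruence is in fact an equality; consequently $\tilde{g} := g_\ell(\vecx-\vecdelta-\vecbeta,0)$ coincides with $g(\vecx)$ up to the non-zero scalar $\homog_{\deg g}(g)(\vecalpha)$.

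The main obstacle I anticipate is invoking \cref{lem: resultant computable by ugly-model} correctly in this context: the lemma is phrased with $y$ as a pre-existing variable of $g$ satisfying $g(\veczero) \neq 0$, whereas here $y$ appears freshly through the $y\vecalpha$-shift and $g$ itself may vanish at the origin. The preliminary $\vecbeta$-translation with $g(\vecbeta) \neq 0$ is precisely what lets me legitimately quote the lemma, and one then has to carefully compose the size blow-up $s \mapsto (sD)^{O(d)}$ from \cref{obs: hamming weight shift size growth} with the $O(d)$-exponent from the resultant bound so that the $\uglymodel{k}{d}$-formula ultimately being hit matches the $(sD)^{O(d)}$ size budget for which $H_2$ is designed.
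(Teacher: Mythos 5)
Your proposal follows essentially the same route as the paper's proof: pick $\vecalpha,\vecbeta\in H_1$ to make $g(\vecx+y\vecalpha+\vecbeta)$ monic in $y$ with non-zero constant term, observe that the pseudo-quotient is then the true shifted cofactor, place the $y$-resultant of the shifted $g$ and the pseudo-quotient in $\uglymodel{k}{d}$ via \cref{lem: resultant computable by ugly-model} so that $H_2$ hits it, and finish with uniqueness of monic Hensel lifting. The one point you elide is the degree-drop issue when specialising the resultant at $\vecdelta$: \cref{obs:resultant-under-substitution} requires \emph{both} leading $y$-coefficients to survive the substitution, and while the leading coefficient of $g'$ is the constant $1$, the leading coefficient $h'_r(\vecx)$ of the pseudo-quotient is a genuine polynomial that may vanish at $\vecbeta+\vecdelta$; the paper handles this by hitting the product $\Res{y}{g'}{h'}\cdot h'_r$ rather than the resultant alone, whereas your argument would instead need the (true but unstated) stronger specialisation property for resultants of monic polynomials.
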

\begin{proof}

  \cref{algo: low-deg factors of multiplicity one} outputs a list of candidate factors; we would like to prove that every factor of $f$ with degree $\leq d$ and factor-multiplicity one will be contained in this list. Fix any specific factor $g$ of $f$, with $\deg(g) = d' \leq d$ and factor-multiplicity one, which ensures that $\gcd(g,f/g) = 1$.
  \begin{enumerate}
    \item \textbf{Make $g$ monic and $g(\vec0) \neq 0$} 
  
    The coefficient of $y^{d'}$ in $g'(\vecx,y) := g(\vecx + y \vecalpha + \vecbeta)$ is the evaluation of $\homog_{d'}(g)$ at $\vec{\alpha}$ and the constant term of $g'(\vecx,y)$ is $g'(\veczero,0) = g(\vecbeta)$. Thus by \cref{lem: non-zeros of low degree polynomials}, there is some $\vecalpha, \vecbeta \in H_1$ such that $\homog_{d'}(g)(\vecalpha) \neq 0$ and $g(\vecbeta) \neq 0$. Fix this choice of $\vecalpha,\vecbeta$. We then have that $g'(\vecx, y)$ is monic in $y$, has $\deg_y(g') = \deg(g) = d'$, and has non-zero constant term. 
    
    \item \textbf{Bound the size of $\uglymodel{k}{d}$ formula for the resultant}
  
    With the above properties, the pseudo-quotient $h'$ of $f'(\vecx, y) := f(\vecx + y \vecalpha + \vecbeta)$ and $g'(\vecx, y)$ is well-defined and is a polynomial in $\Sigma \inparen{(\SP)^{(k)} \cdot (\LowDeg_d)^\ast}$ (by \cref{lem:complexity-pseudoquotient}) of size $\poly(s,D,d) \leq \poly(sD)$. By \cref{lem: resultant computable by ugly-model}, $\Res{y}{g'}{h'} \in \Q[\vecx]$ is a non-zero polynomial computable by $\uglymodel{k}{d}$ formulas of size $(sD)^{O(d)}$. 
  
    \item \textbf{Maintain $\gcd(g,h)=1$ condition in the univariate setting by hitting the resultant}

    Let $\deg_y(h') = r$ and $h'(\vecx,y) = h_0'(\vecx) + \cdots + h_r'(\vecx) y^r$. Since $h'$ is computable by size $(sD)^{O(d)}$ formula from $\uglymodel{k}{d}$, so is the leading term $h_r'(\vecx)$ by \cref{lem:computing-hom-components}. Therefore by \autoref{obs:ugly-model-sums-products}, the polynomial $\Gamma(\vecx) = \Res{y}{g'}{h'}\cdot h'_r(\vecx)$ is also computable by $\uglymodel{k}{d}$ formulas of size $s' = (sD)^{O(d)}$. Since $H_2$ is a hitting set for $\uglymodel{k}{d}$ formulas of size $s'$, fix a $\vecdelta \in H_2$  such that $\Gamma(\vecdelta) \neq 0$ and in particular, the conditions required in \cref{obs:resultant-under-substitution} are true (note that the leading coefficient of $g'$ is just 1 by monicness). By \cref{lem: resultant property} and \autoref{obs:resultant-under-substitution}, we have that $g'(\vecdelta, y)$ and $h'(\vecdelta,y)$ are coprime polynomials. Thus, if $g''(\vecx, y) = g'(\vecx + \vecdelta, y)$ and $h''(\vecx, y) = h'(\vecx + \vecdelta, y)$ ($h'$ being the pseudo-quotient), \cref{thm: divisibility testing to pit} implies that
    \begin{align*}
    f(\vecx + \vecalpha y + \vecbeta + \vecdelta) & = g''(\vecx,y) \cdot h''(\vecx, y)\\
      \implies f(\vecalpha y + \vecbeta + \vecdelta) & = g''(\veczero,y) \cdot h''(\veczero, y)\\
      & \quad \text{with } \gcd(g''(\veczero,y), h''(\veczero,y)) = 1.
    \end{align*}
    
    \item \textbf{Univariate factorization and Hensel Lifting}

    \cref{alg:mult-one:factorize-F} thus factorises the univariate polynomial $f(\vecalpha y + \vecbeta + \vecdelta)$ and one of the sets $T$ in \cref{alg:mult-one:for-loop-T} must correspond to $g_0(y)$ chosen in \cref{alg:mult-one:g0-and-h0} to satisfy $g_0(y) = g''(\veczero,y)$ and $h_0(y) = h''(\veczero, y)$. Thus, we have a factorisation of the form 
    \begin{align*}
      f(\vecalpha y + \vecbeta + \vecdelta) & = g''(\veczero,y) \cdot h''(\veczero,y) = g_0 \cdot h_0\\
      \implies f(\vecx + \vecalpha y + \vecbeta + \vecdelta) & = g_0 \cdot h_0 \bmod{\mathcal{I}},\quad\text{where $\mathcal{I} = \inangle{\vecx}$.}
    \end{align*}
    We are therefore set-up to apply Hensel Lifting (\cref{lem: iterated hensel lifts}) to obtain $g_\ell, h_\ell$ such that $g_\ell$ is monic in $y$ and 
    \[
      f(\vecx + \vecalpha y + \vecbeta + \vecdelta) = g_\ell(\vecx,y) \cdot h_\ell(\vecx,y) \bmod{\mathcal{I}^{2^\ell}}.
    \]
  
    From the uniqueness of Hensel Lifting (which is guaranteed by \cref{lem: iterated hensel lifts}), we must have that $g_\ell(\vecx,y) = g''(\vecx,y) = g(\vecx + \vecalpha y + \vecbeta + \vecdelta)$. Thus, for this choice of $\vecalpha, \vecbeta, \vecdelta$ and $T$, we would include $g(\vecx) = g''(\vecx - \vecbeta - \vecdelta, 0)$ in the set of candidate factors in \cref{alg:mult-one:add_gell_to_list}.
  
    Finally, since the lift also ensures that there exist $u_\ell$ and $v_\ell$ such that $u_\ell g_\ell + v_\ell h_\ell = 1 \bmod{\mathcal{I}^{2^\ell}}$, we also have that $g_\ell^2 \nmid f$. 
  \end{enumerate}
\end{proof}

\subsection{Running time analysis}

\newcommand{\UniFact}{\operatorname{UniFact}}
We now bound the time complexity of the algorithm. 
\begin{lemma}[Running time of \cref{algo: low-deg factors of multiplicity one}]\label{lem:running time of algorithm 1}
Let $\epsilon > 0, d, k \in \N$ be an arbitrary constants and let $f \in \Q[\vecx]$ be a polynomial computable by a $(\SP)^{(k)}$ formula $C$ of size $s$, degree at most $D$ and bit-complexity $t$. Then, on input $C$, \cref{algo: low-deg factors of multiplicity one} terminates in time at most $(sD)^{O_{\epsilon}(kd(sD)^{\epsilon d})}\cdot t^{O(d \log d)}$.

Moreover, if $k = 1$, i.e. $f$ has sparsity at most $s$, then  \cref{algo: low-deg factors of multiplicity one} terminates in time at most $(sDt)^{(\poly(d)\log sDt)}$.
\end{lemma}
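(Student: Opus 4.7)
The plan is to read off the total running time by multiplying together the size of the outer loop, the size of the inner loop, and the per-iteration cost (which is dominated by Hensel lifting composed with multivariate interpolation).

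\paragraph{Hitting-set sizes.} By \cref{defn:low-hamming-weight-hitting-set}, $|H_1| = n^{O(d)}$. For $H_2$, the polynomials we must hit (as identified in the proof of correctness) are $\Res{y}{g'}{h'} \cdot h'_r$, where $h'$ is the pseudo-quotient of the translated $f$ and $g'$; by \cref{lem: resultant computable by ugly-model} together with \autoref{obs:ugly-model-sums-products}, these lie in $\uglymodel{k}{d}$ with formula size at most $s' := (sD)^{O(d)}$. In the general constant-depth case, applying \cref{thm: PIT from LST} with a suitably rescaled accuracy parameter gives
\[
|H_2| \;\leq\; \bigl(s'^{O(k)} \cdot n\bigr)^{O_\epsilon((s'D)^{\epsilon'})} \;\leq\; (sD)^{O_\epsilon(kd\,(sD)^{\epsilon d})}.
\]
In the sparse case ($k=1$), \cref{thm: forbes quasipoly pit for resultant} instead supplies $|H_2| \leq (sD)^{O(d^2 \log sD)}$.

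\paragraph{Per-iteration cost.} For each outer triple $(\vecalpha, \vecbeta, \vecdelta)$, the inner loop iterates over subsets $T$ of size at most $d$, giving $\binom{D}{\leq d} = D^{O(d)}$ choices. Inside each such choice, the costs are as follows.
\begin{itemize}\itemsep0pt
\item By \cref{obs: hamming weight shift size growth}, the translated formula $F(\vecx,y)$ has $(\SP)^{(k)}$-size $s \cdot D^{O(d)}$, and by \cref{lem:evaluation-bit-complexity} its evaluations have bit-complexity $\poly(s,D,t,d)$.
\item Interpolating $F(\veczero,y)$ via \cref{lem:interpolation-univariate}, univariate factorization via \cref{thm: LLL univariate factorization}, and extended-Euclidean computation of $u_0, v_0$ each cost $\poly(s,D,t,d)$.
\item Applying \cref{lem: hensel lift time} for $\ell = O(\log d)$ rounds, with the refined bound for total-degree-$d$ factors, yields a formula for $g_\ell$ of size and bit-complexity $(sDt)^{O(d \log d)}$. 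The factor of $d$ in the exponent arises because the Hensel subroutine is invoked on the translated $F$, whose complexity is already $(sDt)^{O(d)}$, and the $\log d$ iterations of doubling amplify this.
\item Rewriting $g_\ell$ as a sum of monomials via \cref{lem: multivariate interpolation} costs $(sDt)^{O(d \log d)} \cdot n^{O(d)}$.
\end{itemize}

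\paragraph{Combining the bounds.} Multiplying everything together in the general case,
\[
n^{O(d)} \cdot (sD)^{O_\epsilon(kd\,(sD)^{\epsilon d})} \cdot D^{O(d)} \cdot (sDt)^{O(d \log d)} \cdot n^{O(d)} \;\leq\; (sD)^{O_\epsilon(kd\,(sD)^{\epsilon d})} \cdot t^{O(d \log d)},
\]
since the polynomial-in-$sD$ factors (and $n^{O(d)}$, using $n \leq s$) are absorbed into the dominant hitting-set term. In the sparse case, substituting the Forbes bound for $|H_2|$ gives
\[
(sD)^{O(d^2 \log sD)} \cdot D^{O(d)} \cdot n^{O(d)} \cdot (sDt)^{O(d \log d)} \;\leq\; (sDt)^{\poly(d)\,\log sDt}.
\]

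The main bookkeeping care is in controlling the post-translation formula complexity so that the Hensel-lift output has exponent $d \log d$ in $(sDt)$ rather than something larger, and in verifying that the resultant complexity from \cref{lem: resultant computable by ugly-model} matches the hypothesis of the invoked PIT theorems. Once these are pinned down, every remaining subroutine contributes only polynomially and is swallowed by the dominant hitting-set and Hensel-lift terms.
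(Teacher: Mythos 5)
Your proposal is correct and follows essentially the same route as the paper's proof: bound $|H_1|$ and $|H_2|$ via \cref{defn:low-hamming-weight-hitting-set}, \cref{thm: PIT from LST}, and \cref{thm: forbes quasipoly pit for resultant}, bound the $D^{O(d)}$ inner iterations each costing $\poly(s,D,t)$ plus the Hensel-lift and multivariate-interpolation terms, and observe that the product is dominated by the hitting-set and $(sDt)^{O(d\log d)}$ factors. Your extra care in tracking the post-translation formula size entering the Hensel subroutine is a slightly more explicit accounting than the paper's, but it lands on the same bounds.
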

\begin{proof}
  Let $T_k^{(1)}(s,d)$ be the time-complexity to output the hitting set $H_1$ in   \cref{alg:mult-one:hitting-set-sparse} and $T_k^{(2)}(s,D,d)$ be the time-complexity to output the hitting set $H_2$ in \cref{alg:mult-one:hitting-set-resultant}. 

  From \cref{defn:low-hamming-weight-hitting-set}, we immediately have that $T_k^{(1)}(s,d) \leq s^{O(d)}$. As for $T_k^{(2)}(s,D,d)$, in the case of $k = 1$, \cref{thm: forbes quasipoly pit for resultant} shows that $T_k^{(2)}(s,D,d) \leq (sD)^{(\poly(d) \log sD)}$. For $k$ satisfying $2 \leq k = o(\log\log\log s)$, then \cref{thm: PIT from LST} shows that $T_k^{(2)}(s,D,d) \leq (sD)^{O_{\epsilon}(kd(sD)^{\epsilon d})}$ for any constant $\epsilon > 0$. 

  Using \cref{lem:interpolation-univariate}, we get that \cref{alg:mult-one:interpolate-F-sparse} takes $\poly(s,D, t)$-time. Now, each of the coefficients of $F(\veczero, y)$ has bit-complexity at most $\poly(s, D, t)$. Thus, from \cref{thm: LLL univariate factorization}, we get that $F(\veczero, y)$ can be factorized into its irreducible factors in time at most $\poly(s, D, t)$.

  There are at most $D^d$ choices for the set $T$ in \cref{alg:mult-one:for-loop-T}. For each such choice, \crefrange{alg:mult-one:g0-and-h0}{alg:mult-one:compute-a0-b0} compute formulas of size $\poly(s, D, t)$ for $g_0$, $h_0$, $u_0$, $v_0$ in time $\poly(s, D, t)$. By \cref{lem: hensel lift time}, we have that \cref{alg:mult-one:hensel-lift} takes time $(sDt)^{O(\log d)}$ to compute a formula of the same size and bit-complexity for $g_\ell$. From \autoref{lem: multivariate interpolation}, we get that we can obtain the coefficient vector of $g_{\ell}$ in time at most $(sDt)^{O(d\log d)}$. 
  \medskip

  \noindent Therefore, the overall running time of \cref{algo: low-deg factors of multiplicity one} is at most 
  \[
    T_k^{(1)}(s,d) \cdot T_k^{(2)}(s,D,d) \cdot D^d \cdot  \poly(s, D, t) \cdot (sDt)^{O(d\log d)}  \, .
  \]
Plugging in the estimates for $T_k^{(1)}(s,d)$, $T_k^{(2)}(s,D,d)$, we get the overall bound of $(sD)^{O_{\epsilon}(kd(sD)^{\epsilon d})}\cdot t^{O(d \log d)}$ for $k > 1$, which is essentially dominated by $T_k^{(2)}(s,D,d)$. 
  
  When $f$ has sparsity $s$, then as discussed in the proof, $ T_k^{(2)}(s,d)$ is at most $(sD)^{(\poly(d) \log sD)}$. Plugging this back in the above expression, we get that the running time is at most $(sDt)^{(\poly(d)\log sDd)}$. 
\end{proof}

\subsection{Proof of structural lemmas} \label{section: proof of structural lemmas}
In this subsection, we include the proofs of \autoref{obs: hamming weight shift size growth} and \cref{lem: resultant computable by ugly-model}. This completes the analysis of \cref{algo: low-deg factors of multiplicity one}. 

\begin{proof}[Proof of \autoref{obs: hamming weight shift size growth}]
  By definition of $\mathcal{H}(d,n)$ (\cref{defn:low-hamming-weight-hitting-set}), the transformation $\vecx \mapsto \vecx + y \vecalpha + \vecbeta$ takes a monomial $\prod_{i\in [n]}{x_i^{e_i}}$ to $\inparen{\prod_{i\in T}{\inparen{x_i + \alpha_i y + \beta_i}^{e_i}}}\cdot \inparen{\prod_{i \in [n]\setminus T} x_i^{e_i}}$, for some $T \subseteq [n]$ s.t. $|T| = d$. If we expand $\prod_{i\in T}{\inparen{x_i + \alpha_i y + \beta_i}^{e_i}}$ into a sum of monomials, we will get at most $D^{O(d)}$ monomials (when $\sum_i e_i \leq D$). Expanding each $\prod_{i\in [n]}{\inparen{x_i + \alpha_i y + \beta_i}^{e_i}}$ at the bottom layer into a sum of monomials this way, we get the required $(\SP)^{(k)}$ formula with size at most $s\cdot D^{O(d)}$.
\end{proof}

\begin{proof}[Proof of \cref{lem: resultant computable by ugly-model}]
  Let $\vecx' = \vecx \setminus \inbrace{y}$ and let $\mathcal{C}$ be the class $\uglymodel{k}{d}$. Let us assume that $\deg_y(Q) = D \leq s$ and $\deg_y(g) = d$. 
  By \cref{lem:complexity-pseudoquotient}, we have that $Q(\vecx)$ is computable by a $\mathcal{C}$-formula of size at most $\poly(s,d)$. Let us consider the $(D + d) \times (D + d)$ Sylvester matrix $\Gamma$ of $Q$ and $g$ with respect to the variable $y$ whose determinant is $\Res{y}{Q}{g}$. 
  \begin{align*}
    Q(\vecx) & = Q_0(\vecx') + y \cdot Q_1 (\vecx') + \cdots + y^D \cdot Q_D(\vecx')\\
    g(\vecx) & = g_0(\vecx') + \cdots + y^d \cdot g_d(\vecx')\\
  \Gamma & = 
    \begin{bmatrix}
        Q_0  & Q_1    & \dots  &        & Q_{D} &        &       \\
            & \ddots & \ddots &        & \ddots & \ddots &       \\
            &        & Q_0    & Q_1    &        &  \dots & Q_{D} \\
      g_{0}  & \dots  &        & g_{d}  &        &        &       \\
            &g_{0}   & \dots  &        & g_{d}  &        &       \\
            &        & \ddots & \ddots &        & \ddots &       \\
            &        &        & g_{0}  & \dots &        &g_{d} 
    \end{bmatrix}
  \end{align*}
  Note that, by \cref{lem:interpolation-univariate}, each of the $Q_i$'s are computed by a $\mathcal{C}$-formula of size $\poly(s,D)$ and each $g_i$ is a polynomial of degree at most $d$. 

  \newcommand{\Top}{\operatorname{Top}}
  \newcommand{\Bot}{\operatorname{Bot}}
  For a subset $S$ of rows and $T$ of columns, we will use $\Gamma(S,T)$ to refer to the submatrix restricted to the rows in $S$ and columns in $T$, and let $\Top = \set{1,\ldots, d}$ and $\Bot = \set{d+1,\ldots, d+D}$. The determinant of $\Gamma$ can then be expressed as
  \[
    \det(\Gamma) = \Res{y}{Q}{g} = \sum_{T \in \binom{[D+d]}{d}} \det(\Gamma(\Top,T)) \cdot \det(\Gamma(\Bot, \overline{T}))
  \]
  For every choice of $T$, the polynomial $\det(\Gamma(\Top,T))$ is the determinant of a $d\times d$ matrix each of whose entries are computable by $s' = \poly(s, D)$ sized $\mathcal{C}$-formulas. Therefore, using \autoref{obs:ugly-model-sums-products}, the polynomial $\det(\Gamma(\Top,T))$ is computable by $\mathcal{C}$-formulas of size at most $(sD)^{O(d)}$. 

  The polynomial $\det(\Gamma(\Bot, \overline{T}))$ is a degree $D$ polynomial combination of $g_0,\ldots, g_d$ and can therefore be expressed as 
  \begin{align*}
    \det(\Gamma(\Bot, \overline{T})) & = \sum_{i=1}^{D^{d+1}} a_i \cdot g_0^{e_{i,0}} \cdots g_d^{e_{i,d}}\\
    & = \sum_{i=1}^{D^{d+1}} a_i \cdot \inparen{\sum_{j=1}^{D^{O(d)}} b_{ij} \cdot f_{ij}^{e_{ij}}} \quad \text{(using \cref{lem:fischers-trick})}.
  \end{align*}
  for some polynomials $f_j$ of degree at most $d$. Thus, using \autoref{obs:ugly-model-sums-products} again, we have that $\Res{y}{Q}{g}$ is computable by $\uglymodel{k}{d}$ formulas of size at most $D^{O(d)} \cdot (sD)^{O(d)} = (sD)^{O(d)}$. 
\end{proof}

\section{Computing factors of all multiplicity}\label{section: arbitrary multiplicity}

The following lemma essentially shows that the multiplicity of any factor $g$ of a given polynomial $f$ can be reduced by working with appropriate partial derivatives of $f$, with respect to variables that are present in $g$. This naturally yields an algorithm that uses \cref{algo: low-deg factors of multiplicity one} as a subroutine, and computes all irreducible factors of $f$.

\begin{lemma}[Reducing factor multiplicity]\label{lem:factor-multiplicity-reduction}
  Let $f(\vecx), g(\vecx) \in \Q[\vecx]$ be non-zero polynomials and let $x \in \vecx$ be such that $\partial_x(g) \neq 0$ and $g$ is square-free. Then, the factor-multiplicity of $g$ in $f$ (i.e. the integer $a$ satisfying $g^a \mid f$ and $g^{a+1} \nmid f$) is also the smallest non-negative integer $a$ such that $g \nmid \frac{\partial^a f}{\partial x^a}$. 
\end{lemma}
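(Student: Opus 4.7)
The plan is to let $a$ denote the factor-multiplicity of $g$ in $f$ and write $f = g^a \cdot h$ with $g \nmid h$. The conclusion then reduces to two claims: (a) $g \mid \partial_x^i f$ for every $i < a$, and (b) $g \nmid \partial_x^a f$.

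Claim (a) follows by a one-line induction on $i$ from the product rule: whenever $g^c \mid p$ we have $g^{c-1} \mid \partial_x p$, so differentiating $i$ times drops the $g$-power by at most $i$. Hence $g^{a-i} \mid \partial_x^i f$, and since $a - i \geq 1$ for $i < a$, this gives (a) immediately.

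The heart of the argument is (b). Expanding via the generalised Leibniz rule,
\[
\partial_x^a f \;=\; \partial_x^a(g^a h) \;=\; \sum_{k = 0}^{a} \binom{a}{k}\, \partial_x^k(g^a)\, \partial_x^{a - k}(h),
\]
and applying (a) to $g^a$ in place of $f$ shows $g \mid \partial_x^k(g^a)$ whenever $k < a$. Modulo $g$ only the $k = a$ term survives, i.e.\ $\partial_x^a f \equiv \partial_x^a(g^a) \cdot h \pmod{g}$. A short induction on $a$ then establishes the identity $\partial_x^a(g^a) \equiv a!\,(\partial_x g)^a \pmod{g}$ (base $a = 1$ is trivial; the inductive step writes $g^a = g \cdot g^{a-1}$ and applies Leibniz once more, observing that every term except the ``fully differentiated $g^{a-1}$ paired with the single $\partial_x g$'' carries an extra factor of $g$). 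Combining,
\[
\partial_x^a f \;\equiv\; a!\, (\partial_x g)^a\, h \pmod{g}.
\]

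It then remains to verify that this residue is non-zero modulo $g$. Since $\mathrm{char}(\Q) = 0$, $a! \neq 0$, and $g \nmid h$ by construction, so the task reduces to controlling $g \nmid (\partial_x g)^a\, h$. This is where the square-free hypothesis on $g$ together with $\partial_x g \neq 0$ enters: writing $g$ as a product of distinct irreducibles $g = g_1 \cdots g_m$ (which are therefore pairwise coprime primes in the UFD $\Q[\vecx]$), one analyses divisibility prime-by-prime, using that for each irreducible factor $g_i$, the degree drop $\deg(\partial_x g_i) < \deg(g_i)$ together with primality of $g_i$ rules out $g_i \mid \partial_x g$. I expect this final algebraic bookkeeping to be the main subtle point -- coupling the residue computation modulo each $g_i$ with the fact that no $g_i$ can simultaneously absorb $\partial_x g$ and $h$ -- whereas everything upstream is a mechanical application of Leibniz.
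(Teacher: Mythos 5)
Your route is genuinely different from the paper's: the paper differentiates once, factors $\partial_x f = g^{a-1}\,(a\,\partial_x(g)\,h + g\,\partial_x(h))$, and argues via a gcd computation that the multiplicity drops by exactly one per derivative; you instead jump straight to the $a$-th derivative with the generalised Leibniz rule and compute the residue $\partial_x^a f \equiv a!\,(\partial_x g)^a\, h \pmod{g}$. Your claim (a) and the residue computation are correct, and both approaches funnel into the same crux, which you rightly flag: showing $g \nmid (\partial_x g)^a\, h$.

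Your sketch of that crux does not go through. For an irreducible factor $g_i$ of $g = g_1\cdots g_m$, reducing $\partial_x g = \sum_j \partial_x(g_j)\prod_{k\neq j} g_k$ modulo $g_i$ shows that $g_i \mid \partial_x g$ iff $g_i \mid \partial_x(g_i)$ iff $\partial_x(g_i) = 0$; the hypothesis $\partial_x g \neq 0$ only guarantees $\partial_x(g_i)\neq 0$ for \emph{some} $i$, not all, so ``primality rules out $g_i \mid \partial_x g$'' is unjustified. Moreover you need a \emph{single} $g_i$ dividing neither $\partial_x g$ nor $h$, whereas $g \nmid h$ only gives that some $g_i$ fails to divide $h$. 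No bookkeeping can close this, because the statement is false for reducible square-free $g$: take $f = x_1^5 x_2^3$, $g = x_1x_2$, $x = x_1$; then $g$ is square-free, $\partial_{x_1}g = x_2 \neq 0$, the factor-multiplicity is $a = 3$, yet $\partial_{x_1}^3 f = 60\,x_1^2x_2^3$ is still divisible by $g$. (The paper's own proof hides the same gap in the claim that square-freeness plus $\partial_x g \neq 0$ yields $\gcd(g,\partial_x g)=1$.) When $g$ is irreducible --- the only case the paper actually uses, since Algorithm 2 discards reducible candidates --- your argument closes immediately: $g$ is prime, $g \nmid h$, and $g \nmid \partial_x g$ by the degree drop, hence $g \nmid a!\,(\partial_x g)^a\, h$.
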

\begin{proof}
  If the factor-multiplicity of $g$ in $f$ is zero, i.e. $g \nmid f$, then claim is clearly true. Thus let us assume that the factor-multiplicity of $g$ in $f$ is $a \geq 1$. It suffices to show that the factor-multiplicity of $g$ in $\partial_x(f)$ is exactly $a-1$. 

  Suppose $f = g^a \cdot h$ where $\gcd(g,h) = 1$. Then, 
  \[
    \partial_x f = \partial_x(g^a) \cdot h + g^a \cdot \partial_x(h) = g^{a-1} \cdot (a \cdot \partial_x(g) \cdot h + g \cdot \partial_x(h)).
  \]
  Hence, we have that the factor-multiplicity of $g$ in $\partial_x(f)$ is at least $(a-1)$. 

  On the other hand, we have that $\partial_x(g) \neq 0$ and $g$ is square-free and hence $\gcd(g, \partial_x(g)) = 1$. Therefore
  \[
    \gcd(g, a\cdot g \cdot \partial_x(h) + h \cdot \partial_x(g)) = \gcd(g, h \cdot \partial_x(g)) = \gcd(g, h) = 1
  \]
  and hence $g^a \nmid \partial_x (f)$ and therefore the factor-multiplicity of $f$ 
\end{proof}

\noindent We are now ready to describe the algorithm.

{\small 
\begin{algorithm}[H]
  \caption{Computing list of all degree $d$ irreducible factors and their multiplicities}
  \label{algo: low-deg factors}
  \SetKwInOut{Input}{Input}\SetKwInOut{Output}{Output}

  \Input{A $(\SP)^{(k)}$-formula of size $s$, bit-complexity $t$, degree $D$ computing a polynomial $f(\vecx)$.}
  \Output{A list of all irreducible factors $f$ of degree at most $d$ and their multiplicities.}
  \BlankLine

  Set the output list $L = \emptyset$.
  
  Set the intermediate candidates list $L' = \emptyset$.

  Compute hitting-set $H_1 = \mathcal{H}(d , n)$ (as defined in \cref{defn:low-hamming-weight-hitting-set}).

  \For{$\vecalpha \in H_1$}{
    \label{alg:all-mult:for-alpha}
    Define $F(\vecx, y) = f(\vecx + \vecalpha \cdot y) = f(x_1 + \alpha_1 y, \ldots, x_n + \alpha_n y)$

    \For{$i = 0,1,\ldots, \deg(F)$}{
      \label{alg:all-mult:for-i computing intermediate candidates list}

      Define $\tilde{F}(\vecx,y) = \frac{\partial^i F}{\partial y^i}$.

      Compute the list $\tilde{L}$ of all candidate degree $d$ multiplicity-one factors of $\tilde{F}(\vecx,y)$ using \cref{algo: low-deg factors of multiplicity one}. 
      \label{alg:all-mult:run-alg1}

      \ForEach{$\tilde{g}(\vecx,y) \in \tilde{L}$}{
        \label{alg:all-mult:foreach-Ltilde}

        Add $g(\vecx) := \tilde{g}(\vecx,0)$ to $L'$.
        \label{alg:all-mult:unshift-and-add-to-candidates}

        \label{alg:all-mult:foreach-Ltilde-end}
        }
    }
  }

  \For{$g \in L'$}{
    \label{alg:all-mult:for-factor-prune}

    \textbf{if} $g$ is not irreducible \textbf{then} skip to the next iteration. \label{alg:all-mult:irred-test}

    Let $x$ be a variable that $g$ depends on, so that $\partial_x(g) \neq 0$. 

    Find the smallest non-negative integer $e$ such that $g \nmid \frac{\partial^e f}{\partial x^e}$. 
    
    \textbf{if} $e > 1$ \textbf{then} add $(g,e)$ to the list $L$.
    \label{alg:all-mult:for-factor-prune-end}
  }
  \Return{$L$}
\end{algorithm}
}

\begin{lemma}[Correctness of \cref{algo: low-deg factors}]
  \label{lem:alg2_correctness}
For every input polynomial $f$ computed by a $(\SP)^{(k)}$ formula of size $s$, degree $D$, bit-complexity $t$ and $d \in \N$, the list $L$ output by \cref{algo: low-deg factors} is precisely the list of all irreducible factors of $f$  of degree at most $d$ (up to scalar multiplication) along with their multiplicities in $f$.   
\end{lemma}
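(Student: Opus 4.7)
The plan is to prove both directions of the statement: \emph{completeness} (every degree-$\leq d$ irreducible factor of $f$, along with its true multiplicity, appears in $L$) and \emph{soundness} (every pair in $L$ is a genuine (irreducible factor, multiplicity) pair for $f$).

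For completeness, I would fix an irreducible factor $g$ of $f$ with $\deg(g) \leq d$ and multiplicity $m \geq 1$ in $f$, and pinpoint a specific iteration of the outer loops that causes $g$ to land in $L'$. Since $\deg(g) \geq 1$, some partial derivative $\partial_{x_j} g$ is non-zero, and by \cref{lem: non-zeros of low degree polynomials} we can pick $\vecalpha \in H_1$ of Hamming weight one with only the $j$-th coordinate non-zero, so that $\partial_y \tilde{g}(\vecx, y) = \alpha_j (\partial_{x_j} g)(\vecx + \vecalpha y) \neq 0$, where $\tilde{g}(\vecx, y) := g(\vecx + \vecalpha y)$ and $F(\vecx, y) := f(\vecx + \vecalpha y)$. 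Since $g$ is irreducible and translation is a ring automorphism, $\tilde{g}$ is irreducible (hence square-free) and has multiplicity exactly $m$ in $F$. Iterated application of \cref{lem:factor-multiplicity-reduction} in the variable $y$ (whose hypothesis $\partial_y \tilde{g} \neq 0$ was just secured) shows that $\tilde{g}$ has multiplicity exactly one in $\partial_y^{m-1} F$. The correctness of \cref{algo: low-deg factors of multiplicity one} (\cref{lem: correctness of algorithm 1}) then ensures that for this choice of $\vecalpha$ and $i = m-1$, the list $\tilde{L}$ contains $\tilde{g}$ up to scalar, so $g(\vecx) = \tilde{g}(\vecx, 0)$ is added to $L'$.

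For soundness I would analyse the final filter loop \cref{alg:all-mult:for-factor-prune}. Each candidate $g \in L'$ is first tested for irreducibility via \cref{thm:det low deg factorization}, which is valid since $\deg(g) \leq d$ is a constant. For each surviving irreducible $g$, a variable $x$ with $\partial_x g \neq 0$ is chosen (such a variable exists because $\deg(g) \geq 1$) and the smallest non-negative $e$ with $g \nmid \partial_x^e f$ is computed. Because $g$ is irreducible (hence square-free) with $\partial_x g \neq 0$, \cref{lem:factor-multiplicity-reduction} applies directly and shows that this $e$ equals the factor-multiplicity of $g$ in $f$. Hence the pair $(g, e)$ is entered into $L$ exactly when $g$ is an irreducible degree-$\leq d$ factor of $f$ with correct multiplicity $e \geq 1$, and spurious entries of $L'$ (either reducible polynomials or non-factors, for which $e = 0$) are discarded.

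The remaining ingredient, and the main implementation obstacle, is realising the divisibility tests $g \mid \partial_x^e f$ deterministically and efficiently. By \cref{lem: partial derivatives with respect to one variable}, $\partial_x^e f$ remains computable by a $(\SP)^{(k)}$ formula of polynomial size; after a preliminary translation (found from $H_1$) that ensures $g(\veczero) \neq 0$, \cref{cor: divisibility to pit constant depth} reduces the test to PIT for $\uglymodel{k}{d}$ formulas of comparable size, dispatched by \cref{thm: PIT from LST} in the constant-depth case and by \cref{thm: forbes quasipoly pit for resultant} in the sparse case. The key conceptual point — which \cref{lem:factor-multiplicity-reduction} isolates cleanly — is that higher-multiplicity factors of $f$ become multiplicity-one factors of an explicit partial derivative of $f$ that is still in the same constant-depth circuit class, so \cref{algo: low-deg factors of multiplicity one} can be reused as a black-box subroutine without any blow-up of the relevant parameters.
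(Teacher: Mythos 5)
Your proof is correct and follows essentially the same route as the paper: completeness by choosing a good $\vecalpha \in H_1$, applying \cref{lem:factor-multiplicity-reduction} iteratively in $y$ to reduce to the multiplicity-one case handled by \cref{lem: correctness of algorithm 1}, and soundness via the final pruning loop. The only (harmless) deviation is your choice of $\vecalpha$ — a Hamming-weight-one vector aligned with a variable on which $g$ depends, so that $\partial_y \tilde{g} \neq 0$ directly — whereas the paper picks $\vecalpha$ with $\homog_r(g)(\vecalpha) \neq 0$, making $g(\vecx + y\vecalpha)$ monic in $y$ and hence also securing $\partial_y \neq 0$; both satisfy the hypothesis of the reduction lemma.
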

\begin{proof}
  From \crefrange{alg:all-mult:for-factor-prune}{alg:all-mult:for-factor-prune-end} and \cref{lem:factor-multiplicity-reduction}, it is clear that any $(g,e)$ in the output list ensures that $g$ is an irreducible polynomial, $g^e \mid f$ and $g^{e+1} \nmid f$. Thus, it suffices to show that for every irreducible polynomial $g$ such that $\deg(g) \leq d$ and $g \mid f$, some non-zero scalar multiple of $g$ is under consideration in the list $L'$. Fix any such irreducible factor $g$ of degree at most $r \leq d$ and let its factor-multiplicity be $e$

  By \cref{lem: non-zeros of low degree polynomials}, there is some $\vecalpha \in H_1$ such that $\homog_r(g)(\vecalpha) \neq 0$, where $r$ is the total degree of $g$. Thus, for this choice of $\vecalpha$, we have that $g'(\vecx,y) = g(\vecx + y \vecalpha)$ is a factor of $F(\vecx,y) = f(\vecx + y \vecalpha)$ and $g'$ is monic in $y$ and has factor-multiplicity $e$. By \cref{lem:factor-multiplicity-reduction}, we have that $g'$ has factor-multiplicity one in $\tilde{F}(\vecx,y) := \frac{\partial^{e-1} F}{\partial y^{e-1}}$. Thus, by the correctness of \cref{algo: low-deg factors of multiplicity one} (\cref{lem: correctness of algorithm 1}), a non-zero multiple of the polynomial $g'(\vecx,y)$ must be included in the list $\tilde{L}$ in \cref{alg:all-mult:run-alg1}. Therefore, a non-zero multiple of $g(\vecx) = g'(\vecx,0)$ will be added to $L'$ in \cref{alg:all-mult:unshift-and-add-to-candidates}. 
\end{proof}

\begin{lemma}[Running time of \cref{algo: low-deg factors}]
  \label{lem:alg2_runningtime}
Let $\epsilon > 0, k, d \in \N$ be  arbitrary constants. Let $f \in \Q[\vecx]$ be a polynomial computable by a $(\SP)^{(k)}$ formula $C$ of size $s$, degree at most $D$ and bit-complexity $t$. Then, on input $C$ and $d \in \N$, \cref{algo: low-deg factors of multiplicity one} terminates in time at most $(sDt)^{O( k d(sDt)^{\epsilon d})}$.

Moreover, if $k = 1$, i.e. $f$ has sparsity at most $s$, then  \cref{algo: low-deg factors of multiplicity one} terminates in time at most  $(snDt)^{O(\poly(d) \cdot \log snDt)}$.
\end{lemma}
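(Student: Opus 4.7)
The plan is to bound the cost of each stage of \cref{algo: low-deg factors} and then aggregate. The two nontrivial loops are the one that fills the candidate list $L'$ (\cref{alg:all-mult:for-alpha} through \cref{alg:all-mult:foreach-Ltilde-end}) and the one that prunes $L'$ to produce the final list $L$ (\cref{alg:all-mult:for-factor-prune} through \cref{alg:all-mult:for-factor-prune-end}).

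First, I would track the complexity of the polynomials handed to \cref{algo: low-deg factors of multiplicity one}. For each $\vecalpha \in H_1$, the polynomial $F(\vecx, y) = f(\vecx + \vecalpha y)$ is computable by a $(\SP)^{(k)}$ formula of size $s \cdot D^{O(d)}$ by \autoref{obs: hamming weight shift size growth} (with $\vecbeta = \veczero$). For each $i \in \{0, 1, \ldots, \deg(F)\}$, \cref{lem: partial derivatives with respect to one variable} then shows that $\tilde{F} = \partial^i_y F$ is computed by a $(\SP)^{(k)}$ formula of size $s' := s \cdot D^{O(d)}$ and bit-complexity at most $\poly(s, D, t)$ (note that scaling by the interpolation field constants from \cref{lem:interpolation-univariate} inflates bit-complexity only by a $\poly(D)$ factor). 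Plugging the parameters $(s', D, t')$ with $t' = \poly(s,D,t)$ into \cref{lem:running time of algorithm 1}, a single invocation of \cref{algo: low-deg factors of multiplicity one} on $\tilde{F}$ costs at most $(sD)^{O_\epsilon(kd(sD)^{\epsilon d})} \cdot t^{O(d\log d)}$ in the general case, and at most $(sDt)^{\poly(d) \cdot \log sDt}$ when $k = 1$ (since $s'$ is still of the form $\poly(sD)^d$).

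Next, I would bound $|L'|$ and the cost of populating it. The outer loop has $|H_1| = n^{O(d)}$ iterations, the $i$-loop contributes an extra factor of $D + 1$, and each call to \cref{algo: low-deg factors of multiplicity one} returns a list of size at most $|H_1|^2 \cdot |H_2| \cdot \binom{D}{\le d}$, which is bounded by a similar expression. Multiplying through, the total time to build $L'$, and its size, are both absorbed into $(sDt)^{O(kd(sDt)^{\epsilon d})}$ in the general case and $(snDt)^{O(\poly(d) \cdot \log snDt)}$ when $k = 1$.

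Finally, I would analyse the pruning loop. For each candidate $g \in L'$, which is a polynomial of degree at most $d$, irreducibility testing (\cref{alg:all-mult:irred-test}) runs in time $n^{O(d^2)}$ by \cref{thm:det low deg factorization}. To determine the multiplicity $e$, I iterate $e = 0, 1, \ldots$ and test whether $g \mid \partial^e_x f$; by \cref{lem:factor-multiplicity-reduction} this terminates with $e \le D$. Each such divisibility test reduces, via \cref{cor: divisibility to pit constant depth}, to PIT on a $\Sigma\bigl((\SP)^{(k)} \cdot (\LowDeg_d)^\ast\bigr)$-expression of size $\poly(s, D, d)$, which is solved in time $(sDt)^{O_\epsilon(k d(sDt)^{\epsilon d})}$ via \cref{thm: PIT from LST} for $k \ge 2$ and in time $(sDt)^{\poly(d)\log sDt}$ via \cref{thm: forbes quasipoly pit for resultant} when $k = 1$ (we use here that $\partial^e_x f$ remains a $(\SP)^{(k)}$ formula of size $\poly(s, D)$ by \cref{lem: partial derivatives with respect to one variable}, so the pseudo-quotient is in $\Sigma((\SP)^{(k)} \cdot (\LowDeg_d)^\ast)$ of the stated size by \cref{lem:complexity-pseudoquotient}). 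Multiplying the per-candidate cost by $|L'|$ and the $O(D)$ values of $e$ gives a bound of the same form as before.

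The main obstacle is simply the bookkeeping: the $D^d$ factor from the choices of $T$ inside \cref{algo: low-deg factors of multiplicity one}, the $n^{O(d)}$ factor from $H_1$, the $D$ factor from the partial-derivative order $i$ and from the multiplicity search, and the $\poly(D)$ bit-complexity inflation from interpolation must all be collected and seen to be swallowed by the dominant term. Since each of these is polynomial in $sDt$ (and in $n$ when $k = 1$), the overall running time is at most $(sDt)^{O(k d(sDt)^{\epsilon d})}$ in general, and $(snDt)^{O(\poly(d) \cdot \log snDt)}$ when $k = 1$, as claimed.
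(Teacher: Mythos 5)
Your proposal is correct and follows essentially the same route as the paper's own proof: bound the size and bit-complexity of the shifted/differentiated inputs handed to \cref{algo: low-deg factors of multiplicity one}, multiply the per-call cost $T_1$ by the $n^{O(d)} \cdot O(D)$ loop iterations, and then bound the pruning phase by the irreducibility test of \cref{thm:det low deg factorization} plus at most $D$ divisibility tests, each reduced to PIT for $\Sigma\bigl((\SP)^{(k)}\cdot(\LowDeg_d)^\ast\bigr)$ and discharged via \cref{thm: PIT from LST} (resp.\ Forbes' result when $k=1$). The only differences are cosmetic — you bound $|L'|$ by the explicit count $|H_1|^2\cdot|H_2|\cdot\binom{D}{\le d}$ rather than by $D\cdot T_1$, and you cite \cref{thm: forbes quasipoly pit for resultant} where the paper cites \cref{thm: det algo for low degree dividing sparse} — neither of which affects the final bounds.
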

\begin{proof}
From \cref{defn:low-hamming-weight-hitting-set}, we have the size of the set $H_1$ is $n^{O(d)}$. The time complexity of computing a formula for $F$ from the given formula for $f$ is at most $O(sD)$. From \cref{lem: partial derivatives with respect to one variable}, we have that $(\SP)^{k+1}$ formulas for all the $y$ derivatives of $F$ can be computed in time at most $\poly(s, D, t)$, which is also a bound on the bit-complexity and the size of these formulas. \cref{algo: low-deg factors of multiplicity one} is invoked at most $D$ times. 

The total time taken to construct the list $L'$ is at most $D\cdot T_1$, where $T_1$ is the time taken by \cref{algo: low-deg factors of multiplicity one} on inputs with formula size and bit-complexity $\poly(s, D, t)$, and degree parameter $d$. $D \cdot T_1$ is also an upper bound on the size of the list of candidate factors $L'$.

Now, for each $g \in L'$, from \cref{thm:det low deg factorization}, we have that the irreducibility test in \cref{alg:all-mult:irred-test} takes at most $(sDt)^{O(d^2)}$ time. There are at most $D$ instances of divisibility test performed to determine the exact multiplicity in $f$ of each $g \in L'$. This requires computing the corresponding derivatives, which as discussed in the previous paragraph, takes time $\poly(s, D, t)$ and outputs a formula of size and bit-complexity $\poly(s, D, t)$ for the derivatives, and then doing a divisibility test, the time complexity of which we denote by $T_2$. 

Therefore, the total time taken by the algorithm is at most $(n^{O(d)} \cdot \poly(s, D, t) \cdot D \cdot T_1) +  (D\cdot T_1 \cdot (sDt)^{O(d^2)}  \cdot \poly(s, D, t) \cdot T_2)$. 

Now, if $f$ is $s$ sparse, i.e. $k =1$, then from \autoref{defn:low-hamming-weight-hitting-set}, we have that every vector in $H_1$ has at most $d$ non-zero coordinates. Thus, from \autoref{obs: hamming weight shift size growth}, for every $\vecalpha \in H_1$, $F(\vecx, y) = f(\vecx + \vecalpha \cdot y)$ has sparsity and bit-complexity at most $s' \leq s\cdot D^d$. Note that the derivatives of arbitrary order of $F$ with respect to any variable also have the same bound on their sparsity and bit-complexity of coefficients. Thus, in this case, from \autoref{lem:running time of algorithm 1}, $T_1 \leq (sDt)^{\poly(d)\log sDt}$. From \autoref{thm: det algo for low degree dividing sparse}, we have that $T_2 \leq (snD)^{O(d\log^2 snD)}$. Therefore, the overall running time of the algorithm is at most $(snDt)^{O(\poly(d) \cdot \log snDt)}$.

On the other hand, if $k > 1$, then from \autoref{lem:running time of algorithm 1}, $T_1 \leq (sD)^{O_{\epsilon}(kd(sD)^{\epsilon d})}\cdot t^{O(d \log d)}$. To bound $T_2$ in this case, we note from \autoref{cor: divisibility to pit constant depth}, this divisibility testing instances reduce to PIT instances for $(\SP)^{(k+1)}$ formula of size and bit-complexity at most $\poly(s, D, t)$ and from \autoref{thm: PIT from LST}, this can be done in at most $(sDt)^{O(k(sDt)^{\epsilon})}$ time for the arbitrary constant $\epsilon$ chosen in the beginning. Thus, the total time taken is at most $(sDt)^{O_{\epsilon}(kd(sDt)^{\epsilon d})}$.
\end{proof}

\noindent
\cref{lem:alg2_correctness} and \cref{lem:alg2_runningtime} together imply our main theorems \cref{thm: main low depth ckts} and \cref{thm: main sparse}.

\section{Open problems}\label{section: open problems}

We conclude with some open problems. 

\begin{itemize}
\item Perhaps the most natural open problem here is to obtain efficient deterministic algorithms that completely factor sparse polynomials or more generally, polynomials with constant depth formulas (and not just obtain low degree factors). In the absence of better structural guarantees for the factors (for instance, if they are sparse or have small constant depth formulas), we can seek algorithms that output general algebraic circuits for these factors.  
\item Obtaining improved structural guarantees on the factors of polynomials that are sparse or have small constant depth formulas as mentioned in the first open problem is another very interesting open problem. 
\item A first step towards obtaining deterministic algorithms for general factorization of polynomials with small constant depth formulas could be to  design deterministic algorithms for computing \emph{simple} factors of such polynomials. While the notion of simplicity discussed in this paper is that of low degree factors, there are other natural notions that seem very interesting. For instance, can we design an efficient deterministic algorithm that outputs all the sparse irreducible factors of a constant depth formula ?
\item As alluded to in the introduction, polynomial factorization algorithms have found numerous applications in computer science. It would be interesting to understand if there are applications of  deterministic factorization algorithms in general, and in particular the algorithms for computing low degree factors described in this paper. 

\end{itemize}

\ifblind
\else
\section*{Acknowledgements}
A part of this work was done while the first two authors were at the Workshop on Algebraic Complexity organised at the University of Warwick in March 2023 by Christian Ikenmeyer. We thank Christian for the invitation and the delightful and stimulating atmosphere at the workshop.  
\fi

\bibliographystyle{customurlbst/alphaurlpp}
{\let\thefootnote\relax
\footnotetext{\textcolor{\gitinfonotecolour}{\gitinfonote \easteregg}
}}
\bibliography{crossref,references}

\appendix

\section{Deferred proofs}
\label{sec:appendix:deferred_proofs}

\subsubsection*{Circuit/formula bit-complexity}

\begin{proof}[Proof of \cref{lem:evaluation-bit-complexity}]
  We will prove an equivalent statement: the numerator and denominator of $f(\veca)$ have absolute value at most $2^{s\cdot b}$. We prove this by induction on the size of the formula. We will use $N(\cdot)$ and $D(\cdot)$ to denote the numerator and denominator of some rational number. 
  
  Base case: when $\size(C) = 1 = s$, there is a single leaf node in the formula that reads and outputs a single rational number of bit-complexity $b$, thus $\bit(f(\veca)) = b \leq s \cdot b $.
  The induction hypothesis is that for all formulas $C$ with $\size(C)\leq S$ (for some $S\geq 1$), $\bit(f(\veca)) \leq \bit(C)\cdot b$. For the induction step, we look at formulas $C$ with $\size(C) = S+1$, and we consider two cases:
  \begin{enumerate}
    \item When the top gate is a sum gate: $f = \sum_{i=1}^k{\alpha_i g_i(\vecx)}$, with $C_i$ being the formula computing $g_i$ and $\alpha_i$s being scalars from $\Q$.
    \begin{flalign*}
      f(\veca) &= \sum_{i=1}^k{\alpha_i g_i(\veca)}  && \\
      \abs{D(f(\veca))} &=\abs{\prod_{i=1}^k{D(\alpha_i)D(g_i(\veca))}} && \\
      &\leq \prod_{i=1}^k{2^{\bit(\alpha_i)}2^{\bit(g_{i}(\veca))}} && \\
      &\leq \prod_{i=1}^k{2^{\bit(\alpha_i)}2^{\bit(C_{i})\cdot b}} && \text{(induction hypothesis)} \\
      &= 2^{\sum_{i=1}^k{\inparen{\bit(\alpha_i)+\bit(C_{i})\cdot b}}} \leq 2^{\bit(C)\cdot b} &&
   \end{flalign*} 
    \begin{flalign*}
      \abs{N(f(\veca))}&\leq \sum_{i=1}^k{\abs{N(\alpha_i)}\abs{N(g_i(\veca))}\prod_{j \neq i}{\abs{D(\alpha_j)}\abs{D(g_j(\veca))}}} && \\
      &\leq \sum_{i=1}^k{2^{\bit(\alpha_i) + \bit(g_i(\veca))}2^{\sum_{j \neq i}{\bit(\alpha_j)+\bit(g_j(\veca))}}} && \\
      &\leq \sum_{i=1}^k{2^{\bit(\alpha_i) + \bit(C_i) \cdot b}2^{\sum_{j \neq i}{\bit(\alpha_j)+\bit(C_j)\cdot b}}} && \text{(induction hypothesis)}\\
      &\leq \sum_{i=1}^k{2^{\sum_{j=1}^k{\bit(\alpha_j)+\bit(C_j) \cdot b}}} \leq 2^{k+ \sum_{j=1}^k{\bit(\alpha_j)+\bit(C_j) \cdot b}} \leq 2^{\bit(C)\cdot b} && 
    \end{flalign*}
    Thus, $\bit(f(\veca)) = \max\{\bit(N(f(\veca)),D(f(\veca)))\} \leq \bit(C)\cdot b$.
  
    \item When the top gate is a product gate: $f = \prod_{i=1}^k{\alpha_i g_i(\vecx)}$. The proof for the denominator in the case of sum gate will work here for both the numerator and the denominator. The required bound follows.
  \end{enumerate}
  
\end{proof}

\subsubsection*{Relevant subclasses of algebraic circuits}

\begin{proof}[Proof of \autoref{obs:ugly-model-sums-products}]
  We prove the size upper bounds here; the bit-complexity upper bounds proceed along exactly the same lines.
  The size upper bound for the sum is immediate and hence we only need to focus on the product. Let the expression for each $P_r$ be
  \begin{align*}
    P_r & = \sum_i P_{r,i}^{(r)} \cdot g_{r,i}^{a_{r,i}}\\
    \implies \prod P_r & = \sum_{r_1,\ldots, r_t} \inparen{P_{1,r_1} \cdots P_{t,r_t}} \cdot \inparen{g_{1,r_1}^{a_{1,r_1}} \cdots g_{t,r_t}^{a_{t,r_t}}}
  \end{align*}
  where each $P_{i,j}$ is computed by $(\SP)^{(k)}$ formulas of size at most $s$, and each $g_{i,j}$ is a polynomial of degree at most $d$. 

  Each $\inparen{P_{1,r_1} \cdots P_{t,r_t}}$ is computed by a $(\SP)^{(k)}$ formula of size at most $s^t$. By \cref{lem:fischers-trick}, $g_{1,r_1}^{a_{1,r_1}} \cdots g_{t,r_t}^{a_{t,r_t}}$ can be expressed as a sum $\sum_{\ell = 1}^{s^t} f_\ell^{D}$ where $D = \sum_j a_{j,r_j}$ and each $f_\ell$ is a degree polynomial of degree at most $d$. Thus, $\prod_r P_r$ is computable by a $\uglymodel{k}{d}$ formula of size at most $s^{O(t)}$. 
\end{proof}

\subsubsection*{Polynomial identity testing}

\begin{proof}[Proof of \cref{lem: non-zeros of low degree polynomials}]
  Since $f$ is a non-zero polynomial of degree at most $d$, there is a monomial $\vecx^{\vece}$ of degree at most $d$ with a non-zero coefficient in $f$. Let $S$ be the support of the monomial $\vecx^{\vece}$, i.e., $S = \{x_i : e_i \neq 0\}$. Clearly, $|S| \leq d$. We now consider the polynomial $\tilde{f}$ obtained from $f$ by setting all the variables $x_j$ not in the set $S$ to zero. Since $f$ has a non-zero monomial with support contained in the set $S$, $\tilde{f}$ continues to be a non-zero polynomial of degree at most $d$. Moreover, it is a $d$ variate polynomial since it only depends on the variables in $S$. From \cref{lem: SZ lemma}, we get that for any subset $T_d$ of $\Q$ of cardinality at least $d+1$, there exists a vector $\vecb \in {T_d}^{d}$ such that $\tilde{f}(\vecb) \neq 0$. Let $\veca \in \Q^n$ to be such that for every $i \in S$, $a_i = b_i$ and for every $i \notin S$,  $a_i = 0$. Then, $f(\veca) = \tilde{f}(\vecb) \neq 0$. Moreover, $\veca $ is in $\mathcal{H}(d, n)$.
\end{proof}

\begin{proof}[Proof of \cref{lem: multivariate interpolation}]
  Let $T_d$ be the set $\{0, 1, 2, 3, \ldots, d\}$ and let $\mathcal{H}(d, n)$ be the set of points defined in \cref{defn:low-hamming-weight-hitting-set}, i.e.,
   \[
      \mathcal{H}(d, n) = \setdef{(a_1,\ldots, a_n)}{S \in \binom{[n]}{\leq d} \;,\; a_i \in T_d \text{ for all } i\in S \text{ and } a_j = 0 \text{ for all } j\notin S}.
    \]
  From \cref{lem: non-zeros of low degree polynomials}, we know that every non-zero polynomial $f$ of degree at most $d$ must evaluate to zero on some point of $\mathcal{H}(d, n)$. In other words, two distinct degree $d$ polynomials $f$ and $g$ cannot agree on every point of $\mathcal{H}(d, n)$. An immediate consequence of this is that if we are given the evaluations of an unknown polynomial $f$ on all points of $\mathcal{H}(d, n)$, and we view each of these evaluations as a linear constraint on the unknown coefficients of $f$, then this linear system has a unique solution. 
  
  Based on this observation, a natural algorithm for computing the coefficient vector of $C$ is the following, we evaluate the given formula on every input in $\mathcal{H}(d, n)$, set up the linear system on the coefficients of $C$ obtained from these evaluations, and use any standard linear system solver over $\Q$ to solve this system. 
  
  Note that the size of this linear system is at most $n^{O(d)}$, and from \autoref{lem:evaluation-bit-complexity}, of the constants in this linear system is at most $\poly(s, b, d)$. Thus, this linear system can be solved in time $\poly(s, b, d, n^d) \leq \poly(s, b, n^d)$ time as claimed.
\end{proof}
  
\subsubsection*{Deterministic divisibility testing and PIT}

\begin{proof}[Proof of \cref{cor: divisibility to pit constant depth}]
  The proof essentially follows immediately from \cref{thm: divisibility testing to pit}. From \cref{thm: divisibility testing to pit}, we have that $g$ divides $f$ if and only if $R(\vecx) := f(\vecx) - g(\vecx) Q(\vecx) \equiv 0$, where $Q$ is the pseudo-quotient of $f$ and $g$. It suffices to show that $R(\vecx)$ has $\mathcal{C} = \Sigma \inparen{(\SP)^{(k)} \cdot (\LowDeg_d)^\ast}$ formulas of size $\poly(s,d)$, and since $f(\vecx) \in (\SP)^{(k)}$, it suffices to bound the size of $\mathcal{C}$-formulas computing $g(\vecx) \cdot Q(\vecx)$. 
  
  By \cref{lem:complexity-pseudoquotient}, the pseudo-quotient $Q(\vecx)$ is computable by $\mathcal{C}$-formulas of size $\poly(s,d)$. Let one such computation be of the form
    \begin{align*}
    Q(\vecx) & = \sum_{i} f_i \cdot g_i^{e_i}\quad\text{where each $f_i \in (\SP)^{(k)}$ and $\deg(g_i) \leq d$ and each $e_i \leq s$}\\
    \implies g(\vecx) Q(\vecx) & = \sum_i f_i \cdot (g \cdot g_i^{e_i})
    \end{align*}

  From \cref{lem:fischers-trick}, note that any term of the form $(g \cdot h^{e})$ can be expressed as
  \[
  g \cdot h^e = \sum_{i=1}^{\poly(e)} \beta_i \cdot (g + \alpha_i h)^{e+1}
  \]
  for field constants $\alpha_i$'s and $\beta_i$'s. Thus, feeding this in the above expression for $g \cdot Q$, we have 
  \[
  g(\vecx) \cdot Q(\vecx)  = \sum_i\sum_j f_i \cdot\tilde{g}_{ij}^{e_{ij}}
  \]
  for polynomial $\tilde{g}_{ij}$ of degree at most $d$, and thus is also a $\mathcal{C}$-formula of size at most $\poly(s,d)$. Therefore, $R(\vecx) = f(\vecx) - g(\vecx) Q(\vecx)$ is also computable by $\mathcal{C}$-formulas of size $s' = \poly(s,d)$. Thus, we can check if $g$ divides $f$ by checking if $R(\vecx)\equiv 0$ (by \cref{thm: divisibility testing to pit}) which can be done in $T(k,d,s')$ time as claimed. 
\end{proof}

\subsubsection*{Hensel lifting}

\begin{proof}[Proof sketch of \cref{lem: hensel lift time}]
  
  As indicated earlier, the lemma is almost an immediate consequence of Lemma 3.6 in \cite{KSS15}. The precise statement there gives a circuit $\tilde{C_k}$ of size and bit-complexity $\poly(s, D, 2^k)$ for $g_k, h_k$. We notice that without loss of generality, the degree of $g_k, h_k$ and hence of  $\tilde{C_k}$ can be assumed to be at most $(D + 2^k)$ since the $y$ degree is at most $D$ and the $x$ degree is at most $2^k$. This incurs at most a polynomial blow up in the circuit size. 
  
  Now, to go from circuits for $g_k, h_k$ to formulas computing these polynomials, we just invoke the classic depth reduction result of Valiant, Skyum, Berkowitz and Rackoff \cite{VSBR83}, which states that given an $n$-variate degree-$\Delta$ polynomial $f$ with an arithmetic circuit $\Phi$ of size $s$, there is an arithmetic circuit $\Phi'$ that computes $f$, has size $\poly(s,n,\Delta)$ and depth $O(\log \Delta)$.

  Thus we have a formula of size (and bit-complexity) at most $\poly(s, D, 2^k)^{\log (D + 2^k)} \leq (sDk)^{k \log D}$.
  Note that a better bound of $d$ on the total degree of $g_k$ implies that the size and bit-complexity of the formula for $g_k$ is at most $(sDk)^{O(\log d)}$. 
  \end{proof}

\end{document}

\typeout{get arXiv to do 4 passes: Label(s) may have changed. Rerun}